\documentclass[onecolumn]{IEEEtran}
\usepackage{amsthm, amsmath, amssymb, amsfonts, url, booktabs, tikz, setspace, fancyhdr, bm}
\usetikzlibrary{decorations.pathreplacing}
\usepackage{hyperref}
\usepackage{amsthm}
\usepackage{cancel}
\usepackage{hyperref, enumerate}
\usepackage[shortlabels]{enumitem}
\usepackage[babel]{microtype}
\usepackage[english]{babel}
\usepackage[capitalise]{cleveref}
\usepackage{comment}
\usepackage{bbm}
\usepackage{csquotes}
\usepackage{mathabx}
\usepackage{algorithm,algpseudocode,algorithmicx}
\usepackage{tikz}
\usetikzlibrary{shapes.geometric, positioning}
\usepackage{subcaption}
\usepackage{graphicx}
\usepackage{float}
\usepackage{xcolor}
\usepackage{tikz-cd}
\usepackage{adjustbox}
\usepackage{mathrsfs}
\usepackage{pgfplots}
\pgfplotsset{compat=newest}
\usetikzlibrary{decorations.pathmorphing}
\usetikzlibrary{positioning,arrows.meta, shapes.geometric}

\counterwithin{figure}{section}

\newtheorem{theorem}{Theorem}[section]
\newtheorem{corollary}[theorem]{Corollary}

\newtheorem{lemma}[theorem]{Lemma}
\newtheorem{example}[theorem]{Example}

\newtheorem{remark}[theorem]{Remark}

\newtheorem{definition}[theorem]{Definition}

\algrenewcommand\algorithmicrequire{\textbf{Input:}}
\algrenewcommand\algorithmicensure{\textbf{Output:}}
\algnewcommand{\LineComment}[1]{\State $\triangleright$ #1}

\title{On the Palindromic/Reverse-Complement Duplication Correcting Codes}

\author{Yubo~Sun and Gennian~Ge
\thanks{This research was supported by the National Key Research and Development Program of China under Grant 2025YFC3409900, the National Natural Science Foundation of China under Grant 12231014, and Beijing Scholars Program.}
\thanks{Y. Sun ({\tt 2200502135@cnu.edu.cn}) and G. Ge ({\tt gnge@zju.edu.cn}) are with the School of Mathematical Sciences, Capital Normal University, Beijing 100048, China.}
}

\begin{document}

\date{}
\maketitle

\begin{abstract}
    Motivated by applications in in-vivo DNA storage, we study codes for correcting duplications. A reverse-complement duplication of length $k$ is the insertion of the reversed and complemented copy of a substring of length $k$ adjacent to its original position, while a palindromic duplication only inserts the reversed copy without complementation. We first construct an explicit code with a single redundant symbol capable of correcting an arbitrary number of reverse-complement duplications (respectively, palindromic duplications), provided that all duplications have length $k \ge 3\lceil \log_q n \rceil$ and are disjoint. Next, we derive a Gilbert–Varshamov bound for codes that can correct a reverse-complement duplication (respectively, palindromic duplication) of arbitrary length, showing that the optimal redundancy is upper bounded by $2\log_q n + \log_q\log_q n + O(1)$. Finally, for $q \ge 4$, we present two explicit constructions of codes that can correct $t$ length-one reverse-complement duplications. The first construction achieves a redundancy of $2t\log_q n + O(\log_q\log_q n)$ with encoding complexity $O(n)$ and decoding complexity $O\big(n(\log_2 n)^4\big)$. The second construction achieves an improved redundancy of $(2t-1)\log_q n + O(\log_q\log_q n)$, but with encoding and decoding complexities of $O\big(n \cdot \mathrm{poly}(\log_2 n)\big)$.
\end{abstract}

\section{Introduction}

In the current era of big data, the global volume of digital data is growing exponentially. However, the storage capacity of existing technologies is unlikely to keep pace with the rising demand in the foreseeable future. This discrepancy necessitates breakthroughs in storage technology to deliver innovative, high-density, efficient, and durable solutions. Among these, DNA storage systems—with their ultra-high density, exceptional durability, and low maintenance costs—emerge as a compelling alternative to traditional storage media.
The feasibility of storing information in DNA storage was demonstrated in \cite{Church-12-Science,Goldman-13-nature} (for in-vitro DNA storage) and \cite{Shipman-17-Nature} (for in-vivo DNA storage). Compared to in-vitro DNA storage, in-vivo DNA storage offers additional functionalities, such as the ability to watermark genetically modified organisms (GMOs) to verify authenticity and track unauthorized use \cite{Arita-04,Heider-07-BMS,Liss-12}, as well as to label organisms in biological studies \cite{Wong-03}.
However, in-vivo DNA storage introduces a unique challenge in the form of duplication errors, which can occur during faulty DNA replication processes. This motivates us to investigate codes capable of correcting such errors.

Generally, a duplication error of length $k$ involves taking a substring of length $k$ of the DNA string and inserting its copy (possibly altered) adjacent to its original location. There are three primary types of duplication errors:
\begin{itemize}
\item Tandem duplication, where the copy remains unaltered;
\item Palindromic duplication (also known as reverse duplication), where the copy is reversed;
\item Reverse-complement duplication, where the copy is first reversed and then complemented.
\end{itemize}
While tandem duplications have been extensively studied in the literature \cite{Elishco-24-IT,Elishco-19-IT,Farnoud-16-IT,Goshkoder-24-TMBMC,Jain-17-IT', Jain-17-IT,Kovačević-19-TCOM,Kovačević-22-PIT,Kovačević-18-CL,Tang-22-IT,Tang-23-IT,Tang-20-IT,Yu-24-DCC}, reverse-complement duplications and palindromic duplications remain relatively underexplored due to their intricate intrinsic structures.

In this paper, we focus on reverse-complement duplications and palindromic duplications.
Recognizing that duplications can be viewed as a special type of burst-insertions, one can simply use burst-insertions correcting codes to correct duplications. 
However, this approach is not the most efficient, as duplications inherently provide additional information compared to generic burst-insertions.
A natural question arises: Can we design codes that correct duplications with less redundancy compared to general burst-insertions correcting codes \cite{Guruswami-21-IT,Levenshtein-66,Levenshtein-70,Li-23-ISIT, Sima-20-ISIT,Sima-20-ISIT', Song-23-IT, Song-22-IT,Sun-24-IT, Sun-25-IT, Ye-25-IT} by leveraging the intrinsic properties of duplications?
In \cite{Ben-Tolila-22-IT}, Ben-Tolila and Schwartz demonstrated that when the duplication length $k$ is odd and the alphabet size $q \geq 2$, any binary code capable of correcting a burst-insertion of length $k$ can be utilized through the complement preserving mapping to construct $q$-ary codes that can correct a reverse-complement duplication of length $k$. 
Notably, when $q \geq 4$, they derived constructions with reduced redundancy compared to general single burst-insertion correcting codes.
Furthermore, Yohananov and Schwartz \cite{Yohananov-25-DCC} constructed optimal codes with rates $\log_q(q-2)$ and $\log_q(q-1)$, capable of correcting an arbitrary number of reverse-complement duplications and palindromic duplications, respectively, of length $k=1$.
In \cite{Lenz-19-DCC}, Lenz, Wachter-Zeh, and Yaakobi established that if a code $\mathcal{C}$ does not contain palindromic repeats of length at least two, i.e., no two adjacent substrings of length $k\geq 2$ such that the second is a reverse of the first, then $\mathcal{C}$ can correct a palindromic duplication of any length $k \geq 2$. 
Additionally, they proved the existence of such codes with a rate of at least $\log_q(q-1)$.
Recent work by Liu, Tang, Fan, and Sagar \cite{Liu-25-AMC} extended this conclusion to reverse-complement duplications.
Given that the rate of codes that can correct an arbitrary number of burst-insertions of length $k=1$ or a burst-insertion of arbitrary length $k \geq 2$ is zero, the works referenced in \cite{Lenz-19-DCC,Liu-25-AMC,Yohananov-25-DCC} also derived codes that perform better in correcting duplications than general burst-insertions correcting codes.

The rest of this paper is organized as follows.
In Section \ref{sec:pre}, we introduce the relevant notations used throughout the paper.
In Section \ref{sec:long}, we consider codes for correcting palindromic/reverse-complement duplications of long length. We show that if a code $\mathcal{C}$ avoids reverse-complement repeats (respectively, palindromic repeats) of length $m$, then $\mathcal{C}$ can correct arbitrary $t\geq 1$ reverse-complement duplications (respectively, palindromic duplications), provided that all duplications have length $k\geq 3m-3$ and are disjoint.
For $m=2$ and $t=1$, this code can be generalized to correct a reverse-complement duplication (respectively, palindromic duplication) of length at least two, thereby recovering the results established in \cite{Lenz-19-DCC} and \cite{Liu-25-AMC}.
For $m = \lceil \log_q n \rceil + 1$, we propose efficient encoding and decoding algorithms with one redundant symbol for this code.
In Section \ref{sec:arbitrary}, we consider codes for correcting a reverse-complement duplication (respectively, palindromic duplication) of arbitrary length. In conjunction with the codes developed in Section \ref{sec:long}, we derive a Gilbert–Varshamov bound for such codes, showing that there exists a construction with $2\log_q n + \log_q\log_q n + O(1)$ redundant symbols, i.e., with asymptotic rate $1$. This improves the rate bound $\log_q(q-1)$ established in \cite{Lenz-19-DCC} and \cite{Liu-25-AMC}.
In Section \ref{sec:short}, we consider codes for correcting multiple length-one reverse-complement duplications. By adopting the techniques from \cite{Yohananov-25-DCC}, we transform length-one reverse-complement duplications into substitutions. Moreover, by imposing run-length-limited constraints and combining a substitution-correcting code from \cite{Liu-24-ISIT} with an indel-correcting code from \cite{Li-23-ISIT}, we design two families of $q$-ary codes (with $q \ge 4$) that can correct $t$ length-one reverse-complement duplications.
The first construction achieves a redundancy of $2t\log_q n + O(\log_q\log_q n)$ with encoding complexity $O(n)$ and decoding complexity $O\big(n(\log_2 n)^4\big)$. The second construction achieves an improved redundancy of $(2t-1)\log_q n + O(\log_q\log_q n)$, but with encoding and decoding complexities of $O\big(n \cdot \mathrm{poly}(\log_2 n)\big)$. These constructions outperform the approach of directly using indel-correcting codes \cite{Li-23-ISIT, Sima-20-ISIT', Song-23-IT, Song-22-IT, Ye-25-IT}, which require $5\log_q n + O(\log_q\log_q n)$ and $(4t-1)\log_q n + O(\log_q\log_q n)$ redundant symbols when $t=2$ and $t \ge 3$, respectively. Finally, in Section \ref{sec:concl}, we conclude the paper.

\section{Preliminaries}\label{sec:pre}

\subsection{Notations}

Given two integers $i,j$, define the integer interval $[i, j]$ as $\{i, i+1, \ldots, j\}$ if $i \leq j$, and as the empty set $\varnothing$ otherwise. 
For any integer $q \ge 2$, define $\Sigma_q = [0, q-1]$ as the $q$-ary alphabet. 
Let $\Sigma_q^n$ denote the set of all strings of length $n$ over $\Sigma_q$. 
For any string $\boldsymbol{x} \in \Sigma_q^n$, we write either $\boldsymbol{x} = x_1 x_2 \cdots x_n$ or $\boldsymbol{x} = (x_1, x_2, \ldots, x_n)$, where $x_s$ denotes the $s$-th symbol of $\boldsymbol{x}$ for $s\in [1, n]$. 
The \emph{length} of $\boldsymbol{x}$ is denoted as $|\boldsymbol{x}|$.
For any integer interval $[i,j] \subseteq [1,n]$, we say that $\boldsymbol{x}_{[i,j]}=x_i\cdots x_j$ is a \emph{substring} of $\boldsymbol{x}$, where $x_i\cdots x_j$ denotes the empty string $\boldsymbol{\epsilon}$ when $i>j$.
Let $\boldsymbol{y} \in \Sigma_q^m$. The \emph{concatenation} of $\boldsymbol{x}$ and $\boldsymbol{y}$, denoted as either $\boldsymbol{x} \boldsymbol{y}$ or $(\boldsymbol{x}, \boldsymbol{y})$, is defined by $x_1 x_2 \cdots x_n y_1 y_2 \cdots y_m$. 
For a set $V$, its \emph{cardinality} is denoted as $\#V$.
For a non-negative integer $a$, let $Rep_{q,m}(a)$ be the $q$-ary representation of $a$ of length $m$.
Then its inverse mapping $Rep_{q,m}^{-1}(\cdot)$ defined by $Rep_{q,m}^{-1}\big(Rep_{q,m}(a)\big)=a$ is well-defined.

\subsection{Complement and Reverse Operations}

A \emph{complement operation}\footnote{The complement operation can be viewed as a permutation $\pi:\Sigma_q \to \Sigma_q$ with $\pi(a)\neq a$ and $\pi\big(\pi(a)\big)=a$ for each $a\in\Sigma_q$.} is characterized as a bijective map that assigns each symbol $a$ in the alphabet $\Sigma_q$ to a unique symbol $\overline{a}$ in $\Sigma_q$, adhering to the following properties:
\[
\overline{a}\neq a \quad \text{and} \quad \overline{\overline{a}} = a \quad \text{for each } a \in \Sigma_q.
\]
An implicit assumption in this setting is that $q$ is even. Therefore, when we consider the complement operation, we always assume that $q$ is even.
This complement operation extends naturally to entire strings, i.e., for $\boldsymbol{x}=x_1x_2\cdots x_n$, we define
\[
\overline{\bm{x}} = \overline{x_1}\, \overline{x_2} \cdots \overline{x_n}.
\]
The \emph{reverse} of $\bm{x}$, denoted as $\bm{x}^R$, is constructed by reversing the order of $\bm{x}$, i.e., we define
\[
\bm{x}^R = x_n x_{n-1} \cdots x_1.
\]
Additionally, the \emph{reverse-complement} of $\bm{x}$, denoted as $\bm{x}^{RC}$, is obtained by reversing $\bm{x}$ followed by taking the complement of each symbol, i.e., we define
\[
\bm{x}^{RC} = \overline{x_n}\, \overline{x_{n-1}} \cdots \overline{x_1}.
\]

\begin{remark}
    The complement and reverse operations commute. Specifically, it holds that
    \[
    \bm{x}^{RC} = \overline{\bm{x}^R} = \overline{\bm{x}}^R.
    \]
\end{remark}

\subsection{Duplication Correcting Codes}
 
Now we formalize the definition of reverse-complement duplications and palindromic duplications.
Let $\bm{x} = \bm{u} \bm{v} \bm{w}$ be a string of length $n$, where $\bm{u}$, $\bm{v}$, and $\bm{w}$ are substrings with lengths $|\bm{u}| = i - 1$, $|\bm{v}| = k$, and $|\bm{w}| = n - i - k + 1$, respectively.
An error at position $i \in [1, n - k + 1]$ in $\bm{x}$ is termed a \emph{$k$-reverse-complement duplication} if the resultant string is
\[
RC_{k,i}(\bm{x}) = \bm{u} \bm{v} \bm{v}^{RC} \bm{w}.
\]
Similarly, an error at position $i \in [1, n - k + 1]$ in $\bm{x}$ is termed a \emph{$k$-palindromic duplication} if the resultant string is
\[
R_{k,i}(\bm{x}) = \bm{u} \bm{v} \bm{v}^{R} \bm{w}.
\]
The \emph{$k$-reverse-complement-duplication ball centered at $\bm{x}$ of radius $t$} is defined as 
\[
RC_{k}^t(\bm{x}) = \left\{RC_{k,i_t}\big(RC_{k,i_{t-1}}(\ldots(RC_{k,i_1}(\bm{x})\ldots) \big):i_j\in [1,n+(j-1)k-k+1] \text{ for }j\in [1,t]\right\}.
\]
A code $\mathcal{C} \subseteq \Sigma_q^n$ is termed a \emph{$t$ $k$-reverse-complement-duplications correcting code} if it can correct $t$ $k$-reverse-complement duplications, i.e.,
\[
RC_{k}^t(\bm{x})\cap RC_{k}^t(\bm{y})=\varnothing \quad \text{for each } \bm{x}\neq \bm{y} \in \mathcal{C}.
\]
The definition of \emph{$t$ $k$-palindromic-duplications correcting code} can be defined analogously.
To evaluate a code $\mathcal{C} \subseteq \Sigma_q^n$, we consider either its \emph{redundancy}, defined as $n - \log_q |\mathcal{C}|$, or its \emph{rate}, defined as $\frac{\log_q |\mathcal{C}|}{n}$. 

\section{Correcting Long Palindromic/Reverse-Complement Duplications}\label{sec:long}

For tandem duplications, Jain, Farnoud, Schwartz, and Bruck \cite{Jain-17-IT} and Goshkoder, Polyanskii, Vorobyev \cite{Goshkoder-24-TMBMC} established that a code $\mathcal{C}$ containing sequences with no tandem repeats of length $k$ inherently possesses the ability to correct a tandem duplication of that length.
However, this result does not extend universally to reverse-complement duplications and palindromic duplications, despite its validity for duplications of length $k=2$, as demonstrated in \cite{Lenz-19-DCC} and \cite{Liu-25-AMC}.
To illustrate this limitation, we consider the code $\mathcal{C} = \{000111000, 000111110\}$, which is specifically designed to exclude any reverse-complement repeats of length four.
Despite this design, $\mathcal{C}$ fails to correct a reverse-complement duplication of length four. Specifically:
\begin{itemize}
  \item The string $00011\underline{1000}$ can evolve into $0001110001110$, where the underlined substring denotes the duplicated portion.
  \item Moreover, the string $00\underline{0111}110$ can also evolve into $0001110001110$.
\end{itemize}
This observation underscores the necessity for a more nuanced approach when addressing reverse-complement duplications and palindromic duplications.

In what follows, we focus solely on reverse-complement duplications, since all results established herein can be seamlessly extended to handle palindromic duplications.
We first show that if a code $\mathcal{C}$ ensures that none of its codewords contains two adjacent substrings of length $m$ such that the second is a reverse-complement of the first, then $\mathcal{C}$ can correct a reverse-complement duplication of length at least $3m-3$. 
To facilitate this, we introduce the following concept.

\begin{definition}\label{def:SSA}
  Let $m\geq 2$ be an integer.
  A string $\bm{x} \in \Sigma_q^n$ is termed an \emph{$m$-reverse-complement-duplication root} (or \emph{$m$-RCD root} for short) if, for any two adjacent substrings of length $m$, the second is not a reverse-complement of the first. In other words, for each $i\in [1,n - 2m + 1]$, we require
    \[
    \bm{x}_{[i+m, i+2m-1]}\neq \bm{x}_{[i, i+m-1]}^{RC},
    \]
    or equivalently
    \[
    \bm{x}_{[i, i+m-1]} \neq \bm{x}_{[i+m, i+2m-1]}^{RC}.
    \]
\end{definition}

The following conclusion is important for our code design:

\begin{lemma}\label{lem:SSA}
    Let $\bm{x} \in \Sigma_q^n$ be an $m$-RCD root, where $m\geq 2$ is an integer. 
    If $\bm{y} = RC_{k,i}(\bm{x})$ for some integer $k \geq 3m-3$ and position $i \in [1, n - k + 1]$, then $\bm{y}_{[i+k, i+2k-1]}=\bm{y}_{[i, i+k-1]}^{RC}$ and $\bm{y}_{[j+k, j+2k-1]}\neq \bm{y}_{[j, j+k-1]}^{RC}$ for $j < i$.
\end{lemma}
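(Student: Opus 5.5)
The first assertion is immediate. Write $\bm{x}=\bm{u}\bm{v}\bm{w}$ with $|\bm{u}|=i-1$. By the definition of $RC_{k,i}$ we have $\bm{y}_{[i,i+k-1]}=\bm{v}$ and $\bm{y}_{[i+k,i+2k-1]}=\bm{v}^{RC}$.

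For the second assertion I argue by contradiction. Suppose some $j<i$ satisfies $\bm{y}_{[j+k,j+2k-1]}=\bm{y}_{[j,j+k-1]}^{RC}$, and put $d=i-j\ge 1$. I will phrase everything as reflections. Set $c_1=j+k$ and $c_2=i+k$, so $c_1=c_2-d$.
\begin{itemize}
\item The assumed repeat says $y_{p}=\overline{y_{2c_1-1-p}}$ for all $p\in[j,j+2k-1]$.
\item The genuine duplication says $y_p=\overline{y_{2c_2-1-p}}$ for all $p\in[i,i+2k-1]$.
\end{itemize}
Two facts will be used repeatedly.
\begin{enumerate}
\item $\bm{y}_{[1,c_2-1]}=\bm{x}_{[1,i+k-1]}$, a prefix of $\bm{x}$, so it is itself an $m$-RCD root.
\item The block $\bm{y}_{[c_2,n+k]}=\bm{v}^{RC}\bm{w}$ also contains no reverse-complement repeat of length $m$ lying inside $\bm{v}^{RC}$ or inside $\bm{w}$. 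This holds because the reverse-complement of an RC repeat is again an RC repeat. I expect to need only fact 1.
\end{enumerate}
So it suffices to exhibit a length-$m$ RC repeat, i.e.\ a center $c$ with $y_{c-1-s}=\overline{y_{c+s}}$ for $s\in[0,m-1]$, whose window $[c-m,c+m-1]$ lies inside $[1,c_2-1]$.

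\textbf{Case $d\ge m$.} Take $c=c_1$. The window is $[c_1-m,c_1+m-1]$.
\begin{itemize}
\item Its left end satisfies $c_1-m=j+k-m\ge j\ge 1$, since $k\ge m$.
\item Its right end satisfies $c_1+m-1\le c_2-1$, which is equivalent to $d\ge m$.
\end{itemize}
The window also lies inside $[j,j+2k-1]$, so the first reflection gives the repeat. This contradicts fact 1.

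\textbf{Case $1\le d\le m-1$.} Here the two reflections overlap heavily, and I compose them. Applying the $c_2$-reflection and then the $c_1$-reflection sends $p\mapsto p-2d$, and the two complements cancel. Hence $y_p=y_{p-2d}$ wherever both reflections apply, i.e.\ on a long interval around $c_2$, roughly $p\in[i+2d,\,j+2k-1]$. Combining this period-$2d$ relation with the $c_1$-reflection shows that $c_1-2\ell d$ is again a reflection center for every $\ell\ge 0$, valid on a slightly shrunken window, as long as everything stays inside $[j,j+2k-1]\cap[i,i+2k-1]$ shifted appropriately. Equivalently, mixing in the $c_2$-reflection, every center $c_2-\ell d$ works. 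I then choose $\ell$ minimal with $\ell d\ge m$, so that $c=c_2-\ell d\in[c_2-m-d+1,\,c_2-m]$. The window $[c-m,c+m-1]$ then lies in $[1,c_2-1]$ and yields a length-$m$ RC repeat in $\bm{x}$, the desired contradiction.

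\textbf{Main obstacle.} The step I expect to be hardest is the bookkeeping in the second case. One must check that the reflection about $c$ really holds on the whole window $[c-m,c+m-1]$. That window reaches as far left as about $c_2-2m-d+1\ge c_2-3m+2$, and the periodicity and reflection relations must be valid there. The validity range of the composed relations is governed by $j\ge c_2-k$, and this is exactly where $k\ge 3m-3$ should enter: $c_2-3m+2\ge c_2-k-1$, with the slack absorbing $d\le m-1$.

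If the single-step argument is too tight, my fallback is to derive reflection centers one at a time. Each new center is obtained by reflecting the previous center through $c_1$ or $c_2$. At every step I would verify the containment using $d\le m-1$ and $k\ge 3m-3$.
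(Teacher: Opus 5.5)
Your case $d\ge m$ is exactly the paper's first case. For $1\le d\le m-1$ you use the paper's mechanism: the two overlapping reflections force a $2d$-periodic stretch of alternating blocks $\bm{a}^{RC},\bm{a}$, which yields a new reverse-complement repeat in $\bm{y}_{[1,c_2-1]}=\bm{x}_{[1,i+k-1]}$. But with your choice of center, the whole content of this case is the window check you postponed, and you do not carry it out. The interval you give for that check is also too narrow. The composed relation $y_p=y_{p-2d}$ holds for all $p\in[i+d,\,i+2k-1]$, since you need $p\in[i,i+2k-1]$ and $2c_2-1-p\in[j,j+2k-1]$. It is not limited to $[i+2d,\,j+2k-1]$. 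Moreover $j=c_2-k-d<c_2-k$, so the relevant left boundary is $j$ itself, not ``$j\ge c_2-k$''. This matters. If the periodic region were only $[i,\,j+2k-1]$, as your interval suggests, take $d=m-1$, $k=3m-3$ (so $\ell=2$, $c=c_2-2m+2$). Then the window starts at $c_2-3m+2=i-1$, outside that region, and a check that requires the window to lie in it fails.

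With the correct range the check is short. First, $\bm{y}_{[j,\,i+2k-1]}$ has period $2d$. Minimality of $\ell$ gives $(\ell-1)d\le m-1$. Hence $c+m-1\le c_2-1$ and $c-m\ge c_2-d-(2m-1)\ge c_2-d-k=j$, i.e.\ $W=[c-m,c+m-1]\subseteq[j,c_2-1]\subseteq[j,j+2k-1]$. For $p\in W$ the $c_1$-reflection gives $y_p=\overline{y_{2c_1-1-p}}$. The indices $2c_1-1-p$ and $2c-1-p$ both lie in $[j,i+2k-1]$ and differ by $2(\ell-1)d$, so $y_p=\overline{y_{2c-1-p}}$. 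This is a length-$m$ reverse-complement repeat starting at $c-m\ge 1$ inside the prefix, the desired contradiction; no fallback is needed.

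The paper avoids this bookkeeping by a different choice of center. Writing $\ell$ for your $d$, it sets $s=\lfloor (k+d)/(2d)\rfloor$ and reflects about $c_2-sd$ with half-length $sd\ge m$. That window is tiled exactly by the blocks $\bm{u}^{(2s)},\ldots,\bm{u}^{(1)}$, whose equalities come straight from the two reflections. The price is needing $2sd\ge k-d+1\ge 2m-1$, which is where $k\ge 3m-3$ enters. Your center nearest $c_2-m$ costs the explicit check above but uses only $k\ge 2m-1$ (and $k\ge m$ in the first case). So, once completed, your argument proves the lemma under a weaker hypothesis.
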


\begin{proof}
    Since $\bm{y}=RC_{k,i}(\bm{x})$, it follows by definition that $\bm{y}_{[i+k,i+2k-1]}=\bm{y}_{[i,i+k-1]}^{RC}$ and $\bm{y}_{[1,i+k-1]}=\bm{x}_{[1,i+k-1]}$.
    Given that $\bm{x}$ is an $m$-RCD root, we see that the prefix $\bm{y}_{[1, i+k-1]}$ also maintains the $m$-RCD root property.
    Suppose, for contradiction, that there exists some $j < i$ such that $\bm{y}_{[j + k, j + 2k - 1]}=\bm{y}_{[j, j+k-1]}^{RC}$.
    We analyze two cases:
    \begin{itemize}
        \item If $j\leq i-m$, then $j+k+m-1\leq i+k-1$. 
        Since $\bm{y}_{[j+k,j+2k-1]}=\bm{y}_{[j,j+k-1]}^{RC}$, we get $\bm{y}_{[j+k,j+k+m-1]}=\bm{y}_{[j+k-m,j+k-1]}^{RC}$, which contradicts the fact that $\bm{y}_{[1,i+k-1]}$ is an $m$-RCD root.
        
        \item If $i-m<j<i$, let $\ell\triangleq |\bm{x}_{[j,i-1]}|=i-j$ and $s=\lfloor \frac{k+\ell}{2\ell} \rfloor$, then $\ell\in [1,m-1]$ and $2s\ell\in [k-\ell+1,k+\ell]$ is an even integer.
        Note that $k-\ell+1\geq (3m-3)-(m-1)+1=2m-1$, we get $2s\ell\geq 2m$, i.e., $s\ell\geq m$.
        We define $\bm{u}^{(t)}= \bm{y}_{[i+k-t\ell, i+k-(t-1)\ell-1]}$ for $t\in [1,2s]$ and $\bm{v}^{(t)}= \bm{y}_{[i+k+(t-1)\ell, i+k+t\ell-1]}$ for $t\in [1,2s-1]$, and illustrate the definitions in Figure \ref{fig:uv}.
        Since $\bm{y}_{[i+k,i+2k-1]}=\bm{y}_{[i,i+k-1]}^{RC}$, we have $\bm{v}^{(t)}=\big(\bm{u}^{(t)} \big)^{RC}$ for $t\in [1,2s-1]$.
        Moreover, since $\bm{y}_{[j+k,j+2k-1]}=\bm{y}_{[j,j+k-1]}^{RC}$, we have $\bm{u}^{(1)}=\big(\bm{u}^{(2)} \big)^{RC}$ and $\bm{v}^{(t)}=\big(\bm{u}^{(t+2)} \big)^{RC}$ for $t\in [1,2s-2]$.
        It follows that $\bm{u}^{(1)}=\bm{u}^{(3)}=\cdots=\bm{u}^{(2s-1)}= \big(\bm{u}^{(2)} \big)^{RC}$ and $\bm{u}^{(2)}=\bm{u}^{(4)}=\cdots=\bm{u}^{(2s)}$.
        Then, we get 
        \begin{align*}
            \bm{y}_{[i+k-s\ell,i+k-1]}
            &=\bm{u}^{(s)} \bm{u}^{(s-1)} \cdots \bm{u}^{(1)}\\
            &= \big(\bm{u}^{(s+1)}\big)^{RC} \big(\bm{u}^{(s+2)}\big)^{RC} \cdots \big(\bm{u}^{(2s)} \big)^{RC}\\
            &= \big(\bm{u}^{(2s)} \bm{u}^{(2s-1)} \cdots \bm{u}^{(s+1)}\big)^{RC}\\
            &= \bm{y}_{[i+k-2s\ell,i+k-s\ell-1]}^{RC}.
        \end{align*}
        Note that the length of $\bm{y}_{[i+k-s\ell,i+k-1]}$ is $s\ell\geq m$,
        we obtain a  contradiction with the fact that $\bm{y}_{[1,i+k-1]}$ is an $m$-RCD root.
    \end{itemize}
    In both cases, assuming the existence of such a $j < i$ leads to a contradiction, so no such $j$ can exist. 
    This completes the proof.
\end{proof}

\begin{figure*}
    \centering
    \begin{tikzpicture}
        \draw[thick, black] (0,0) -- (15,0) node[right] {$\bm{y}$};

        \draw[thick, black, dashed] (0.7,0.2) -- (0.7,-1.5) node[below=0.5pt] {$j$};
        
        \draw[thick, black, dashed] (1.5,0.2) -- (1.5,-1) node[below=0.5pt] {$i-1$};
        
        \draw[decorate,decoration={brace,mirror,amplitude=6pt,raise=4pt}] (0.7,-0.2) -- (1.5,-0.2);
        \node[below=0.5pt] at (1.1,-0.5) {$\ell$};

        \draw[thick, black, dashed] (1.7,0.2) -- (1.7,-0.5) node[below=0.5pt] {$i$};

        \draw[thick, black, dashed] (7,0.2) -- (7,-1.5) node[below=0.5pt] {$j+k-1$};

        \draw[thick, black, dashed] (8,0.2) -- (8,-1) node[below=0.5pt] {$i+k-1$};
        
        \draw[thick, black, dashed] (8.2,0.2) -- (8.2,-0.5);
        \node[below=0.5pt] at (8.55,-0.5) {$i+k$};

        \draw[thick, black, dashed] (13.3,0.2) -- (13.3,-1.5) node[below=0.5pt] {$j+2k-1$};

        \draw[thick, black, dashed] (14.3,0.2) -- (14.3,-1) node[below=0.5pt] {$i+2k-1$};

        \node[draw, rectangle, minimum width=0.8cm, minimum height=0.5cm] at (1.6, 0) {};
        \node[above=0.5pt] at (1.7,0.3) {$\bm{u}^{(2s)}$};
        
        \node[draw, rectangle, minimum width=0.8cm, minimum height=0.5cm] at (2.6, 0) {};
        \node[above=0.5pt] at (2.7,0.3) {$\bm{u}^{(2s-1)}$};
        
        \node[above=0.5pt] at (3.6,0) {$\cdots$};
        
        \node[draw, rectangle, minimum width=0.8cm, minimum height=0.5cm] at (4.6, 0) {};
        \node[above=0.5pt] at (4.7,0.3) {$\bm{u}^{(4)}$};
        \draw[decorate,decoration={brace,mirror,amplitude=6pt,raise=4pt}] (4.2,-0.2) -- (5,-0.2);
        \node[below=0.5pt] at (4.6,-0.5) {$\ell$};
        
        \node[draw, rectangle, minimum width=0.8cm, minimum height=0.5cm] at (5.6, 0) {};
        \node[above=0.5pt] at (5.7,0.3) {$\bm{u}^{(3)}$};
        
        \node[draw, rectangle, minimum width=0.8cm, minimum height=0.5cm] at (6.6, 0) {};
        \node[above=0.5pt] at (6.7,0.3) {$\bm{u}^{(2)}$};
        
        \node[draw, rectangle, minimum width=0.8cm, minimum height=0.5cm] at (7.6, 0) {};
        \node[above=0.5pt] at (7.7,0.3) {$\bm{u}^{(1)}$};
        
        \node[draw, rectangle, minimum width=0.8cm, minimum height=0.5cm] at (8.6, 0) {};
        \node[above=0.5pt] at (8.7,0.3) {$\bm{v}^{(1)}$};
        
        \node[draw, rectangle, minimum width=0.8cm, minimum height=0.5cm] at (9.6,0) {};
        \node[above=0.5pt] at (9.7,0.3) {$\bm{v}^{(2)}$};
        
        \node[draw, rectangle, minimum width=0.8cm, minimum height=0.5cm] at (10.6,0) {};
        \node[above=0.5pt] at (10.7,0.3) {$\bm{v}^{(3)}$};
        
        \node[draw, rectangle, minimum width=0.8cm, minimum height=0.5cm] at (11.6,0) {};
        \node[above=0.5pt] at (11.7,0.3) {$\bm{v}^{(4)}$};
        
        \node[above=0.5pt] at (12.6,0) {$\cdots$};
        
        \node[draw, rectangle, minimum width=0.8cm, minimum height=0.5cm] at (13.6,0) {};
        \node[above=0.5pt] at (13.7,0.3) {$\bm{v}^{(2s-1)}$};

    \end{tikzpicture}
    \caption{Illustrations of the definitions of $\bm{u}^{(t)}$ for $t\in [1,2s]$ and $\bm{v}^{(t)}$ for $t\in [1,2s-1]$.}
    \label{fig:uv}
\end{figure*}
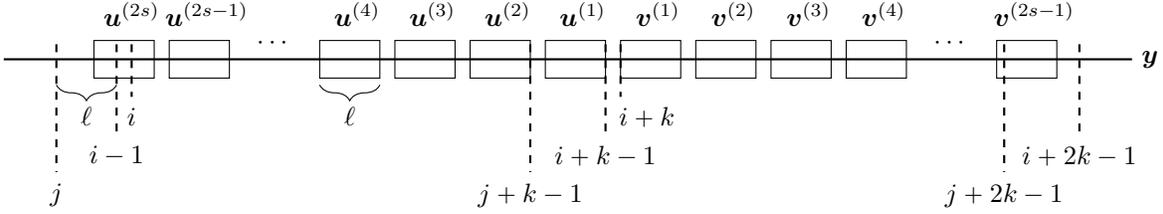

Suppose an $m$-RCD root $\bm{x}\in \Sigma_q^n$ suffers a reverse-complement duplication of length $k\geq 3m-3$ and results in $\bm{y}$, let $i$ be the smallest index such that $\bm{y}_{[i+k,i+2k-1]}=\bm{y}_{[i,i+k-1]}^{RC}$, then by Lemma \ref{lem:SSA}, we have $\bm{x} = \bm{y}_{[1,i+k-1]} \bm{y}_{[i+2k,n+k]}$. 
Since comparing two sequences of length $k$ requires $O(k)$ time and up to $n$ pairs may be checked in the worst case, this yields a decoding time of $O(kn)$. If the duplication length $k$ can be as large as $O(n)$, the worst-case time becomes $O(n^2)$. The following lemma develops a more efficient approach.

\begin{lemma}\label{lem:dedup}
    Let $\bm{x}\in \Sigma_q^n$ be an $m$-RCD root. If $\bm{x}$ suffers a reverse-complement duplication of length $k\geq 3m-3$ at position $i$ and results in $\bm{y}$, let $p\in [1,n-3m+4]$ be the smallest index such that $\bm{y}_{[p+3m-3,p+6m-7]}=\bm{y}_{[p,p+3m-4]}^{RC}$, then $i=p+3m-3-k$ and $\bm{x} = \bm{y}_{[1,p+3m-4]} \bm{y}_{[p+3m-3+k,n+k]}$.  
    This decoding process runs in $O(mn)$ time.
\end{lemma}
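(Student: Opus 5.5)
The plan is to reduce everything to the ``centre'' of the duplication and re-run the argument of Lemma~\ref{lem:SSA} with the window length $K \triangleq 3m-3$ in place of $k$.

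\textbf{Existence of $p$.} Write $c \triangleq i+k$; this is the first position of the inserted copy $\bm{v}^{RC}$ in $\bm{y}$. Since $\bm{y}_{[c,c+k-1]}=\bm{y}_{[c-k,c-1]}^{RC}$ and $K\le k$, I take the last $K$ symbols before $c$ and the first $K$ symbols from $c$ on. These are reverse-complements of each other, that is, $\bm{y}_{[c,c+K-1]}=\bm{y}_{[c-K,c-1]}^{RC}$. Hence the index $p_0 \triangleq c-K = i+k-3m+3$ satisfies the defining condition $\bm{y}_{[p_0+3m-3,p_0+6m-7]}=\bm{y}_{[p_0,p_0+3m-4]}^{RC}$. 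Since $c-K\ge i \ge 1$ and $p_0 \le n-3m+4$, such an index exists and $p\le p_0$.

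\textbf{Minimality, the main step.} I need to show that no $p'<p_0$ works. Suppose $p'<p_0$ satisfies the condition, and set $c' \triangleq p'+K<c$. Then $\bm{y}$ has a centred reverse-complement repeat of half-length $K$ at $c'$ and one of half-length $k\ge K$ at $c$. I also use that the prefix $\bm{y}_{[1,c-1]}=\bm{x}_{[1,i+k-1]}$ is an $m$-RCD root. I then split into the two cases of the proof of Lemma~\ref{lem:SSA}.

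\emph{Case $c'\le c-m$.} Restricting the repeat at $c'$ to half-length $m$ gives $\bm{y}_{[c',c'+m-1]}=\bm{y}_{[c'-m,c'-1]}^{RC}$. This lies inside $\bm{y}_{[1,c-1]}$ because $c'+m-1\le c-1$, which is a contradiction.

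\emph{Case $c-m<c'<c$.} Put $\ell=c-c'\in[1,m-1]$ and $s=\lfloor (K+\ell)/(2\ell)\rfloor$, so that $2s\ell\in[K-\ell+1,K+\ell]$. Using $K-\ell+1\ge 2m-1$ gives $s\ell\ge m$. I define the blocks $\bm{u}^{(t)}$ and $\bm{v}^{(t)}$ of length $\ell$ around $c$ exactly as in Figure~\ref{fig:uv}, with $i+k$ replaced by $c$.
\begin{itemize}
\item The relations $\bm{v}^{(t)}=(\bm{u}^{(t)})^{RC}$ only need the half-length-$(2s-1)\ell\le K\le k$ part of the repeat at $c$.
\item The relations $\bm{u}^{(1)}=(\bm{u}^{(2)})^{RC}$ and $\bm{v}^{(t)}=(\bm{u}^{(t+2)})^{RC}$ only need the repeat at $c'$ up to distance $(2s-1)\ell\le K$ from $c'$ on each side. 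This is exactly what the half-length-$K$ window provides.
\end{itemize}
The main obstacle is this bookkeeping: checking that every block used stays inside the two length-$K$ windows. Once it is checked, the same telescoping as in Lemma~\ref{lem:SSA} yields $\bm{y}_{[c-s\ell,c-1]}=\bm{y}_{[c-2s\ell,c-s\ell-1]}^{RC}$. This is a reverse-complement repeat of length $s\ell\ge m$ inside $\bm{y}_{[1,c-1]}$ (note $c-2s\ell\ge c-K-\ell\ge 1$ since $c-K\ge i\ge 1$ and the block fits in the prefix). Restricting it to length $m$ at its centre contradicts the $m$-RCD root property. Hence $p=p_0$.

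\textbf{Conclusion and running time.} From $p=i+k-3m+3$ we get $i=p+3m-3-k$. Substituting into $\bm{x}=\bm{y}_{[1,i+k-1]}\bm{y}_{[i+2k,n+k]}$ (the observation after Lemma~\ref{lem:SSA}) gives $\bm{x}=\bm{y}_{[1,p+3m-4]}\bm{y}_{[p+3m-3+k,n+k]}$. For the running time, the decoder scans $p=1,2,\dots$ and stops at the first success. Each test compares two strings of length $3m-3$, costing $O(m)$, and at most $n-3m+4$ indices are tested. So locating $p$ costs $O(mn)$, and forming $\bm{x}$ costs $O(n)$, for $O(mn)$ in total.
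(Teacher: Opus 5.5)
Your proof is correct and takes essentially the paper's route. The paper notes that $\bm{y}_{[1,i+k+3m-4]}=RC_{3m-3,\,i+k-3m+3}(\bm{y}_{[1,i+k-1]})$, where $\bm{y}_{[1,i+k-1]}=\bm{x}_{[1,i+k-1]}$ is an $m$-RCD root, and then applies Lemma~\ref{lem:SSA} with duplication length $3m-3$ directly; this handles the window bookkeeping you redo by hand. One justification needs fixing: $c-K-\ell\ge 1$ does not follow from $c-K\ge i\ge 1$ (try $i=1$, $k=K$), but it does hold because $c-K-\ell=c'-K=p'\ge 1$.
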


\begin{proof}
    Since $\bm{y}=RC_{k,i}(\bm{x})$, we have $\bm{y}_{[1,i+k-1]}=\bm{x}_{[1,i+k-1]}$ and $\bm{y}_{[i+k,i+2k-1]}=\bm{y}_{[i,i+k-1]}^{RC}$.
    Given that $k\geq 3m-3$, we can further derive $\bm{y}_{[i+k,i+k+3m-4]}=\bm{y}_{[i+k-3m-3,i+k-1]}^{RC}$.
    Consider the strings $\bm{y}_{[1,i+k-1]}$ and $\bm{y}_{[1,i+k+3m-4]}$, we have $\bm{y}_{[1,i+k+3m-4]}= RC_{3m-3,i+k-3m+3}(\bm{y}_{[1,i+k-1]})$.
    Since $\bm{x}$ is an $m$-RCD root, we see that the prefix $\bm{y}_{[1, i+k-1]}$ also maintains the $m$-RCD root property.
    By Lemma \ref{lem:SSA}, let $p\in [1,n-3m+4]$ be the smallest index such that $\bm{y}_{[p+3m-3,p+6m-7]}=\bm{y}_{[p,p+3m-4]}^{RC}$, we have $p=i+k-3m+3$. Therefore, we can determine $i=p+3m-3-k$ and then $\bm{x} = \bm{y}_{[1,p+3m-4]} \bm{y}_{[p+3m-3+k,n+k]}$.
    Since comparing two sequences of length $3m-3$ requires $O(m)$ time and up to $n$ pairs may be checked in the worst case, this yields a decoding time of $O(mn)$.
    This completes the proof.
\end{proof}

\begin{remark}
  When $m=2$, Lemma \ref{lem:dedup} suggests a decoding strategy for $m$-RCD roots that can correct a reverse-complement duplication of length $k$ with $k \ge 3m-3 = 3$. In fact, this strategy remains valid for $k=2$, as explained below.
  
  Let $\boldsymbol{x} \in \Sigma_q^n$ be a $2$-RCD root. Suppose $\boldsymbol{y} = RC_{2,i}(\boldsymbol{x})$ for some $i \in [1, n-1]$. Then $\boldsymbol{x} = \boldsymbol{y}_{[1,i+1]} \boldsymbol{y}_{[i+4,n+2]}$ and $y_{i+3} = \overline{y_i}$.
  Let $p \in [1, n-1]$ be the smallest index such that $\boldsymbol{y}_{[p+2,p+3]}=\boldsymbol{y}_{[p,p+1]}^{RC}$.
  We have $p \in \{i-1,i\}$; otherwise $\boldsymbol{x}_{[p+2,p+3]}=\boldsymbol{x}_{[p,p+1]}^{RC}$ for some $p \le i-2$, which contradicts the $2$-RCD root property.
\begin{itemize}
  \item If $p=i$, then $\boldsymbol{x} = \boldsymbol{y}_{[1,p+1]} \boldsymbol{y}_{[p+4,n+2]}$.
  \item If $p=i-1$, then $\boldsymbol{y}_{[i+1,i+2]}=\boldsymbol{y}_{[i-1,i]}^{RC}$. This implies $y_{i+1} = \overline{y_i}$. Since $y_{i+3} = \overline{y_i}$, we obtain $y_{i+3} = y_{i+1}$ and $\boldsymbol{y}_{[1,p+1]} \boldsymbol{y}_{[p+4,n+2]} = \boldsymbol{y}_{[1,i+1]} \boldsymbol{y}_{[i+4,n+2]}$.
\end{itemize}
In either case, the decoding strategy outputs the correct $\boldsymbol{x} = \boldsymbol{y}_{[1,i+1]} \boldsymbol{y}_{[i+4,n+2]}$. Therefore, the set of $2$-RCD roots can correct a reverse-complement duplication of length at least two, thereby recovering the main results established in \cite{Liu-25-AMC}. Moreover, compared with \cite{Liu-25-AMC}, we propose an efficient decoding strategy.
\end{remark}

\subsection{Extension to Correct Disjoint Duplications}

Now we generalize Lemma \ref{lem:dedup} to correct disjoint fixed-length reverse-complement duplications. We begin by introducing relevant definitions.

\begin{definition}
    Suppose $t,k$ and $i_1,\ldots,i_t$ are positive integers such that 
    \[
        1\leq i_1<\cdots<i_t\leq n-k+1 \quad \text{and} \quad i_j-i_{j-1}\geq k \quad \text{for } j\in [1,t].
    \]
    Let 
    \[
        \bm{x} = \bm{u}^{(1)} \bm{v}^{(1)} \bm{u}^{(2)} \bm{v}^{(2)} \cdots \bm{u}^{(t)} \bm{v}^{(t)} \bm{w}\in \Sigma_q^n,
    \]
    where $|\bm{w}|\geq 0$ and $|\bm{u}^{(j)}|\geq 0$, $|\bm{v}^{(j)}|=k$, and $|\bm{u}^{(j)}|+\sum_{p=1}^{j-1} (|\bm{u}^{(p)}|+|\bm{v}^{(p)}|)=i_j-1$ for $j\in [1,t]$.
    We say that $\bm{x}$ suffers \emph{$t$ disjoint $k$-reverse-complement duplications} at positions $i_1,\ldots,i_t$ if the resultant string is
    \[
    RC_{k,(i_1,i_2,\ldots,i_t)}(\bm{x})=\bm{u}^{(1)} \bm{v}^{(1)} \big(\bm{v}^{(1)}\big)^{RC} \bm{u}^{(2)} \bm{v}^{(2)} \big(\bm{v}^{(2)}\big)^{RC} \cdots \bm{u}^{(t)} \bm{v}^{(t)} \big(\bm{v}^{(t)}\big)^{RC} \bm{w}.
    \]
    A code $\mathcal{C} \subseteq \Sigma_q^n$ is termed a \emph{$t$-disjoint $k$-reverse-complement-duplications correcting code} if it can correct $t$ disjoint $k$-reverse-complement duplications.
\end{definition}

\begin{algorithm}
    \caption{Deduplication Algorithm}
    \label{alg:dedup}
    \begin{algorithmic}[1]
        \Require {$\bm{y} \in RC_{k,(i_1,i_2,\ldots,i_t)}(\bm{x})$, where $\bm{x}\in \Sigma_q^n$ is an $m$-RCD root, $k\geq 3m-3$, $t\geq 1$, $1\leq i_1<\cdots i_t\leq n-k+1$, and $i_j-i_{j-1}\geq k$ for $j\in [2,t]$}
        \Ensure {$\bm{x}$}

        \State \textbf{Initialization:} Let $p=1$ and $n'\triangleq|\bm{y}|=n+tk$

        \While{$t>0$}
            \If{$\bm{y}_{[p+3m-3,p+6m-7]}=\bm{y}_{[p,p+3m-4]}^{RC}$}
                \State $\bm{y}\leftarrow \bm{y}_{[1,p+3m-4]}\bm{y}_{[p+3m-3+k,n']}$
                \State $n'\leftarrow n'-k$
                \State $t\leftarrow t-1$
                \State $p\leftarrow p+k$
            \Else
                \State $p\leftarrow p+1$
            \EndIf
        \EndWhile
        \State $\bm{x}\leftarrow \bm{y}$
        \State \Return $\bm{x}$
    \end{algorithmic}
\end{algorithm}

\begin{theorem}\label{thm:rcd}
    Let $\mathcal{C}\subseteq \Sigma_q^n$ be the set of all $m$-RCD roots, then $\mathcal{C}$ is a $t$-disjoint $k$-reverse-complement-duplications correcting code for $t\geq 1$ and $k\geq 3m-3$. Moreover, Algorithm \ref{alg:dedup} is correct and runs in $O(mn)$ time.
\end{theorem}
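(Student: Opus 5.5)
We argue by induction on $t$, showing that Algorithm \ref{alg:dedup} undoes the duplications one at a time from left to right. Correctness of the algorithm immediately gives the code property: the output is a deterministic function of $\bm{y}$ that equals $\bm{x}$, so two distinct roots cannot produce a common $\bm{y}$. The invariant we maintain is as follows. At the start of the $r$-th iteration with a match (after $r-1$ deletions), the current string equals
\[
\bm{x}_{[1,i_{r}+k-1]}\,\big(\bm{v}^{(r)}\big)^{RC}\,\bm{u}^{(r+1)}\bm{v}^{(r+1)}\big(\bm{v}^{(r+1)}\big)^{RC}\cdots\bm{w},
\]
so its prefix of length $i_r+k-1$ is a prefix of $\bm{x}$. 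In addition, the scan pointer $p$ is at most $i_r+k-3m+3$, and every window start below $p$ already fails the test in the current string.

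The key step is to reduce to Lemma \ref{lem:dedup} and Lemma \ref{lem:SSA}. Consider the prefix $\bm{z}=\bm{y}'_{[1,i_r+k+3m-4]}$ of the current string $\bm{y}'$. Since $k\ge 3m-3$, this prefix equals $RC_{3m-3,\,i_r+k-3m+3}(\bm{x}_{[1,i_r+k-1]})$, a single duplication of length $3m-3$ applied to an $m$-RCD root. By Lemma \ref{lem:SSA} with $k$ replaced by $3m-3$ (valid since $3m-3\ge 3m-3$), no window start $p'<i_r+k-3m+3$ satisfies the test, and $p'=i_r+k-3m+3$ does. We must check that each test at such a $p'$ reads only symbols inside $\bm{z}$. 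The window at $p'$ ends at $p'+6m-7\le i_r+k+3m-4$, so it does. Hence the first match found from the current pointer is exactly at $p=i_r+k-3m+3$. The deletion then removes $\bm{y}'_{[i_r+k,\,i_r+2k-1]}=(\bm{v}^{(r)})^{RC}$, and the new string has prefix $\bm{x}_{[1,i_{r+1}+k-1]}$.

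To restore the invariant we note that after the deletion the algorithm sets $p\leftarrow p+k=i_r+2k-3m+3$. Disjointness gives $i_{r+1}\ge i_r+k$, so $p\le i_{r+1}+k-3m+3$, as required. Skipping the window starts between $i_r+k-3m+4$ and $i_r+2k-3m+2$ without testing them is harmless. This is because Lemma \ref{lem:SSA}, applied to the next stage's prefix (which again is a single $(3m-3)$-duplication of the root prefix $\bm{x}_{[1,i_{r+1}+k-1]}$), shows that no start before $i_{r+1}+k-3m+3$ matches anyway. Tested or not, those windows cannot cause a premature match.

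We expect the main obstacle to be the bookkeeping of the indices. We must confirm that the shifted string after each deletion really coincides with a prefix of $\bm{x}$ followed by the remaining duplications. We must also confirm that $p$ never overshoots the next true start, which is where disjointness $i_{r+1}-i_r\ge k$ is needed, and that windows near the boundary do not read past the verified prefix. For the running time, each test compares $O(m)$ symbols. The pointer $p$ only increases and stays below $n+tk$. A deletion costs $O(n)$ if done naively, but the number of tests is at most $n$ because each jump of $k$ corresponds to removed length. With a pointer-based, linked or offset, representation of $\bm{y}$, each deletion is $O(1)$, giving $O(mn)$ total.
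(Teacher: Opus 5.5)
Your proof is correct and takes the same route as the paper. The paper locates each duplication in turn with the $(3m-3)$-window test, justified by Lemma \ref{lem:SSA} applied to a $(3m-3)$-duplication of a root prefix (which is exactly what Lemma \ref{lem:dedup} packages), and uses disjointness $i_{r+1}-i_r\ge k$ to justify the jump $p\leftarrow p+k$; your argument does the same, and is somewhat more careful than the paper's in two places. You verify that the tested windows stay inside the verified prefix of the current string, and you note that deletions should be done via an offset rather than by copying to actually get $O(mn)$, a point the paper's complexity count leaves implicit.
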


\begin{proof}
    Let $\bm{x}\in \mathcal{C}$ and $\bm{y} \in RC_{k,(i_1,i_2,\ldots,i_t)}(\bm{x})$, where $1\leq i_1<\cdots i_t\leq n-k+1$ and $i_j-i_{j-1}\geq k$ for $j\in [2,t]$. We will show that Algorithm \ref{alg:dedup} outputs the correct $\bm{x}$.
    In the loop from Step $2$ to Step $10$, we first locate the smallest index $p_1$ such that $\bm{y}_{[p_1+3m-3,p_1+6m-7]}=\bm{y}_{[p_1,p_1+3m-4]}^{RC}$.
    By Lemma \ref{lem:dedup}, we have $i_1=p_1+3m-3-k$. 
    In Step $4$, we perform the deduplication operation by removing the inserted substring $\bm{y}_{[i_1+k,i_1+2k-1]}=\bm{y}_{[p_1+3m-3,p_1+3m-3+k-1]}$ and update $\bm{y}\leftarrow \bm{y}_{[1,p+3m-4]}\bm{y}_{[p+3m-3+k,n']}$. 
    Recall that $n'$ represents the length of $\bm{y}$, in Step $5$, we update $n'\leftarrow n'-k$.
    Since we have corrected the leftmost duplication error, Step $6$ decreases $t$ by one.
    Now, $i_2$ represents the position of the first duplication error and $i_2-i_1\geq k$, we seek the index $p_2=i_2+k-3m+3\geq i_1+2k-3m+3=p_1+k$ at which $\bm{y}_{[p_2+3m-3,p_2+6m-7]}=\bm{y}_{[p_2,p_2+3m-4]}^{RC}$, and thus in Step $7$ we skip $k$ positions.
    After this loop, the value of $t$ is reduced to zero, meaning that all duplication errors have been corrected. Therefore, the algorithm outputs the correct $\bm{x}$.

    In what follows, we analyze the complexity of this algorithm. Since $i_t\leq n-k+1$, in the last execution of the loop (Steps $2$–$10$) we have $p=i_t+k-3m+3\leq n$. Hence, the loop runs at most $n$ times.
    In each iteration, we compare two substrings of length $3m-3$.
    Consequently, the overall time complexity is $O(mn)$. This completes the proof.
\end{proof}

\subsection{RCD Roots}

To evaluate our code, we need to calculate the total number of $m$-RCD roots.

\begin{lemma}\label{lem:number}
  Let $A(n,m)$ be the number of $m$-RCD roots in $\Sigma_q^n$. If $m \geq \lceil \log_q n \rceil +1$, then $A(n,m) \geq (q-1)q^{n-1}$.
\end{lemma}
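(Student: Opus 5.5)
We bound the number of non-roots by a union bound over the "bad" positions. For each $i\in[1,n-2m+1]$, let $B_i$ be the set of strings $\bm{x}\in\Sigma_q^n$ with $\bm{x}_{[i+m,i+2m-1]}=\bm{x}_{[i,i+m-1]}^{RC}$. A string is an $m$-RCD root exactly when it lies in no $B_i$. Hence
\[
A(n,m)\ \ge\ q^n-\sum_{i=1}^{n-2m+1}\#B_i .
\]

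To count $B_i$, note that the first block $\bm{x}_{[i,i+m-1]}$ and all symbols outside the window $[i,i+2m-1]$ are free. The second block is then determined by the first, since the map $\bm{v}\mapsto\bm{v}^{RC}$ is a bijection on $\Sigma_q^m$. Therefore $\#B_i=q^{n-m}$. (For palindromic repeats the same count $q^{n-m}$ holds, with $\bm{v}\mapsto \bm{v}^R$.) It follows that
\[
A(n,m)\ \ge\ q^n-(n-2m+1)\,q^{n-m}\ \ge\ q^n-n\,q^{n-m}.
\]

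Now use the hypothesis $m\ge\lceil\log_q n\rceil+1$. This gives $q^{m-1}\ge q^{\lceil\log_q n\rceil}\ge n$, so $n\,q^{n-m}\le q^{m-1}q^{n-m}=q^{n-1}$. Substituting,
\[
A(n,m)\ \ge\ q^n-q^{n-1}=(q-1)q^{n-1},
\]
which is the claim.

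The main step needing care is the exact count $\#B_i=q^{n-m}$: the two blocks are adjacent and disjoint, so no overlap issues arise. The slack between $n-2m+1$ and $n$ in the number of positions is simply discarded. If $n-2m+1\le 0$, the sum is empty and every string is a root, so the bound holds trivially.
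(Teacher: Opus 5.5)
Your proof is correct and takes essentially the paper's approach. The paper phrases it probabilistically: each bad position has probability $q^{-m}\le 1/(nq)$, and a union bound over the $n-2m+1$ positions gives a failure probability of at most $1/q$. Your direct count $\#B_i=q^{n-m}$ with the same union bound is the identical computation.
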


\begin{proof}
    We prove the lemma by a probabilistic analysis. 
    Choose a string $\bm{x}\in \Sigma_q^n$ uniformly at random, let $\mathbb{P}$ be the probability that $\bm{x}$ does not satisfy the $m$-RCD root property, i.e., there exists some $i\in [1,n-2m+1]$ such that $\bm{x}_{[i+m,i+2m-1]}=\bm{x}_{[i,i+m-1]}^{RC}$. We will show that $\mathbb{P}\leq \frac{1}{q}$.
    
    For $i\in [1,n-2m+1]$, the probability that $\bm{x}_{[i+m,i+2m-1]}=\bm{x}_{[i,i+m-1]}^{RC}$ is $\frac{1}{q^m}\leq \frac{1}{nq}$. By the union bound, we can compute $\mathbb{P}\leq (n-2m+1)\cdot \frac{1}{q^m} \leq \frac{1}{q}$. Then, we get $A(n,m) \geq q^n (1- \frac{1}{q})= (q-1)q^{n-1}$. This completes the proof.
\end{proof}

A natural question concerns how to encode a string into an $m$-RCD root efficiently. In what follows, we adopt the general framework presented in \cite[Construction 2]{Kobovich-24-ISIT}
to encode a string into an $m$-RCD root with one redundant symbol, under the condition that $m\geq \lceil \log_q n\rceil +1$.

\begin{algorithm}
    \caption{Universal Iterative Encoder $ENC$}
    \label{alg:enc}
    \begin{algorithmic}[1]
        \Require {$\bm{x} \in \Sigma_q^{n-1}$}
        \Ensure {$\bm{y}=ENC(\bm{x})\in \mathcal{C}\subseteq \Sigma_q^{n}$}

        \State $\bm{y}\leftarrow \bm{x}1$

        \While{$\mathbbm{1}_{\mathcal{C}}(\bm{y})=0$}
            \State $\bm{y}\leftarrow \xi(\bm{y})0$
        \EndWhile
        \State \Return $\bm{y}$
    \end{algorithmic}
\end{algorithm}

\begin{algorithm}
    \caption{Universal Iterative Decoder $DEC$}
    \label{alg:dec}
    \begin{algorithmic}[1]
        \Require {$\bm{y}=ENC(\bm{x}) \in \mathcal{C}$ for some $\bm{x}\in \Sigma_q^{n-1}$}
        \Ensure {$\bm{x}=DEC(\bm{y})$}

        \While{$y_n=0$}
            \State $\bm{y}\leftarrow \xi^{-1}(\bm{y}_{[1,n-1]})$
            {\color{gray}\Comment{$\xi^{-1}$ represents the `inverse' of $\xi$}}
        \EndWhile
        \State $\bm{x}\leftarrow \bm{y}_{[1,n-1]}$
        \State \Return $\bm{x}$
    \end{algorithmic}
\end{algorithm}

\begin{lemma}\cite[Construction 2]{Kobovich-24-ISIT}\label{Kobovich}
    Let $\mathcal{C}\subseteq \Sigma_q^n$ be a given constraint such that $|\mathcal{C}|\geq q^{n-1}$. Given an injective function $\xi: \Sigma_q^n\setminus \mathcal{C}\rightarrow \Sigma_q^{n-1}$ and an indicator function $\mathbbm{1}_{\mathcal{C}}: \Sigma_q^n\rightarrow \{0,1\}$, then Algorithms \ref{alg:enc} and \ref{alg:dec} present an efficient channel construction with one redundant symbol and $O\big(T(n)\big)$ average time complexity, where $T(n)$ denotes the maximal time complexity among $\xi,\xi^{-1},\mathbbm{1}_{\mathcal{C}}$.
\end{lemma}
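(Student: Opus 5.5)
The plan is to analyze, for each input $\bm{x}\in\Sigma_q^{n-1}$, the trajectory of states $\bm{y}^{(0)}=\bm{x}1$, $\bm{y}^{(j+1)}=\xi(\bm{y}^{(j)})0$ produced by Algorithm \ref{alg:enc}. Each step is applied only while $\bm{y}^{(j)}\notin\mathcal{C}$, so $\xi$ is always evaluated on its domain. Everything will follow from one structural fact. Every state with $j\ge 1$ ends in $0$ and has a unique predecessor, since $\xi$ is injective. The starting state $\bm{y}^{(0)}$ ends in $1$ and so has no predecessor.

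\textbf{Trajectories do not cycle.} I first show that the trajectory never revisits a state. Suppose $\bm{y}^{(a)}=\bm{y}^{(b)}$ with $a<b$, and choose $a$ minimal. If $a=0$, the left-hand side ends in $1$ while the right-hand side ends in $0$, which is a contradiction. If $a\ge1$, then $\xi(\bm{y}^{(a-1)})=\xi(\bm{y}^{(b-1)})$. Injectivity of $\xi$ gives $\bm{y}^{(a-1)}=\bm{y}^{(b-1)}$, contradicting the minimality of $a$. Because $\Sigma_q^n$ is finite, the loop terminates, and its output lies in $\mathcal{C}$ by the loop condition.

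\textbf{Trajectories of different inputs are disjoint.} The same backward-tracing argument shows this. If the trajectories of $\bm{x}$ and $\bm{x}'$ share a state, apply $\xi^{-1}$ repeatedly to step back. The two trajectories must reach a state ending in $1$ at the same backward step, so $\bm{x}1=\bm{x}'1$ and hence $\bm{x}=\bm{x}'$.

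\textbf{Decoding and redundancy.} Algorithm \ref{alg:dec} undoes the encoder exactly. If the current $\bm{y}$ ends in $0$, it was produced by an encoder step, and $\xi^{-1}(\bm{y}_{[1,n-1]})$ is well defined on the image of $\xi$ and returns the predecessor. If $\bm{y}$ ends in $1$, it is the initial state $\bm{x}1$, so the decoder outputs $\bm{x}$. Thus $DEC(ENC(\bm{x}))=\bm{x}$, so $ENC$ is injective from $\Sigma_q^{n-1}$ into $\mathcal{C}$. This gives a code of size $q^{n-1}$ inside $\Sigma_q^n$, i.e.\ one redundant symbol.

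\textbf{Average complexity.} This is the step needing the most care, and it is where $|\mathcal{C}|\ge q^{n-1}$ is used. The number of loop iterations on input $\bm{x}$ equals the number of states in its trajectory that lie outside $\mathcal{C}$. These states are distinct within a trajectory and disjoint across trajectories. Summing over all $q^{n-1}$ inputs therefore gives at most $\#(\Sigma_q^n\setminus\mathcal{C})\le q^n-q^{n-1}$ iterations in total. The average is then at most $q-1=O(1)$ iterations. Each iteration costs $O(T(n))$ for one call to $\mathbbm{1}_{\mathcal{C}}$ and one call to $\xi$. The decoder performs the same number of iterations with $\xi^{-1}$ in place of $\xi$. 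Both algorithms therefore run in $O(T(n))$ average time.
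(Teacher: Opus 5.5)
The paper gives no proof of this lemma. It imports the result directly from Kobovich et al.\ (Construction 2), so there is no in-paper argument to compare against.

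Your proof is correct and self-contained. The marker $1$ on the initial state, the marker $0$ on every later state, and injectivity of $\xi$ together make each trajectory uniquely traceable backwards to a single source. This one fact gives you termination, pairwise disjointness of trajectories, and exact inversion by Algorithm~\ref{alg:dec}. It also gives the double count of non-constrained states that bounds the average work.

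One refinement concerns the complexity step. Injectivity of $\xi\colon\Sigma_q^n\setminus\mathcal{C}\to\Sigma_q^{n-1}$ already forces $\#(\Sigma_q^n\setminus\mathcal{C})\le q^{n-1}$. So your own count gives at most one call to $\xi$ per input on average, rather than $q-1$. There is also one final call to $\mathbbm{1}_{\mathcal{C}}$ per input, which is trivially $O(T(n))$.

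As a consequence, the hypothesis $|\mathcal{C}|\ge q^{n-1}$ is implied by the mere existence of $\xi$. Indeed, it yields $|\mathcal{C}|\ge (q-1)q^{n-1}$, which is exactly what Lemma~\ref{lem:number} supplies in the application. With this sharper bound, your $O(1)$ iteration count no longer relies on treating $q$ as a constant.
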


\begin{theorem}\label{thm:enc}
    Set $m=\lceil \log_q n\rceil+1$.
    Let $\mathcal{C}\subseteq \Sigma_q^n$ be the set of $m$-RCD roots. 
    We define the functions $\xi:  \Sigma_q^n\setminus \mathcal{C}\rightarrow \Sigma_q^{n-1}$, $\xi^{-1}: \Sigma_q^{n-1}\rightarrow \Sigma_q^n$, and $\mathbbm{1}_{\mathcal{C}}: \Sigma_q^n\rightarrow \{0,1\}$ as follows:
    \begin{itemize}
        \item For any $\bm{x}\notin \mathcal{C}$, $\xi(\bm{x})=\bm{x}_{[1,i+m-1]} \bm{x}_{[i+2m,n]}Rep_{q,m-1}(i)$, where $i$ represents the smallest index such that $\bm{x}_{[i+m,i+2m-1]}=\bm{x}_{[i,i+m-1]}^{RC}$ and $Rep_{q,m-1}(i)$ denotes the $q$-ary representation of $i$ of length $m-1=\lceil \log_q n\rceil$.
        \item For any $\bm{y}=\xi(\bm{x})$ with $\bm{x}\notin \mathcal{C}$, $\xi^{-1}(\bm{y})= \bm{y}_{[1,i+m-1]} \bm{y}_{[i,i+m-1]}^{RC} \bm{y}_{[i+m,n-m]}$, where $i$ denotes the integer representation of $\bm{y}_{[n-m+1,n-1]}$.
        \item For any $\bm{x}\in \Sigma_q^n$, $\mathbbm{1}_{\mathcal{C}}(\bm{x})=0$ if $\bm{x}\notin \mathcal{C}$ and $\mathbbm{1}_{\mathcal{C}}(\bm{x})=1$ otherwise.
    \end{itemize}
    Then Algorithms \ref{alg:enc} and \ref{alg:dec} present an efficient channel construction for $m$-RCD roots with one redundant symbol and $O(mn)$ average time complexity.
\end{theorem}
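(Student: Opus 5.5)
The plan is to verify the hypotheses of Lemma \ref{Kobovich}. Specifically, I will check four things: (i) $|\mathcal{C}|\geq q^{n-1}$; (ii) $\xi$ is well defined and maps $\Sigma_q^n\setminus\mathcal{C}$ into $\Sigma_q^{n-1}$; (iii) $\xi$ is injective, with the stated $\xi^{-1}$ as its left inverse; (iv) each of $\xi$, $\xi^{-1}$ and $\mathbbm{1}_{\mathcal{C}}$ can be computed in $O(mn)$ time. The lemma then gives the construction with one redundant symbol and $O(mn)$ average complexity.

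Item (i) is immediate from Lemma \ref{lem:number}. Since $m=\lceil\log_q n\rceil+1$, we get $A(n,m)\geq (q-1)q^{n-1}\geq q^{n-1}$.

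For (ii), take $\bm{x}\notin\mathcal{C}$. The smallest index $i\in[1,n-2m+1]$ with $\bm{x}_{[i+m,i+2m-1]}=\bm{x}_{[i,i+m-1]}^{RC}$ exists. Deleting the length-$m$ block $\bm{x}_{[i+m,i+2m-1]}$ leaves a string of length $n-m$. Appending $Rep_{q,m-1}(i)$ gives total length $n-m+m-1=n-1$. We also need $i\leq n<q^{m-1}$ (since $q^{\lceil\log_q n\rceil}\geq n$, at worst with an $i-1$ shift if $n$ is an exact power), so $i$ fits in $m-1$ digits.

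For (iii), let $\bm{y}=\xi(\bm{x})$. The last $m-1$ symbols $\bm{y}_{[n-m+1,n-1]}$ decode to $i$, and $\bm{y}_{[1,n-m]}=\bm{x}_{[1,i+m-1]}\bm{x}_{[i+2m,n]}$. Hence $\bm{y}_{[1,i+m-1]}=\bm{x}_{[1,i+m-1]}$. In particular $\bm{y}_{[i,i+m-1]}^{RC}=\bm{x}_{[i,i+m-1]}^{RC}=\bm{x}_{[i+m,i+2m-1]}$, which is exactly the deleted block. Also $\bm{y}_{[i+m,n-m]}=\bm{x}_{[i+2m,n]}$. Therefore $\xi^{-1}(\xi(\bm{x}))=\bm{x}$, and $\xi$ is injective. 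The main subtlety is that $\xi^{-1}$ must use only information present in $\bm{y}$. This holds because the index is stored explicitly and the deleted block is recoverable from the kept prefix. The minimality of $i$ is not needed for injectivity; it only makes $\xi$ a function.

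For (iv), $\mathbbm{1}_{\mathcal{C}}$ and the search for the smallest $i$ scan at most $n$ positions, each comparing two length-$m$ blocks, for $O(mn)$ time. Building $\xi(\bm{x})$ and $\xi^{-1}(\bm{y})$ involves $O(n)$ copying plus an $O(m)$ base-$q$ conversion. Thus $T(n)=O(mn)$. One more point to confirm is that the decoder's loop test $y_n=0$ is consistent with how the encoder appends its final symbol. This is inherited from Lemma \ref{Kobovich}: the first append is $1$ and every subsequent append is $0$. So invoking the lemma completes the proof.
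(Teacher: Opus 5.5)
Your proposal is correct and follows essentially the same route as the paper: invoke Lemma~\ref{Kobovich} with $|\mathcal{C}|\geq q^{n-1}$ from Lemma~\ref{lem:number}, verify $\xi^{-1}(\xi(\bm{x}))=\bm{x}$ by observing that the kept prefix $\bm{x}_{[1,i+m-1]}$ regenerates the deleted block $\bm{x}_{[i+m,i+2m-1]}=\bm{x}_{[i,i+m-1]}^{RC}$, and bound the cost of $\xi$, $\xi^{-1}$, $\mathbbm{1}_{\mathcal{C}}$ by $O(mn)$. Your extra check that $\xi(\bm{x})$ has length $n-1$ and that $i$ fits in $m-1$ digits is a detail the paper omits; the clean justification is $i\leq n-2m+1<n\leq q^{m-1}$, which replaces your hedged $i\leq n<q^{m-1}$ (that bound fails when $n=q^{m-1}$).
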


\begin{proof}
    By Lemmas \ref{lem:number} and \ref{Kobovich}, it suffices to show that $\xi^{-1}$ is the `inverse' of $\xi$, i.e., $\xi^{-1}\big(\xi(\bm{x})\big)=\bm{x}$ for $\bm{x}\notin \mathcal{C}$, and that the maximal time complexity among $\xi,\xi^{-1},\mathbbm{1}_{\mathcal{C}}$ is $O(mn)$.

    For any $\bm{x}\notin \mathcal{C}$, let $i$ be the smallest index such that $\bm{x}_{[i+m,i+2m-1]}=\bm{x}_{[i,i+m-1]}^{RC}$ and $Rep_{q,m-1}(i)$ be the $q$-ary representation of $i$ of length $m-1$, then $\xi(\bm{x})=\bm{x}_{[1,i+m-1]} \bm{x}_{[i+2m,n]}Rep_{q,m-1}(i)$.
    It follows that 
    \begin{align*}
        \xi^{-1}\big(\xi(\bm{x})\big)
        &= \bm{x}_{[1,i+m-1]} \bm{x}_{[i,i+m-1]}^{RC} \bm{x}_{[i+2m,n]}\\
        &= \bm{x}_{[1,i+m-1]} \bm{x}_{[i+m,i+2m-1]} \bm{x}_{[i+2m,n]}\\
        &=\bm{x}.
    \end{align*}
    Thus, $\xi^{-1}$ is the `inverse' of $\xi$.

    Since finding the smallest index $i<n$ such that $\bm{x}_{[i+m,i+2m-1]}=\bm{x}_{[i,i+m-1]}^{RC}$ requires $O(mn)$ time and obtaining the $q$-ary representation $Rep_{q,m-1}(i)$ of $i$ (as well as the integer representation of $Rep_{q,m-1}(i)$) requires $O(n)$ time, the maximal time complexity among $\xi,\xi^{-1},\mathbbm{1}_{\mathcal{C}}$ is $O(mn)$. This completes the proof.
\end{proof}

By Theorems \ref{thm:rcd} and \ref{thm:enc}, we can derive the following conclusion.

\begin{corollary}\label{cor:summary}
  There exists a code $\mathcal{C}\subseteq \Sigma_q^n$ with one redundant symbol capable of correcting an arbitrary number of disjoint $k$-reverse-complement duplications, provided that $k \geq 3\lceil \log_q n \rceil$. Both the encoding and decoding processes for this code can be efficiently completed in $O(n\log_q n)$ average time.
\end{corollary}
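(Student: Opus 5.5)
The corollary follows by combining Theorem \ref{thm:rcd} and Theorem \ref{thm:enc} with the parameter choice $m=\lceil \log_q n\rceil+1$. With this choice,
\[
3m-3 = 3\lceil \log_q n\rceil .
\]
So the hypothesis $k\ge 3\lceil\log_q n\rceil$ of the corollary is exactly the hypothesis $k\ge 3m-3$ of Theorem \ref{thm:rcd}. The code is $\mathcal{C}$, the set of all $m$-RCD roots in $\Sigma_q^n$, and Theorem \ref{thm:rcd} says that $\mathcal{C}$ is a $t$-disjoint $k$-reverse-complement-duplications correcting code for every $t\ge 1$. Since $t$ is arbitrary, $\mathcal{C}$ corrects an arbitrary number of disjoint duplications.

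The redundancy comes from Lemma \ref{lem:number}. It gives $|\mathcal{C}|\ge (q-1)q^{n-1}\ge q^{n-1}$, so the redundancy is at most $1$. For an explicit encoder, Theorem \ref{thm:enc} (via Lemma \ref{Kobovich}) shows that Algorithms \ref{alg:enc} and \ref{alg:dec} map $\Sigma_q^{n-1}$ injectively into $\mathcal{C}$. This uses exactly one redundant symbol, and the message is recovered by Algorithm \ref{alg:dec}.

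For complexity, Theorem \ref{thm:enc} gives $O(mn)$ average time for encoding and for inverting the encoder. Algorithm \ref{alg:dedup} removes the duplications in $O(mn)$ worst-case time by Theorem \ref{thm:rcd}. The full decoding is therefore Algorithm \ref{alg:dedup} to recover the codeword, followed by Algorithm \ref{alg:dec} to recover the message. Since $m=\lceil\log_q n\rceil+1=O(\log_q n)$, both encoding and decoding run in $O(n\log_q n)$ average time.

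There is no real obstacle here; the argument only substitutes parameters. The one point to check is that the received string has length $n+tk$ rather than $n$. The complexity bound in Theorem \ref{thm:rcd} is stated in terms of $n$, so it still applies: the scan pointer advances only up to roughly $n$, and each removal is charged once.
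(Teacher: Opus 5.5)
Your proposal is correct and follows the paper's argument exactly: the paper obtains the corollary by instantiating Theorem~\ref{thm:rcd} and Theorem~\ref{thm:enc} with $m=\lceil\log_q n\rceil+1$, so that $3m-3=3\lceil\log_q n\rceil$ and $O(mn)=O(n\log_q n)$. Your concern about the received length has a simpler resolution: disjointness together with $i_t\le n-k+1$ forces $tk\le n$, so the received string has length at most $2n$.
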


\begin{remark}
  Recall that the complement operation can be viewed as a permutation $\pi:\Sigma_q \to \Sigma_q$ with $\pi(a)\neq a$ and $\pi\big(\pi(a)\big)=a$ for each $a\in\Sigma_q$. Since all results established herein do not require the condition $\pi(a)\neq a$, these conclusions remain valid when replacing the complement operation with the identity permutation. Consequently, all results established herein can be seamlessly extended to handle palindromic duplications by simply replacing instances of `reverse-complement' with `palindromic'.
\end{remark}

\section{Correcting a Palindromic/Reverse-Complement Duplication of Arbitrary Length}\label{sec:arbitrary}

Liu, Tang, Fan, and Sagar \cite{Liu-25-AMC} and Lenz, Wachter-Zeh, and Yaakobi \cite{Lenz-19-DCC} showed that there exist codes capable of correcting a reverse-complement duplication and a palindromic duplication, respectively, of arbitrary length, with rate $\log_q (q-1)$.
In this section, we derive a Gilbert-Varshamov bound for such codes, showing that the optimal redundancy is at most $2\log_q n+\log_q\log_q n+O(1)$, i.e., the asymptotic optimal rate is $1$.
In what follows, we focus solely on reverse-complement duplications, since all results established herein can be seamlessly
extended to handle palindromic duplication by simply replacing instances of `reverse-complement' with `palindromic'.
Before proceeding, we introduce some essential graph-theoretic tools.

\begin{definition}
  Let $ G = (V, E) $ be a graph, where $ V $ is the vertex set and $ E $ is the edge set. For a vertex $ \bm{v} \in V $, let $ d(\bm{v}) $ denote its degree, i.e., the number of vertices adjacent to $ \bm{v} $. An \emph{independent set} in $ G $ is a subset of $ V $ such that no two distinct vertices are connected by an edge. The \emph{independence number} of $ G $, denoted by $ \alpha(G) $, is the cardinality of the largest independent set in $ G $.
\end{definition}

\begin{lemma}\cite[Page 100, Theorem 1]{Alon}\label{lem:Alon}
  For any graph $ G = (V, E) $, the independence number satisfies 
  \[
  \alpha(G) \geq \sum_{\bm{v} \in V} \frac{1}{d(\bm{v}) + 1}.
  \]
\end{lemma}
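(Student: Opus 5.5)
The statement is the Caro--Wei bound, cited from Alon--Spencer. I would prove it by the standard probabilistic argument: pick a uniformly random ordering $\sigma$ of $V$, and let $I_\sigma$ be the set of vertices $\bm{v}$ that come before every one of their neighbours in $\sigma$. This set is always independent, its expected size is $\sum_{\bm{v}} 1/(d(\bm{v})+1)$, and some ordering therefore does at least as well as the expectation.

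\emph{Step 1 (independence).} Suppose $\bm{u},\bm{v}\in I_\sigma$ were adjacent. Each would then precede the other in $\sigma$, which is impossible. So $I_\sigma$ is independent and $\alpha(G)\ge |I_\sigma|$ for every $\sigma$.

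\emph{Step 2 (expectation).} Write $|I_\sigma|=\sum_{\bm{v}\in V}\mathbbm{1}[\bm{v}\in I_\sigma]$. Fix $\bm{v}$ and look at the closed neighbourhood $\{\bm{v}\}\cup N(\bm{v})$, which has exactly $d(\bm{v})+1$ vertices. The relative order that a uniform $\sigma$ induces on this set is uniform, so each of its vertices is equally likely to come first. Hence $\Pr[\bm{v}\in I_\sigma]=1/(d(\bm{v})+1)$. By linearity of expectation,
\[
\mathbb{E}|I_\sigma|=\sum_{\bm{v}\in V}\frac{1}{d(\bm{v})+1}.
\]

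\emph{Step 3 (conclusion).} Some ordering $\sigma$ satisfies $|I_\sigma|\ge\mathbb{E}|I_\sigma|$. Combining this with Step 1 gives the claimed lower bound on $\alpha(G)$.

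There is no real obstacle. The only point that needs care is Step 2: we need the fact that the restriction of a uniform permutation to a fixed subset induces a uniform ordering of that subset. We also use that $G$ is simple, so $d(\bm{v})$ counts distinct neighbours and the closed neighbourhood has size exactly $d(\bm{v})+1$; this matches the paper's definition of degree as the number of adjacent vertices. A deterministic alternative is to run greedy removal of minimum-degree vertices and argue by induction, but the probabilistic route is the shortest.
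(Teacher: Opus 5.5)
Your proposal is correct: it is the standard random-ordering proof of the Caro--Wei bound, which is the argument in the cited Alon--Spencer theorem. The paper gives no proof of its own and simply cites that result, so your approach coincides with the source it relies on.
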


We now consider the graph $G^* = (V, E)$, where $ V $ is a subset of $ \Sigma_q^n $. Two distinct vertices $ \bm{x}, \bm{z} \in V $ are adjacent, i.e., $ (\bm{x}, \bm{z}) \in E $, if and only if there exists some $ k \geq 1 $ such that $ RC_k^1(\bm{x}) \cap RC_k^1(\bm{z}) \neq \varnothing $. In this context, a code $ \mathcal{C} \subseteq V $ can correct a reverse-complement duplication of arbitrary length if and only if $ \mathcal{C} $ forms an independent set in $ G^* $. Consequently, the maximum size of such a code $ \mathcal{C} $ is equal to $ \alpha(G^*) $.

To establish a lower bound on $ \alpha(G^*) $, we apply Lemma~\ref{lem:Alon}. This requires deriving an upper bound on the degree $ d(\bm{x}) $ for each vertex $ \bm{x} \in V $. To this end, we define
\[
N_{k}(\bm{x}; V) = \left\{ \bm{z} \in V: \bm{z} \neq \bm{x} \text{ and } RC_k^1(\bm{z}) \cap RC_k^1(\bm{x}) \neq \varnothing \right\}.
\]

\begin{lemma}\label{lem:N}
  For any $ V \subseteq \Sigma_q^n $ and $ \bm{x} \in V $, it holds that
  \[
  \#N_{k}(\bm{x}; V) \leq n(n - k + 1) - 1.
  \]
\end{lemma}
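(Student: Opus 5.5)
Writing $\bm{x}=\bm{u}\bm{v}\bm{w}$ as in the definition of $RC_{k,i}$, the plan is a double-counting argument through the common descendants. Every $\bm{z}\in N_k(\bm{x};V)$ satisfies $RC_k^1(\bm{z})\cap RC_k^1(\bm{x})\neq\varnothing$, so $\bm{z}$ can be recovered by deleting a length-$k$ reverse-complement block from some string $\bm{y}\in RC_k^1(\bm{x})$. Hence
\[
N_k(\bm{x};V)\cup\{\bm{x}\}\subseteq \bigcup_{\bm{y}\in RC_k^1(\bm{x})} D_k(\bm{y}),
\]
where $D_k(\bm{y})$ denotes the set of strings $\bm{z}\in\Sigma_q^n$ with $\bm{y}\in RC_k^1(\bm{z})$. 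It therefore suffices to bound $\#RC_k^1(\bm{x})$ and $\#D_k(\bm{y})$ separately, and then check that the trivial member $\bm{x}$ accounts for the $-1$.

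\textbf{Step 1.} Since the position satisfies $i\in[1,n-k+1]$, we have $\#RC_k^1(\bm{x})\le n-k+1$.

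\textbf{Step 2.} Fix $\bm{y}$ of length $n+k$. A string $\bm{z}$ with $\bm{y}=RC_{k,j}(\bm{z})$ is uniquely determined by $j$: it equals $\bm{y}_{[1,j+k-1]}\bm{y}_{[j+2k,n+k]}$, and it is a valid preimage only when $\bm{y}_{[j+k,j+2k-1]}=\bm{y}_{[j,j+k-1]}^{RC}$. As $j\in[1,n-k+1]$, this gives $\#D_k(\bm{y})\le n-k+1$.

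This yields only $\#N_k(\bm{x};V)\le (n-k+1)^2-1$ at best, which is in fact stronger than the claimed bound $n(n-k+1)-1$ since $n-k+1\le n$. The stated bound appears deliberately loose, so I will aim for $(n-k+1)^2-1\le n(n-k+1)-1$ and conclude.

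\textbf{Step 3 (the $-1$).} This is the main obstacle, since the union may repeat $\bm{x}$ rather than include it once. Every $\bm{y}\in RC_k^1(\bm{x})$ has $\bm{x}\in D_k(\bm{y})$. Therefore
\[
\#\Bigl(\bigcup_{\bm{y}}D_k(\bm{y})\setminus\{\bm{x}\}\Bigr)\le \sum_{\bm{y}}\bigl(\#D_k(\bm{y})-1\bigr)\le (n-k+1)(n-k).
\]
This is at most $n(n-k+1)-1$ whenever $n-k+1\ge1$, which always holds because $k\le n$.

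In this form, with the indices of Steps 1 and 2 bounded by $n-k+1$, no further case analysis is needed.
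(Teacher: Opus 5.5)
Your proof is correct and follows the same route as the paper: both cover $N_k(\bm{x};V)\cup\{\bm{x}\}$ by applying one of at most $n-k+1$ duplications to $\bm{x}$ and then one deduplication. The only difference is that the paper bounds the deduplication positions by $n$ where you use the sharper $n-k+1$. Your Step~3 is valid but not needed: since $\bm{x}\notin N_k(\bm{x};V)$ but $\bm{x}$ does lie in the union, $\#N_k(\bm{x};V)+1\le\sum_{\bm{y}}\#D_k(\bm{y})$ already gives the $-1$, which is exactly how the paper obtains it.
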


\begin{proof}
  Each vertex $ \bm{z} \in N_{k}(\bm{x}; V) \cup \{ \bm{x} \} $ can be derived from $ \bm{x} $ through two operations:
  \begin{enumerate}
    \item Perform a reverse-complement duplication of length $ k $ to obtain a string $ \bm{y} $ of length $ n + k $. The number of ways to choose the starting position of the duplication is at most $ n - k + 1 $, so the number of such $ \bm{y} $ is upper bounded by $ n - k + 1 $.
    \item Perform a reverse-complement deduplication of length $ k $ to obtain a string $ \bm{z} $ of length $ n $. The number of ways to choose the starting position of the deduplication is at most $ n $, so the number of such $ \bm{z} $ is upper bounded by $ n(n - k + 1) $.
  \end{enumerate}
  Therefore, $ \#N_{k}(\bm{x}; V) \leq n(n - k + 1) - 1 $. This completes the proof.
\end{proof}

Using Lemma~\ref{lem:N}, we compute
\[
d(\bm{x}) \leq \sum_{k=1}^n \#N_{k}(\bm{x}; V) \leq \sum_{k=1}^n \big( n(n - k + 1) - 1 \big) \leq n^3 - 1.
\]
By Lemma~\ref{lem:Alon}, this gives
\[
\alpha(G^*) \geq \frac{\#V}{n^3}.
\]
Setting $ V = \Sigma_q^n $ (i.e., all possible strings), we obtain the following result.

\begin{lemma}
  There exists a code capable of correcting a reverse-complement duplication of arbitrary length with at most $ 3 \log_q n $ redundant symbols.
\end{lemma}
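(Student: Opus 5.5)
The plan is to read this off the graph-theoretic setup just established: take $V=\Sigma_q^n$ in the graph $G^*$, bound every degree uniformly, and apply Lemma~\ref{lem:Alon}. A code corrects a reverse-complement duplication of arbitrary length exactly when it is an independent set in $G^*$. So it suffices to show $\alpha(G^*)\ge q^n/n^3$. The redundancy of a largest independent set is then $n-\log_q\alpha(G^*)\le n-\log_q(q^n/n^3)=3\log_q n$.

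First we bound the degree. For fixed $\bm{x}$, each neighbour $\bm{z}$ lies in $N_k(\bm{x};\Sigma_q^n)$ for some $k$. Since a duplication of length $k$ inside a length-$n$ string requires $1\le k\le n$, we have
\[
d(\bm{x})\le \sum_{k=1}^{n}\#N_k(\bm{x};\Sigma_q^n).
\]
Lemma~\ref{lem:N} bounds each term by $n(n-k+1)-1$. The sum is therefore at most $n\cdot\frac{n(n+1)}{2}-n$, and this is at most $n^3-1$ for all $n\ge1$, a routine check.

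Next we apply Lemma~\ref{lem:Alon}. Every vertex has $d(\bm{x})+1\le n^3$, so
\[
\alpha(G^*)\ge \sum_{\bm{x}\in\Sigma_q^n}\frac{1}{d(\bm{x})+1}\ge \frac{q^n}{n^3}.
\]
Take $\mathcal{C}$ to be a maximum independent set. Then $\log_q|\mathcal{C}|\ge n-3\log_q n$, which gives redundancy at most $3\log_q n$.

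The only point needing care is justifying Lemma~\ref{lem:N}'s count, in particular that every common descendant arises through the two-step duplicate-then-deduplicate process. That lemma is already established, so no step is a real obstacle. If a cleaner constant were wanted, one could instead keep the exact sum $\tfrac{n^2(n+1)}{2}$. That would improve the bound only by an additive constant, so I would not pursue it here.
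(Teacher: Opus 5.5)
Your proposal is correct and matches the paper's own argument: take $V=\Sigma_q^n$, bound $d(\bm{x})\le\sum_{k=1}^{n}\big(n(n-k+1)-1\big)\le n^3-1$ using Lemma~\ref{lem:N}, and apply Lemma~\ref{lem:Alon} to get $\alpha(G^*)\ge q^n/n^3$, which gives redundancy at most $3\log_q n$. The one thing you add is the explicit check that the degree sum is at most $n^3-1$, which the paper leaves implicit.
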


The following theorem provides an improved bound by selecting $ V $ as the set of all $ m $-RCD roots.

\begin{theorem}\label{thm:GV}
  There exists a code capable of correcting a reverse-complement duplication of arbitrary length with at most $ 2\log_q n + \log_q\log_q n + O(1) $ redundant symbols, i.e., with asymptotic rate $1$.
\end{theorem}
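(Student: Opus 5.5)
Take $m=\lceil \log_q n\rceil+1$ and let $V\subseteq\Sigma_q^n$ be the set of all $m$-RCD roots. By Lemma~\ref{lem:number}, $\#V\ge (q-1)q^{n-1}$. We then apply Lemma~\ref{lem:Alon} to the graph $G^*$ on $V$, which requires a uniform bound $d(\bm{x})\le D$ for every $\bm{x}\in V$. This gives $\alpha(G^*)\ge \#V/(D+1)$, so the redundancy is at most $\log_q (D+1)+1$. It therefore suffices to prove $D=O(n^2\log_q n)$.

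We bound $d(\bm{x})\le \sum_{k\ge 1}\#N_k(\bm{x};V)$ and split the sum at $k_0=3m-3$.

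\emph{Short duplications, $k<k_0$.} Lemma~\ref{lem:N} gives $\#N_k(\bm{x};V)\le n(n-k+1)-1\le n^2$. Since there are fewer than $3m$ values of $k$, their total contribution is at most $3mn^2=O(n^2\log_q n)$.

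\emph{Long duplications, $k\ge k_0$.} By Lemma~\ref{lem:dedup}, both $\bm{x}$ and any $\bm{z}\in V$ are $m$-RCD roots. If $\bm{y}\in RC_k^1(\bm{x})\cap RC_k^1(\bm{z})$, the decoding rule of Lemma~\ref{lem:dedup} depends only on $\bm{y}$, $m$ and $k$: it locates the smallest $p$ with $\bm{y}_{[p+3m-3,p+6m-7]}=\bm{y}_{[p,p+3m-4]}^{RC}$ and deletes the determined block. Applied to $\bm{y}$, it must return both $\bm{x}$ and $\bm{z}$, so $\bm{z}=\bm{x}$. Hence $N_k(\bm{x};V)=\varnothing$ for all $k\ge 3m-3$, and these lengths contribute nothing. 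This is exactly where restricting $V$ to RCD roots removes the extra factor of $n$ that appeared in the $3\log_q n$ bound.

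Combining the two ranges gives $d(\bm{x})\le 3mn^2$. Therefore
\[
\alpha(G^*)\ge \frac{(q-1)q^{n-1}}{3mn^2+1},
\]
and the redundancy is at most $2\log_q n+\log_q m+O(1)=2\log_q n+\log_q\log_q n+O(1)$.

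The step that needs care is the long-range claim. We must check that Lemma~\ref{lem:dedup} applies to every $k\ge 3m-3$, including $k$ close to $n$, and that its recovery of $\bm{x}$ really uses nothing but $\bm{y}$ and $k$. Its proof relies only on $\bm{x}$ being an RCD root, so uniqueness follows. Everything else is arithmetic.
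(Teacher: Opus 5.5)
Your proposal is correct and follows the paper's proof essentially step for step: the same choice of $V$ as the $m$-RCD roots with $m=\lceil\log_q n\rceil+1$, Lemma~\ref{lem:N} for the fewer than $3m$ short lengths $k<3m-3$, the fact that the Lemma~\ref{lem:dedup}/Theorem~\ref{thm:rcd} decoder is determined by $\bm{y}$ and $k$ alone to get $N_k(\bm{x};V)=\varnothing$ for $k\ge 3m-3$, and then Lemma~\ref{lem:Alon}. One cosmetic slip: that $\bm{x}$ and $\bm{z}$ are $m$-RCD roots holds by the definition of $V$, not by Lemma~\ref{lem:dedup}.
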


\begin{proof}
  Let $ G^* = (V, E) $ be defined above. Let $ V $ be the set of $ m $-RCD roots with $ m = \lceil \log_q n \rceil + 1 $. By Lemma~\ref{lem:SSA}, we have $ \#V \geq (q - 1) q^{n - 1} $. Furthermore, by Theorem~\ref{thm:rcd}, we have $ \#N_{k}(\bm{x}; V) = 0 $ for $ k \geq 3\lceil \log_q n \rceil $.
  As a result, by Lemma~\ref{lem:N}, we can compute
  \[
  d(\boldsymbol{x}) \leq \sum_{k=1}^{3\lceil \log_q n \rceil - 1} \#N_{k}(\bm{x}; V) \leq 3n^2\lceil \log_q n \rceil - 1.
  \]
  Finally, by Lemma~\ref{lem:Alon}, we obtain
  \[
  \alpha(G^*) \geq \frac{(q - 1) q^{n - 1}}{3n^2\lceil \log_q n \rceil} = O\left(\frac{q^n}{n^2\log_q n}\right).
  \]
  Then the conclusion follows.
\end{proof}

\begin{remark}
  In this remark, we outline a potential approach to construct codes capable of correcting a reverse-complement duplication of arbitrary length with redundancy superior to the Gilbert-Varshamov bound derived in Theorem~\ref{thm:GV}.

  Building upon the work of Bitar, Hanna, Polyanskii, and Vorobyev \cite{Bitar-21-ISIT}, who constructed a binary code $ \mathcal{C}_1 \subseteq \Sigma_2^n $ capable of correcting a burst-insertion of length at most $ k = O\left( \frac{n}{(\log_2 n)^2} \right) $ with redundancy $ \log_2 n + O\left( k (\log_2 (k\log_2 n))^2 \right) $, we propose the following construction.
  
  Let $ \mathcal{C}_2 \subseteq \Sigma_2^n $ be the set of $ m $-RCD roots with $ m \leq \frac{k + 4}{3} $, by Theorem~\ref{thm:rcd}, $\mathcal{C}_2$ can correct a reverse-complement duplication of length at least $k+1$. 
  Then the code $ \mathcal{C} = \mathcal{C}_1 \cap \mathcal{C}_2 $ can correct a reverse-complement duplication of arbitrary length. By judiciously selecting the parameter $k$, it is possible for the redundancy of $ \mathcal{C} $ to be at most $ \log_2 n + o(\log_2 n) $. 
  This construction can be generalized to $q$-ary alphabets, as $q$-ary symbols can be viewed as binary strings of length $\lceil \log_2 q \rceil$. 
\end{remark}

\section{Correcting Length-One Reverse-Complement Duplications}\label{sec:short}

In this section, we study codes for correcting short reverse-complement duplications.
Ben-Tolila and Schwartz \cite{Ben-Tolila-22-IT} showed that when the duplication length $k$ is odd, any binary code capable of correcting a burst-insertion of length $k$ can be utilized through the complement preserving mapping to construct $q$-ary codes that correct a reverse-complement duplication of length $k$. 

\begin{definition}
  Let $\beta: \Sigma_q\rightarrow \Sigma_2$ be a mapping. We say that $\beta$ is \emph{complement-preserving} if $\beta(\overline{a})= \overline{\beta(a)}$ for $a\in \Sigma_q$, i.e., $\beta$ and the complement operation commute.
  This complement-preserving mapping extends naturally to entire strings, i.e., for any $\bm{x}=x_1x_2\cdots x_n \in \Sigma_q^n$, we define $\bm{\beta}(\bm{x})= \beta(x_1) \beta(x_2)\cdots \beta(x_n)$.
\end{definition}

\begin{lemma}\cite[Theorem 17]{Ben-Tolila-22-IT}
    Let $\beta: \Sigma_q\rightarrow \Sigma_2$ be a complement-preserving mapping and $k$ be an odd integer. If $\mathcal{C}'\subseteq \Sigma_2^n$ can correct a burst-insertion of length $k$, then the code $\mathcal{C}\triangleq \big\{\boldsymbol{x}\in \Sigma_q^n:\bm{\beta}(\boldsymbol{x})\in \mathcal{C}' \big\}$ can correct a reverse-complement duplication of length $k$.
\end{lemma}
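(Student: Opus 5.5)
The plan is to argue by contrapositive: if two distinct codewords $\bm{x}, \bm{z} \in \mathcal{C}$ satisfy $RC_k^1(\bm{x}) \cap RC_k^1(\bm{z}) \neq \varnothing$, we produce two distinct images under $\bm{\beta}$ whose length-$k$ burst-insertion balls intersect. So suppose $\bm{y} = RC_{k,i}(\bm{x}) = RC_{k,j}(\bm{z})$. The first key step is to check that $\bm{\beta}$ commutes with reverse-complementation, i.e.\ $\bm{\beta}(\bm{v}^{RC}) = \bm{\beta}(\bm{v})^{RC}$ with binary complement. This is immediate from the complement-preserving property applied symbol by symbol. It gives
\[
\bm{\beta}(\bm{y}) = \bm{\beta}(\bm{u})\,\bm{\beta}(\bm{v})\,\bm{\beta}(\bm{v})^{RC}\,\bm{\beta}(\bm{w}),
\]
so $\bm{\beta}(\bm{y})$ arises from $\bm{\beta}(\bm{x})$ by inserting a burst of length $k$, and likewise from $\bm{\beta}(\bm{z})$. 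Hence $\bm{\beta}(\bm{x})$ and $\bm{\beta}(\bm{z})$, both in $\mathcal{C}'$, share a common burst-insertion supersequence. Since $\mathcal{C}'$ corrects one burst-insertion of length $k$, this forces $\bm{\beta}(\bm{x}) = \bm{\beta}(\bm{z})$.

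The main obstacle is the remaining case $\bm{\beta}(\bm{x}) = \bm{\beta}(\bm{z})$ with $\bm{x} \neq \bm{z}$; this is where oddness of $k$ must enter. We know $\bm{x}$ and $\bm{z}$ are both obtained from the same $\bm{y}$ by deleting a window of length $k$, say $\bm{y}_{[i+k,i+2k-1]}$ and $\bm{y}_{[j+k,j+2k-1]}$, with WLOG $i<j$. Deleting a length-$k$ window at offset $a=i+k$ versus $b=j+k$ yields equal strings exactly when $\bm{y}_{[a,b+k-1]}$ has period $k$. Equivalently, $\bm{x}$ and $\bm{z}$ agree outside $[a,b-1]$, and on that range $\bm{x}$ reads $\bm{y}$ shifted by $k$ while $\bm{z}$ reads $\bm{y}$ unshifted. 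So I will analyze where $\bm{x}$ and $\bm{z}$ can differ and show the $\bm{\beta}$-images cannot then coincide.

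To do this, I would exploit the structure $\bm{y}_{[i+k,i+2k-1]} = \bm{y}_{[i,i+k-1]}^{RC}$ and its analogue for $j$, and reduce to the binary string $\bm{b} = \bm{\beta}(\bm{y})$. By the first step, $\bm{b}$ contains two binary reverse-complement duplications of length $k$, and equality of $\bm{\beta}(\bm{x})$ and $\bm{\beta}(\bm{z})$ forces $\bm{b}_{[a,b+k-1]}$ to be $k$-periodic. In the overlapping region, this periodicity combined with the reverse-complement relation produces, after chasing indices (mirror about the centre of the first duplication, then shift by $k$), a position $p$ with $b_p = \overline{b_{p'}}$ and $b_p = b_{p'}$. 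In the simplest configuration $p'$ is the mirror image of $p$ about the center of a length-$2k$ RC window; when $k$ is odd this center falls between symbols, so the contradiction is obtained from the binary complement having no fixed point. Concretely, the central pair $b_{c}, b_{c+1}$ of the window must be complementary. Periodicity with odd period $k$ then pairs them in a way that forces equality.

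If the chase becomes messy in general position ($b-a$ arbitrary relative to $k$), I would fall back on reducing to the case $b-a<k$. This can be done via the periodic structure, iterating the shift by $k$, and it mirrors the $\bm{u}^{(t)}$ / $\bm{v}^{(t)}$ chaining argument used in the proof of Lemma~\ref{lem:SSA}. I then expect a parity argument on the length of the reflected segment to give a symbol equal to its own complement, the contradiction.
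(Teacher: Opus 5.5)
The paper quotes this result from Ben-Tolila and Schwartz without proof, so your argument has to stand on its own.

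Your first step is correct. Since $\bm{\beta}(\bm{v}^{RC})=\bm{\beta}(\bm{v})^{RC}$, the string $\bm{\beta}(\bm{y})$ is a length-$k$ burst-insertion of both $\bm{\beta}(\bm{x})$ and $\bm{\beta}(\bm{z})$, so $\bm{\beta}(\bm{x})=\bm{\beta}(\bm{z})$. With $a=i+k<b=j+k$ and $B=\bm{\beta}(\bm{y})$, this is equivalent to $B_{s+k}=B_s$ for $s\in[a,b-1]$. Everything that makes the lemma true lies in showing that this periodicity is incompatible with $B_{a+t}=\overline{B_{a-1-t}}$ and $B_{b+t}=\overline{B_{b-1-t}}$ for $t\in[0,k-1]$. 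Here the proposal has a genuine gap.

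The ``concrete'' mechanism you describe is not right. The middle pair of any window $\bm{v}\bm{v}^{RC}$ is complementary for every $k$, and the centre of a length-$2k$ window always falls between symbols. Also, $k$-periodicity relates symbols $k$ apart, not adjacent ones. The pair that actually works is $p=b-\frac{k+1}{2}$ and $p+k=b+\frac{k-1}{2}$: mirror images about the centre of the \emph{second} window at distance exactly $k$. This is where oddness of $k$ enters. It gives $B_p=B_{p+k}=\overline{B_p}$ only when $p\ge a$, i.e.\ $b-a\ge\frac{k+1}{2}$. The analogous pair for the first window lies left of $a$, outside the range where periodicity is known.

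The remaining case $1\le d:=b-a\le\frac{k-1}{2}$, where the two windows overlap heavily, is the real content, and for it you offer only the expectation of ``a parity argument''. Your fallback also points the wrong way. Large $b-a$ is the easy case, and ``iterating the shift by $k$'' is impossible, because $B_{s+k}=B_s$ is available only for $s\in[a,b-1]$, an interval of length $d<k$.

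The missing idea is to compose the two reflections $s\mapsto 2a-1-s$ and $s\mapsto 2b-1-s$. This gives $B_{s+2d}=B_s$ throughout $[a-k,b+k-1]$. So, with $\bm{w}=B_{[a,b-1]}$, that part of $B$ is a factor of $\cdots\bm{w}^{RC}\bm{w}\,\bm{w}^{RC}\bm{w}\cdots$ with $\bm{w}$ starting at $a$. Write $k=2dm+r$ with $r\in[1,2d-1]$, which is odd. The condition $B_{[a+k,b+k-1]}=\bm{w}$ then says the length-$d$ factor at offset $r$ of $\bm{w}\bm{w}^{RC}\bm{w}\bm{w}^{RC}\cdots$ equals $\bm{w}$. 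This forces the length-$r$ suffix of $\bm{w}$ (if $r<d$), or its length-$(2d-r)$ prefix (if $r\ge d$), to equal its own reverse-complement. An odd-length reverse-complement palindrome would have a middle symbol equal to its complement, which is impossible. Without this, or an equivalent argument for small $b-a$, the proof is incomplete.
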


A natural question is whether this conclusion  extends to even $k$ and to scenarios with multiple reverse-complement duplications. Unfortunately, the answer is negative.

\begin{example}
  Let $n=6$, $q=4$, $\overline{0}\triangleq 1$, and $\overline{2}=3$. Let $\beta: \Sigma_q \rightarrow  \Sigma_2$ be a complement-preserving mapping such that $\beta(0)=\beta(2)=0$ and $\beta(1)=\beta(3)=1$.
  Define $\mathcal{C}'=\{010101\}\subseteq \Sigma_2^n$ and $\mathcal{C}=\{232301,230101\}\subseteq \Sigma_q^n$. Since $\mathcal{C}'$ contains a single codeword, it trivially corrects $t$ $k$-reverse-complement-duplications for $t, k\geq 1$. Moreover, since $\bm{\beta}(232301)=\bm{\beta}(230101)=010101$, we have $\mathcal{C}\subseteq \big\{\boldsymbol{x}\in \Sigma_q^n:\bm{\beta}(\boldsymbol{x})\in \mathcal{C}' \big\}$. We will show that $\mathcal{C}$ cannot correct a length-two reverse-complement duplication, nor two length-one reverse-complement duplications. 
  \begin{itemize}
    \item If $k=2$ and $t=1$, we have $2323\underline{01}\rightarrow 23230101$ and $\underline{23}0101\rightarrow 23230101$, where the underlined substring denotes the duplicated portion. This implies that both $232301$ and $230101$ can evolve into $23230101$ via a length-two reverse-complement duplication.
    Therefore, $\mathcal{C}$ can not correct a length-two reverse-complement duplication.
    \item If $k=1$ and $t=2$, we have $23230\underline{1}\rightarrow 232301\underline{0} \rightarrow 23230101$ and $2\underline{3}0101\rightarrow 23\underline{2}0101\rightarrow 23230101$. This implies that both $232301$ and $230101$ can evolve into $23230101$ via two length-one reverse-complement duplications. Therefore, $\mathcal{C}$ can not correct two length-one reverse-complement duplications.
  \end{itemize}
\end{example}

This motivates developing new tools to handle a reverse-complement duplication of even length or multiple reverse-complement duplications. In what follows, we focus on multiple reverse-complement duplications with the restriction that each duplication length is one.
We begin by examining the similarities and distinctions between  a length-one reverse-complement duplication and an insertion.
Let $\bm{x}\in \Sigma_q^n$.
Recall that $RC_1^1(\bm{x})$ denotes the $1$-reverse-complement ball centered at $\bm{x}$ of radius one. Let $I_1(\boldsymbol{x})$ denote the $1$-insertion ball of $\boldsymbol{x}$, i.e., the set of strings obtainable from $\boldsymbol{x}$ by exactly one insertion.

\begin{lemma}\label{lem:distinct}
  For any $\bm{x}\in \Sigma_q^n$, it holds that 
  \begin{gather*}
    RC_1^1(\bm{x})=\{\bm{x}_{[1,i]}\overline{x}_i\bm{x}_{[i+1,n]}:i\in [1,n]\},\\
    I_1(\bm{x})=\{a\bm{x}:a\in \Sigma_q\} \cup \{\bm{x}_{[1,i]}a\bm{x}_{[i+1,n]}:i\in [1,n],a\in \Sigma_q,a\neq x_i\}.
  \end{gather*}
\end{lemma}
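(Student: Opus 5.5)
The first identity follows straight from the definitions, and the second is a normalization of insertion positions; I will handle them separately.

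For $RC_1^1(\bm{x})$, I apply the definition of $RC_{k,i}$ with $k=1$. Writing $\bm{x}=\bm{u}\bm{v}\bm{w}$ with $|\bm{u}|=i-1$ and $\bm{v}=x_i$, we get $\bm{v}^{RC}=\overline{x_i}$, since reversing a single symbol does nothing. Hence $RC_{1,i}(\bm{x})=\bm{x}_{[1,i]}\overline{x_i}\bm{x}_{[i+1,n]}$. The radius-one ball consists exactly of these strings as $i$ ranges over $[1,n-1+1]=[1,n]$, which is the claimed set.

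For $I_1(\bm{x})$, every string obtained by one insertion has the form $\bm{x}_{[1,i]}a\bm{x}_{[i+1,n]}$ with $i\in[0,n]$ and $a\in\Sigma_q$, where $i=0$ means $a\bm{x}$. The right-hand side is clearly contained in $I_1(\bm{x})$, so only the reverse inclusion needs an argument.

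Take $\bm{y}=\bm{x}_{[1,i]}a\bm{x}_{[i+1,n]}$ and move the insertion point to the left end of the run containing $a$. If $i\ge1$ and $a=x_i$, then $\bm{y}=\bm{x}_{[1,i-1]}a\bm{x}_{[i,n]}$, so the same string arises from inserting at position $i-1$. Repeating this lowers $i$ until either $i=0$, giving a string $a\bm{x}$, or $a\neq x_i$, giving a string in the second set. This is a finite descent on $i$.

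The only real obstacle is bookkeeping: checking the index conventions at the boundary cases $i=0$ and $i=n$. Once the descent is written down, both inclusions are immediate.
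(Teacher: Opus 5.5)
Your proof is correct and matches the paper's. The first identity is read off from the definition with $k=1$. For the second, an insertion of $a=x_i$ is shifted left past the block of equal symbols; the paper does this in one step, taking the largest $p<i$ with $x_p\neq x_i$ (or landing at $x_i\bm{x}$ if no such $p$ exists), where you use an iterative descent. Note that your ``run containing $a$'' must mean a block of identical symbols, not the paper's runs in $a^{\oplus}$.
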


\begin{proof}
    By definition, we have $RC_1^1(\bm{x})=\{\bm{x}_{[1,i]}\overline{x}_i\bm{x}_{[i+1,n]}:i\in [1,n]\}$ and $I_1(\bm{x})=\{a\bm{x}:a\in \Sigma_q\} \cup \{\bm{x}_{[1,i]}a\bm{x}_{[i+1,n]}:i\in [1,n],a\in \Sigma_q\}$.
    Consider the sequence $\bm{x}_{[1,i]}x_i\bm{x}_{[i+1,n]}$ for some $i\in [1,n]$, 
    \begin{itemize}
        \item if $x_j=x_i$ for $j\in [1,i]$, then $\bm{x}_{[1,i]}x_i\bm{x}_{[i+1,n]}=x_i\bm{x}\in \{a\bm{x}:a\in \Sigma_q\}$;
        \item if $x_j\neq x_i$ for some $j\in [1,i]$, let $p\in [1,i-1]$ be the largest index such that $x_p\neq x_i$, then $\bm{x}_{[1,i]}x_i\bm{x}_{[i+1,n]}=\bm{x}_{[1,p]}x_i\bm{x}_{[p+1,n]}\in \{\bm{x}_{[1,p]}a\bm{x}_{[p+1,n]}:a\in \Sigma_q,a\neq x_p\}$.
    \end{itemize}
    Therefore, we have $I_1(\bm{x})=\{a\bm{x}:a\in \Sigma_q\} \cup \{\bm{x}_{[1,i]}a\bm{x}_{[i+1,n]}:i\in [1,n],a\in \Sigma_q,a\neq x_i\}$. This completes the proof.
\end{proof}

Lemma \ref{lem:distinct} suggests that, in the binary case, there is virtually no difference between length-one reverse-complement duplications and insertions, whereas the difference becomes significant in the non-binary case. Therefore, it is difficult to develop binary codes that outperform general insertion-correcting codes \cite{Guruswami-21-IT,Levenshtein-66,Li-23-ISIT,Sima-20-ISIT,Song-22-IT,Sun-24-IT} for correcting length-one reverse-complement duplications, and we move on to non-binary alphabets.
In the rest of this section, we assume that $q\geq 4$ is a fixed even integer and define the complement of $a\in \Sigma_q^n$ as 
\begin{equation*}
    \overline{a}=
    \begin{cases}
        2\lfloor a/2 \rfloor+1, &\mbox{if } 2|a;\\
        2\lfloor a/2 \rfloor, &\mbox{if } 2\nmid a.
    \end{cases}
\end{equation*}
In other words, let $i=\lfloor a/2 \rfloor$, then $\{a,\overline{a}\}=\{2i,2i+1\}$.
Clearly, this complement operation is well-defined.

\begin{definition}
    For any $a\in \Sigma_q$, let 
    \[
        a^{\oplus}= a\{a,\overline{a}\}^{\ast}
    \]
    denote the set of all strings that start with $a$ and are followed by any finite length string composed of $a$ and $\overline{a}$ only. In a string $\bm{x}\in \Sigma_q^n$, a substring $\bm{x}_{[i,j]}$, where $1\leq i\leq j\leq n$, is called a \emph{run}\footnote{In the traditional definition (e.g., as in \cite{Schoeny-17-IT}), a run is a substring consisting of a single symbol with maximal length.} if $\bm{x}_{[i,j]}\in x_i^{\oplus}$ and it has maximal length, i.e., $x_{i-1}\not\in \{x_i,\overline{x_i}\}$ (when $i>1$) and $x_{j+1}\not\in \{x_i,\overline{x_i}\}$ (when $j<n$). 
\end{definition}

\begin{definition}
    For any $\bm{x}\in \Sigma_q^n$, let $r(\bm{x})$ be the number of runs in $\bm{x}$. For $i\in [1,r(\bm{x})]$, let $\bm{\sigma}(\bm{x},i)$ be the $i$-th run of $\bm{x}$ and $\sigma(\bm{x})_i=\sigma(\bm{x},i)_1$ be the first entry of $\bm{\sigma}(\bm{x},i)$, we then define the \emph{signature} of $\bm{x}$ as 
    \[
        \bm{\sigma}(\bm{x})=\big(\sigma(\bm{x})_1, \sigma(\bm{x})_2, \cdots, \sigma(\bm{x})_{r(\bm{x})}\big).
    \]
\end{definition}

\begin{example}\label{ex:signature}
    Let $q=4$, then $\overline{0}=1$ and $\overline{2}=3$. Consider $\bm{x}=01123221001$, then 
    \begin{itemize}
        \item $r(\bm{x})=3$;
        \item $\bm{\sigma}(\bm{x},1)=011$, $\bm{\sigma}(\bm{x},2)=2322$, and $\bm{\sigma}(\bm{x},3)=1001$;
        \item $\sigma(\bm{x})_1=0$, $\sigma(\bm{x})_2=2$, and $\sigma(\bm{x})_3=1$;
        \item $\bm{\sigma}(\bm{x})=(0,2,1)$.
    \end{itemize}
\end{example}

The lemma below is a variation of \cite[Lemma 4]{Yohananov-25-DCC}. We provide a complete proof here for completeness.

\begin{lemma}\label{lem:transform}
    Let $t\geq 1$ be an integer. For any $\bm{x}\in \Sigma_q^n$ and $\bm{y}\in RC_1^t(\bm{x})$, then
    \begin{itemize}
        \item $\bm{\sigma}(\bm{x})=\bm{\sigma}(\bm{y})$, i.e., $r(\bm{x})=r(\bm{y})$ and  $\sigma(\bm{x})_i=\sigma(\bm{y})_i$ for $i\in [1,r(\bm{x})]$;
        \item there exist some non-negative integers $t_1,t_2,\ldots,t_{r(\bm{x})}$ with $t_1+t_2+\cdots+t_{r(\bm{x})}=t$ such that $\bm{\sigma}(\bm{y},i)\in RC_1^{t_i}\big(\bm{\sigma}(\bm{x},i)\big)$ for $i\in [1,r(\bm{x})]$.
    \end{itemize}
\end{lemma}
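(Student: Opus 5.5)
The plan is to prove the statement for $t=1$ first, and then obtain the general case by induction on $t$. Both bullet points are compatible with composition. Equality of signatures is transitive. If a run-wise decomposition holds for one step with counts $t_i'$, it also holds for the next step with counts $t_i''$, because the runs of the intermediate string correspond one-to-one and in order to the runs of $\bm{x}$. Adding the counts gives $t_i = t_i' + t_i''$. Here one uses that $RC_1^{a}$ followed by $RC_1^{b}$ lands in $RC_1^{a+b}$, which is immediate from the definition of the ball as iterated duplications.

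For the case $t=1$, Lemma \ref{lem:distinct} gives $\bm{y}=\bm{x}_{[1,j]}\overline{x_j}\bm{x}_{[j+1,n]}$ for some $j\in[1,n]$. Let $x_j$ lie in the $i$-th run $\bm{\sigma}(\bm{x},i)=\bm{x}_{[a,b]}$, so $a\le j\le b$. The inserted symbol $\overline{x_j}$ belongs to $\{x_a,\overline{x_a}\}$, since this set is closed under complement and contains $x_j$. We then claim that the runs of $\bm{y}$ are exactly the runs of $\bm{x}$, with the $i$-th run replaced by $\bm{x}_{[a,j]}\overline{x_j}\bm{x}_{[j+1,b]}$. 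To verify the claim we check three things.

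\begin{itemize}
\item The new block starts with $x_a$, and every symbol in it lies in $\{x_a,\overline{x_a}\}$, so it belongs to $x_a^{\oplus}$.
\item The block is maximal. Its left neighbour $x_{a-1}$ (if any) and its right neighbour $x_{b+1}$ (if any) are unchanged, so they still lie outside $\{x_a,\overline{x_a}\}$.
\item All other runs are untouched substrings with unchanged neighbours, so they remain runs.
\end{itemize}

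Hence $r(\bm{y})=r(\bm{x})$ and the first entries of the runs agree. This holds even when $j=a$, because the inserted symbol comes after $x_j$, so the first entry $x_a$ is preserved. Therefore $\bm{\sigma}(\bm{y})=\bm{\sigma}(\bm{x})$. Moreover, the new $i$-th run equals $RC_{1,j-a+1}\big(\bm{\sigma}(\bm{x},i)\big)$, so we may take $t_i=1$ and $t_{i'}=0$ for $i'\neq i$.

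The main obstacle is conceptual rather than technical. Because runs are defined through the complement pairs $\{2s,2s+1\}$ rather than through single symbols, the partition of a string into runs is canonical: it is determined by the sequence of pair indices $\lfloor x_\ell/2\rfloor$. The insertion only lengthens a maximal block of equal pair-index. Once this observation is stated, the maximality check above is routine, and the induction step only needs care in relabelling the runs of the intermediate string as those of $\bm{x}$.
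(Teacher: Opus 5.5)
Your proof is correct and takes essentially the same approach as the paper. The paper also writes $\bm{y}\in RC_1^t(\bm{x})$ as a chain of single length-one duplications, observes that each step lengthens exactly one run while keeping its first symbol and leaving the other runs untouched, and then chains these steps; you make the maximality check and the accumulation of the counts $t_i$ explicit, which the paper leaves implicit.
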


\begin{proof}
    Let $\bm{x}=\bm{y}^{(0)}$ and $\bm{y}^{(t)}=\bm{y}$.
    Since $\bm{y}\in RC_1^t(\bm{x})$, there exist $t-1$ intermediate sequences $\bm{y}^{(1)},\ldots, \bm{y}^{(t-1)}$ such that $\bm{y}^{(i)}\in RC_1^1\big(\bm{y}^{(i-1)}\big)$ for $i\in [1,t]$. 
    Assume $\bm{y}^{(i)}$ is obtained from $\bm{y}^{(i-1)}$ via a duplication at a position within the $i_j$-th run of $\bm{y}^{(i-1)}$, then the effect of the duplication is simply to extend the $i_j$-th run of $\bm{y}^{(i-1)}$ by one symbol, while leaving its first symbol unchanged and not affecting the other runs. This implies that
    \[
        \bm{\sigma}\big(\bm{y}^{(i)}\big)=\bm{\sigma}\big(\bm{y}^{(i-1)}\big), \quad \bm{\sigma}\big(\bm{y}^{(i)},i_j \big)\in RC_1^1\big(\bm{\sigma}(\bm{y}^{(i-1)},i_j) \big), \quad \text{and} \quad \bm{\sigma}\big(\bm{y}^{(i)},s \big)= \bm{\sigma}\big(\bm{y}^{(i-1)},s \big) \quad \text{for } s\neq i_j.
    \]
    Then the conclusion follows.
\end{proof}

If we record each run as a special symbol and apply Lemma \ref{lem:transform}, length-one reverse-complement duplications can be viewed as substitutions. As a result, we can leverage the tools from the Hamming metric to correct reverse-complement duplications. 

\subsection{Runlength-Limited Constraint}

Since some runs may have length $O(n)$, treating each run as a single symbol yields a vector defined over a large alphabet, which would require substantial redundancy to correct substitutions within it. To reduce redundancy, it is therefore desirable to bound the maximum run length so that the vector is defined over a smaller alphabet. To this end, we introduce the notion of runlength-limited strings.

\begin{definition}
    A string $\bm{x}\in \Sigma_q^n$ is termed an \emph{$m$-runlength-limited string} (or \emph{$m$-RLL string} for short) if each run of $\bm{x}$ has length at most $m$, i.e.,
    \[
        |\bm{\sigma}(\bm{x},i)|\leq m \quad \text{for } i\in [1,r(\bm{x})].
    \]  
\end{definition}

\begin{lemma}\label{lem:rll}
  Let $B(n,m)$ be the number of $m$-RLL strings in $\Sigma_q^n$. If $q\geq 4$ is even and $m \geq \lceil \log_{q/2} n \rceil+1$, then $B(n,m) \geq q^n-2q^{n-1}\geq \frac{q^n}{2}$.
\end{lemma}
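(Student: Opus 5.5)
We argue by a probabilistic union bound, in the same way as Lemma \ref{lem:number}. Choose $\bm{x}\in \Sigma_q^n$ uniformly at random. If $\bm{x}$ is not $m$-RLL, then some run has length at least $m+1$. In particular, there is a window $\bm{x}_{[i,i+m]}$ of length $m+1$, with $i\in[1,n-m]$, that lies inside a single run. Since a run starting at $x_i$ contains only the symbols $x_i$ and $\overline{x_i}$, this means $x_j\in\{x_i,\overline{x_i}\}$ for every $j\in[i+1,i+m]$. It therefore suffices to bound the probability that such a window exists.

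Fix $i$. Given $x_i$, each of the $m$ subsequent symbols independently lies in the two-element set $\{x_i,\overline{x_i}\}$ with probability $2/q$. The event for window $i$ thus has probability exactly $(2/q)^m=(q/2)^{-m}$. Since $m\ge \lceil\log_{q/2} n\rceil+1$ and $q\ge 4$ gives $q/2\ge 2$, we get $(q/2)^{m}\ge n\cdot (q/2)$, so this probability is at most $\frac{2}{qn}$.

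Taking the union over the at most $n-m\le n$ choices of $i$, the probability that $\bm{x}$ is not $m$-RLL is at most $n\cdot\frac{2}{qn}=\frac{2}{q}$. Hence
\[
B(n,m)\ge q^n\Big(1-\frac{2}{q}\Big)=q^n-2q^{n-1}.
\]
Finally, $q\ge 4$ gives $2q^{n-1}\le q^n/2$, so $q^n-2q^{n-1}\ge q^n/2$.

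The only point needing care is the reduction in the first paragraph. A run with at least $m+1$ symbols contains a length-$(m+1)$ window that begins at the run's first symbol $x_i$, with all later symbols of the window in $\{x_i,\overline{x_i}\}$. We cover this case using only windows that start at position $i$, whose membership is determined relative to $x_i$. We may over-count windows that are not aligned with the run start; this is harmless, since the union bound only needs every non-RLL string to be covered by at least one event.
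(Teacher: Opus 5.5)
Your proof is correct and matches the paper's argument: the paper also uses a union bound over the $n-m$ windows, with per-window probability $(2/q)^m=(q/2)^{-m}\le \frac{2}{qn}$ for the event $\bm{x}_{[i,i+m]}\in x_i^{\oplus}$, which gives failure probability at most $2/q$. Your remark that the window events only need to cover every non-RLL string, not coincide with maximal runs, states this step more carefully than the paper does, and you also justify the final inequality $q^n-2q^{n-1}\ge q^n/2$ from $q\ge 4$, which the paper leaves implicit.
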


\begin{proof}
    We prove the lemma by a probabilistic analysis. 
    Choose a string $\bm{x}\in \Sigma_q^n$ uniformly at random, let $\mathbb{P}$ be the probability that $\bm{x}$ does not satisfy the $m$-RLL constraint, i.e., there exists some $i\in [1,n-m]$ such that $\bm{x}_{[i,i+m]}$ is a run. We will show that $\mathbb{P}\leq \frac{1}{2}$.
    
    For $i\in [1,n-m]$, the probability that $\bm{x}_{[i,i+m]}$ is a run can be calculated as $\frac{q\cdot 2^m}{q^{m+1}}= \frac{1}{(q/2)^m}\leq \frac{2}{qn}$. By the union bound, we can compute $\mathbb{P}\leq (n-m)\cdot \frac{2}{qn} \leq \frac{2}{q}$. Then, we get $B(n,m) \geq q^n (1- \frac{2}{q})= q^n-2q^{n-1}$. This completes the proof.
\end{proof}

In what follows, we use the sequence replacement technique to design an efficient encoder that maps a given string to an $m$-RLL string with one redundant symbol, under the condition that $m= \lceil \log_q n\rceil+ \big\lceil \lceil \log_q n+1\rceil\cdot \log_{q/2} 2 \big\rceil+1$. It should be noted that when $\log_q n$ and $(\log_q n+1)\cdot \log_{q/2} 2$ are integers, we have 
\begin{equation*}
\begin{aligned}
    \lceil \log_q n\rceil+ \big\lceil \lceil \log_q n+1\rceil\cdot \log_{q/2} 2 \big\rceil+1
    &= (1+\log_{q/2} 2)\cdot \log_q n+ \log_{q/2} 2+1\\
    &= \log_{q/2} n+ \log_{q/2} 2+1.
\end{aligned}
\end{equation*}
Therefore, this choice of $m$ is close to the lower bound derived in Lemma \ref{lem:rll}. The justification for this choice of $m$ is as follows: when there exists a run of length exceeding $m$, we first remove a substring of length $m$ inside it. In order to preserve the overall length and to make this process reversible, we then record the position of the removed substring using $\lceil \log_q n \rceil$ symbols, record the information of the removed substring using $\left\lceil \left\lceil \log_q n + 1 \right\rceil \cdot \log_{q/2} 2 \right\rceil$ symbols (the validity of which is established below), and use a single marker symbol to indicate whether a removal exists.

\begin{lemma}\label{lem:m}
  Let $m=\lceil \log_q n\rceil+m'+1$. It holds that $q^{m'}\geq 2^m$ if and only if $m'\geq \lceil \log_q n+1\rceil\cdot \log_{q/2} 2$.
\end{lemma}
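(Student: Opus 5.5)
The plan is a direct chain of equivalences: take logarithms base $2$ and isolate $m'$. No earlier result of the paper is needed, apart from the standing assumption in this section that $q\geq 4$ is even, which is used for the sign of one coefficient.

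First, since $2^x$ is strictly increasing, $q^{m'}\geq 2^m$ is equivalent to $m'\log_2 q\geq m$. Substituting $m=\lceil \log_q n\rceil+m'+1$, this becomes
\[
m'\log_2 q\geq \lceil \log_q n\rceil+m'+1,
\]
and rearranging gives the equivalent condition
\[
m'(\log_2 q-1)\geq \lceil \log_q n\rceil+1 .
\]

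Next I would simplify both sides. On the left, $\log_2 q-1=\log_2(q/2)$. On the right, because $1$ is an integer, $\lceil \log_q n\rceil+1=\lceil \log_q n+1\rceil$.

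The only point needing care is the division. Since $q\geq 4$, we have $q/2\geq 2$, so $\log_2(q/2)\geq 1>0$. Dividing by this positive quantity preserves both the inequality and the equivalence, and gives
\[
m'\geq \frac{\lceil \log_q n+1\rceil}{\log_2(q/2)}.
\]
Finally, $1/\log_2(q/2)=\log_{q/2}2$, so this is exactly $m'\geq \lceil \log_q n+1\rceil\cdot\log_{q/2}2$.

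Every step is reversible, so the two conditions are equivalent, which proves the statement.
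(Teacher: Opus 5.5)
Your proposal is correct and follows essentially the same chain of equivalences as the paper. The only difference is the order of steps: the paper first divides by $2^{m'}$ to get $(q/2)^{m'}\geq 2^{\lceil \log_q n\rceil+1}$ and then takes logarithms, while you take logarithms first and then subtract $m'$. Your remark that $\log_2(q/2)>0$ because $q\geq 4$ explicitly justifies the final division, which the paper leaves implicit.
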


\begin{proof}
  Observe that 
  \begin{align*}
    q^{m'}\geq 2^m 
    &\Leftrightarrow (q/2)^{m'}\geq 2^{\lceil \log_q n\rceil+1}\\
    &\Leftrightarrow m'\cdot \log_2 (q/2)\geq \lceil \log_q n\rceil+1\\
    &\Leftrightarrow m'\geq \lceil \log_q n+1\rceil\cdot \log_{q/2} 2.
  \end{align*}
  This completes the proof.
\end{proof}

The following functions will be used in the design of the RLL encoder.
\begin{definition}\label{def:h}
  For any $a\in \Sigma_q$ with even $q\geq 4$, we define
  \begin{gather*}
    h_1(a)=2\lfloor a/2 \rfloor +3 \pmod{q},\\
    h_2(a)=2\lfloor a/2 \rfloor +2 \pmod{q}.
  \end{gather*}
\end{definition}

\begin{lemma}\label{lem:h}
  Let $h_1(\cdot)$ and $h_2(\cdot)$ be defined above. For any $a\in \Sigma_q$ with even $q\geq 4$, $h_1(a)\notin\{a,\overline{a}\}$ is odd and $h_2(a)\notin\{a,\overline{a}\}$ is even.
\end{lemma}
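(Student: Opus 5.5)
The statement is elementary, so the plan is a direct computation. Write $a = 2i + b$ with $i = \lfloor a/2 \rfloor \in [0, q/2-1]$ and $b \in \{0,1\}$. By the definition of the complement, $\{a, \overline{a}\} = \{2i, 2i+1\}$. Also $h_1(a) = 2i+3 \bmod q$ and $h_2(a) = 2i+2 \bmod q$. The proof then splits into two checks: parity, and avoidance of the pair $\{2i,2i+1\}$.

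\textbf{Parity.} Since $q$ is even, reducing modulo $q$ preserves parity. Hence $h_1(a) \equiv 2i+3 \equiv 1 \pmod 2$ is odd and $h_2(a) \equiv 2i+2 \equiv 0 \pmod 2$ is even. Concretely, $h_2(a) = 2i+2$ if $i \le q/2-2$ and $h_2(a) = 0$ if $i = q/2-1$. Similarly, $h_1(a) = 2i+3$ if $i \le q/2-2$ and $h_1(a) = 1$ if $i = q/2-1$.

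\textbf{Avoidance.} Note that $h_1(a) = h_2(a) + 1$ in every case: both values are $2j$ and $2j+1$ with $j = i+1 \bmod (q/2)$. So $\{h_2(a), h_1(a)\} = \{2j, 2j+1\}$ is the complement pair indexed by $j$. Complement pairs indexed by different integers in $[0, q/2-1]$ are disjoint. It therefore suffices to show $j \ne i$, that is, $i+1 \not\equiv i \pmod{q/2}$. This holds precisely because $q/2 \ge 2$, i.e.\ $q \ge 4$.

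The only point needing care is this last step, since it is where $q \ge 4$ enters. For $q = 2$ the wraparound would give $j = i$, and $h_1, h_2$ would land back in $\{a, \overline{a}\}$. With $j \ne i$ established, $h_1(a), h_2(a) \notin \{a, \overline{a}\}$, and together with the parity check this completes the proof.
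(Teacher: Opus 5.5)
Your proof is correct and follows the same route as the paper: set $i=\lfloor a/2\rfloor$, identify $\{a,\overline{a}\}=\{2i,2i+1\}$, and check $h_1(a)$ and $h_2(a)$ directly. Your version is more explicit than the paper's one-line argument, since it handles the wraparound case $i=q/2-1$ and shows that $q\ge 4$ is exactly what gives $i+1\not\equiv i \pmod{q/2}$.
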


\begin{proof}
  Let $i=\lfloor a/2 \rfloor$, then $\{a,\overline{a}\}=\{2i,2i+1\}$. It follows that $h_1(a)=2i+3 \pmod{q}\notin \{2i,2i+1\}$ is odd and $h_2(a)=2i+2 \pmod{q}\notin \{2i,2i+1\}$ is even. This completes the proof.
\end{proof}

\begin{definition}\label{def:indicator}
  For any run string $a_1a_2\cdots a_m\in a_1^{\oplus}$ with $a_1\in \Sigma_q$, we define its \emph{indicator vector} as 
  \[
    \mathbbm{1}(a_1a_2\cdots a_m)= \big(\mathbbm{1}(a_1),\mathbbm{1}(a_2),\ldots, \mathbbm{1}(a_m)\big),
  \]
  where $\mathbbm{1}(a_i)=1$ if $a_i=a_1$ and $\mathbbm{1}(a_i)=0$ otherwise, for $i\in [1,m]$.
  Moreover, we define the \emph{integer representation} of $\mathbbm{1}(a_1a_2\cdots a_m)$ as
  \[
    \varphi\big(a_1a_2\cdots a_m\big)=\sum_{j=1}^{m}2^{j-1} \cdot \mathbbm{1}(a_{m+1-j}).
  \] 
\end{definition}

  Note that $\mathbbm{1}(a_1)=1$ and $\varphi(a_1a_2\cdots a_m)\in [2^{m-1},2^m-1]$. As a result, the length of the run $a_1a_2\cdots a_m$ is determined by $\varphi(a_1a_2\cdots a_m)$.
  Consequently, from $\varphi(a_1a_2\cdots a_m)$, we can recover $\mathbbm{1}(a_1a_2\cdots a_m)$.
  If $a_1$ is known a priori,\footnote{This is a natural assumption, as the length-one reverse-complement duplication does not change the starting symbol of a run.} we can further recover $a_1a_2\cdots a_m$.
  This implies that when $a_1$ is known a priori, the inverse mapping $\varphi^{-1}$ defined by 
  \begin{equation}\label{eq:inverse}
    \varphi^{-1}\big(\varphi(a_1a_2\cdots a_m),a_1\big)= a_1a_2\cdots a_m
  \end{equation}
  is well-defined.
  
\begin{example}\label{ex:associate}
    Following Example \ref{ex:signature}, we consider the string $\bm{x}=01123221001$, then 
    \begin{itemize}
        \item $\mathbbm{1}\big(\bm{\sigma}(\bm{x},1)\big)=100$, $\mathbbm{1}\big(\bm{\sigma}(\bm{x},2)\big)=1011$, and $\mathbbm{1}\big(\bm{\sigma}(\bm{x},3)\big)=1001$;
        \item $\varphi\big(\bm{\sigma}(\bm{x},1)\big)=4$, $\varphi\big(\bm{\sigma}(\bm{x},2)\big)=11$, and $\varphi\big(\bm{\sigma}(\bm{x},3)\big)=9$.
    \end{itemize}
\end{example}

\begin{algorithm}
    \caption{$m$-RLL Encoder for $m= m_1+ m_2+1$, where $m_1=\lceil \log_q n\rceil$ and $m_2=\big\lceil \lceil \log_q n+1\rceil\cdot \log_{q/2} 2 \big\rceil$}
    \label{alg:rll_enc}
    \begin{algorithmic}[1]
        \Require {$\bm{x} \in \Sigma_q^{n-1}$}
        \Ensure {$\bm{y}=ENC_{m}^{RLL}(\bm{x})\in \Sigma_q^{n}$ is an $m$-RLL string}
        
        \State \textbf{Initialization:} Let $\bm{y}= \bm{x}\circ h_1(x_{n-1})$, $i=1$, and $i_{end}\triangleq n-1-m$

        \While{$i\leq i_{end}$}
        \State Let $j>i$ be the smallest index such that $y_j\not\in \{x_i,\overline{x}_i\}$
            \While{$j\geq i+m+1$}
                \State $\bm{y}\leftarrow \bm{y}_{[1,i]} \circ \bm{y}_{[i+m+1,n]} \circ Rep_{q,m_2}\big(\varphi(\bm{y}_{[i,i+m]})-2^{m} \big)\circ Rep_{q,m_1}(i)$ 
                {\color{gray}\Comment{$Rep_{q,m_1}(i)$ represents the $q$-ary representation of $i$ of length $m_1$}}
                \State $\bm{y}\leftarrow \bm{y}\circ h_2(y_{n-1})$
                \State $j\leftarrow j-m$
                \State $i_{end}\leftarrow i_{end}-m$
            \EndWhile
            \State $i\leftarrow j$
        \EndWhile
        \State \Return $\bm{y}$
    \end{algorithmic}
\end{algorithm}

\begin{algorithm}
    \caption{$m$-RLL Decoder for $m= m_1+ m_2+1$, where $m_1=\lceil \log_q n\rceil$ and $m_2=\big\lceil \lceil \log_q n+1\rceil\cdot \log_{q/2} 2 \big\rceil$}
    \label{alg:rll_dec}
    \begin{algorithmic}[1]
        \Require {$\bm{y}=ENC_{m}^{RLL}(\bm{x})\in \Sigma_q^n$ for some $\bm{x}\in \Sigma_q^{n-1}$}
        \Ensure {$\bm{x}=DEC_{m}^{RLL}(\bm{y})$}

        \While{$2| y_n$}
            \State $i\leftarrow Rep_{q,m_1}^{-1}(\bm{y}_{[n-m_1,n-1]})$
            \State $a\leftarrow Rep_{q,m_2}^{-1}(\bm{y}_{[n-m_1-m_2,n-m_1-1]})+2^{m}$
            \State $\bm{y}\leftarrow \bm{y}_{[1,i-1]} \circ \varphi^{-1}(a,y_{i}) \circ \bm{y}_{[i+1,n-m]}$
        \EndWhile
        \State $\bm{x}\leftarrow \bm{y}_{[1,n-1]}$
        \State \Return $\bm{x}$
    \end{algorithmic}
\end{algorithm}

We are now ready to present our encoder and decoder.
\begin{theorem}
  Let $q\geq 4$ and $m= m_1+ m_2+1$ with $m_1=\lceil \log_q n\rceil$ and $m_2=\big\lceil \lceil \log_q n+1\rceil\cdot \log_{q/2} 2 \big\rceil$. Algorithms \ref{alg:rll_enc} and \ref{alg:rll_dec} are correct, i.e., for any $\boldsymbol{x} \in \Sigma_q^{n-1}$, it holds that
  \begin{itemize}
      \item $ENC_{m}^{RLL}(\boldsymbol{x})\in \Sigma_q^n$ is an $m$-RLL string;
      \item $DEC_{m}^{RLL}\big(ENC_{m}^{RLL}(\bm{x}) \big)= \boldsymbol{x}$.
  \end{itemize}
  Moreover, Algorithms \ref{alg:rll_enc} and \ref{alg:rll_dec} run in time $O(n)$. 
\end{theorem}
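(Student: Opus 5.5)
We argue along the lines of the sequence-replacement framework: establish a loop invariant for the encoder, show each replacement step is undone by one decoder iteration, and count the work. Throughout we take $n$ large enough that $m\le n-1$. By Lemma~\ref{lem:m}, the choice $m_2=\big\lceil \lceil \log_q n+1\rceil\cdot \log_{q/2} 2 \big\rceil$ gives $q^{m_2}\ge 2^m$. A run of length $m+1$ has $\varphi\in[2^m,2^{m+1}-1]$, so $\varphi(\bm{y}_{[i,i+m]})-2^m\in[0,2^m-1]$ fits in $m_2$ symbols. Similarly $i\le n$ fits in $m_1=\lceil\log_q n\rceil$ symbols (after the usual $0$- versus $1$-based adjustment). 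Each replacement removes the $m$ symbols $\bm{y}_{[i+1,i+m]}$ and appends $m_2+m_1+1=m$ symbols, so the length stays $n$.

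\textbf{Invariant.} We maintain three facts. First, $y_n$ is odd if and only if no replacement has yet occurred. This holds because $h_1$ outputs odd symbols, $h_2$ outputs even ones (Lemma~\ref{lem:h}), and every replacement overwrites the last symbol with an $h_2$ value. Second, the prefix $\bm{y}_{[1,i_{end}+m]}$ is the current ``data part''. Third, every run lying entirely in $\bm{y}_{[1,i-1]}$ has length at most $m$.

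\textbf{Maintenance of the invariant.} Removing $\bm{y}_{[i+1,i+m]}$ from a run of length at least $m+1$ that starts at $i$ keeps $y_i$ as the start of that run. Its remaining length drops by $m$, so the inner loop with $j\leftarrow j-m$ eventually shortens the run below $m+1$. The tail behaves correctly as well. Since $h_1(a)\notin\{a,\overline a\}$ and $h_2(a)\notin\{a,\overline a\}$, the final symbol never extends the preceding run. The tail contents themselves (the $Rep$ blocks) lie beyond $i_{end}$, and the outer loop decrements $i_{end}$ so that only the data part is scanned. The step needing the most care is the claim that the appended redundancy tail, of total length about $km$ after $k$ replacements, contains no run longer than $m$. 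A tail of blocks each ending in an $h_2$ symbol that differs from its predecessor's class suffices for the marker boundaries. Inside a $Rep$ block, however, a run could exceed $m$, since $m_1+m_2<m$ while two adjacent blocks together could create a long run. I would first try to show that block boundaries break runs: the last symbol of each block is $h_2(y_{n-1})$, chosen outside the class of the preceding symbol. What remains is a run spanning $Rep_{q,m_1}(i)$ followed by the next block's $Rep_{q,m_2}$. If this cannot be excluded, I would refine the argument so that the scan also covers tail positions, i.e.\ take $i_{end}$ to be decreased but the tail to be rescanned. This is the main obstacle, and the place where I expect to have to reread the exact index bookkeeping of Algorithm~\ref{alg:rll_enc}.

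\textbf{Decoding.} We induct on the number of replacements, in reverse. If $y_n$ is odd, no replacement occurred and $\bm{x}=\bm{y}_{[1,n-1]}$. If $y_n$ is even, the last replacement appended $Rep_{q,m_2}(a-2^m)\,Rep_{q,m_1}(i)\,h_2(\cdot)$, so the decoder reads $i$ and $a$ from those positions. The start symbol $y_i$ was not removed, so $\varphi^{-1}(a,y_i)$ is well defined by \eqref{eq:inverse} and returns exactly the run $\bm{y}_{[i,i+m]}$. Reinserting it restores the previous state, and after all undo steps $DEC_m^{RLL}(ENC_m^{RLL}(\bm{x}))=\bm{x}$.

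\textbf{Complexity.} In the encoder, $i$ only moves forward, and each replacement shrinks $i_{end}$ by $m$, so at most $n/m$ replacements occur. Each replacement costs $O(m)$ symbol operations if implemented by pointer manipulation: we keep the data part and the tail as separate buffers instead of physically shifting $\bm{y}$. This gives $O(n)$ in total. The decoder performs the same number of $O(m)$ undo steps, so it also runs in $O(n)$.
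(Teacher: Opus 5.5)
Your overall plan matches the paper's: the parity of $y_n$ flags whether any replacement occurred, each decoder iteration undoes the most recent replacement using the preserved start symbol $y_i$ and $\varphi^{-1}$, and there are at most $n/m$ replacements of cost $O(m)$ each. The gap is in the first bullet. You never establish that the appended tail contains no run longer than $m$; you leave it as the ``main obstacle.'' Your fallback of rescanning the tail changes the algorithm, so it cannot prove the theorem as stated. Replacing inside the tail would also corrupt the $Rep$ blocks the decoder reads.

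The missing observation is short, and it is exactly the paper's argument. Step~5 keeps the previous last symbol inside $\bm{y}_{[i+m+1,n]}$, so markers accumulate rather than being overwritten. After $k$ replacements the string therefore consists of:
\begin{itemize}
\item the data part, of length $n-1-km$;
\item the $h_1$ marker at position $n-km$;
\item $k$ blocks $Rep_{q,m_2}(\cdot)\,Rep_{q,m_1}(\cdot)\,h_2(\cdot)$, each of length exactly $m_2+m_1+1=m$.
\end{itemize}
Thus the markers sit at positions $n-km, n-(k-1)m, \ldots, n$, exactly $m$ apart. Each marker lies outside the class $\{a,\overline{a}\}$ of its left neighbour:
\begin{itemize}
\item for $h_2$ this holds by construction;
\item for $h_1$ it holds because removals occur strictly inside a run and keep its first symbol, so the class of the last data symbol never changes, and $h_1$ depends only on that class.
\end{itemize}

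Hence there is a class break immediately before every marker. Any run in the tail starts at or after one marker and ends before the next, so its length is at most $1+(m-1)=m$. Equivalently, every window $\bm{y}_{[i,i+m]}$ with $i>i_{end}$ contains a marker together with its left neighbour, so it is not a run.

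The case you fear, a run passing from one block's $Rep_{q,m_1}(i)$ into the next block's $Rep_{q,m_2}$, would have to cross the intervening $h_2$ marker, so it is impossible. Your remark that a run inside a $Rep$ block could exceed $m$ ``since $m_1+m_2<m$'' has the logic reversed: that inequality is precisely what caps such runs at $m$.

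To complete the $m$-RLL claim, combine this with two further facts:
\begin{itemize}
\item windows with $i\le i_{end}$ lie inside the data part, where every run has been shortened to length at most $m$;
\item the $h_1$ marker prevents any run from crossing from the data part into the tail.
\end{itemize}
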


\begin{proof}
    We first examine the correctness of Algorithm \ref{alg:rll_enc}. In Step 1, an odd symbol $h_1(x_{n-1})\notin \{x_{n-1},\overline{x}_{n-1}\}$ is appended to the end of $\bm{x}$, yielding a string $\bm{y}$ of length $n$. In the loop from Steps 2 to 11, we identify the leftmost run whose length exceeds $m$. Let $i$ and $j-1$ denote the starting and ending positions of this run, respectively. 
    Within the inner loop (Steps 4 to 9), we shorten the selected run by $m$ while appending a string of length $m$ at the end to preserve the overall length, continuing until the run length is at most $m$.
    Concretely, we first remove the substring $\bm{y}_{[i+1,i+m]}$ (note that $y_i$ is preserved) and append at the end a length-$m_1$ string to encode the information of $\bm{y}_{[i,i+m]}$, a length-$m_2$ string to encode the position of $i$, and a marker symbol to indicate loop iteration and to prevent the creation of new runs with length exceeding $m$.
    Since $i<n\leq q^{\lceil \log_q n\rceil}=q^{m_1}$, the position of $i$ can be encoded by $Rep_{q,m_1}(i)$.
    Moreover, since $\bm{y}_{[i,i+m]}\in y_i^{\oplus}$ has length $m+1$ and $y_i$ is preserved, by Definition \ref{def:indicator}, we can use the bijection $\varphi(\cdot)$ to map it into $\varphi(\bm{y}_{[i,i+m]})\in [2^{m},2^{m+1}-1]$.
    As the value of $m$ is known a priori, it suffices to record $\varphi(\bm{y}_{[i,i+m]})-2^m\in [0,2^m-1]$.
    Given that $m_2= \big\lceil \lceil \log_q n+1\rceil\cdot \log_{q/2} 2 \big\rceil$, by Lemma \ref{lem:m}, we can encode the information of $\bm{y}_{[i,i+m]}$ by $Rep_{q,m_1}\big(\varphi(\bm{y}_{[i,i+m]})-2^m\big)$.
    In Step 5, we update $\bm{y}\leftarrow \bm{y}_{[1,i]} \circ \bm{y}_{[i+m+1,n]} \circ Rep_{q,m_2}\big(\varphi(\bm{y}_{[i,i+m]})-2^{m} \big) \circ  Rep_{q,m_1}(i)$.
    Finally, the marker is defined by the even symbol $h_2(y_{n-1})\notin \{y_{n-1},\overline{y}_{n-1}\}$, and in Step 6 we update $\bm{y}\leftarrow \bm{y}\circ h_2(y_{n-1})$.
    Since the length of the selected run and the original string decrease by $m$, in Step 7 we update $j\leftarrow j-m$ and in Step 8 we update $i_{end}\leftarrow i_{end}-m$.
    When the length of the selected run reduced to at most $m$, we exit the inner loop (Steps 4–9), locate the next run whose length exceeds $m$, and repeat the process until no such run exists. 
    When the encoder finishes, the string $\bm{y}=ENC_{m}^{RLL}(\bm{x})\in \Sigma_q^{n}$ has length $n$ and for $i\leq i_{end}$, the substring $\bm{y}_{[i,i+m]}$ is not a run.
    Moreover, for $i> i_{end}$, since the substring $\bm{y}_{[i,i+m]}$ contains a marker symbol (either $h_1(\cdot)$ or $h_2(\cdot)$) and the symbol adjacent to its left, by Lemma \ref{lem:h}, we can conclude that $\bm{y}_{[i,i+m]}$ is still not a run.
    Therefore, the final output is an $m$-RLL string.
    It follows from the parity of $y_n$ that the loop (Steps 4 to 9) in Algorithm \ref{alg:rll_enc} is executed or not. Since Algorithm \ref{alg:rll_dec} is simply the inverse of the encoder, we have $DEC_{m}^{RLL}\big(ENC_{m}^{RLL}(\bm{x}) \big)= \boldsymbol{x}$.
   
    In what follows, we analyze the time complexity of the algorithms.
    Suppose there are $r$ runs $\bm{x}_{[i_1,j_1-1]}, \bm{x}_{[i_2,j_2-1]},\ldots, \bm{x}_{[i_r,j_r-1]}$ whose lengths exceed $m$. First, we can locate these runs in $O(n)$ time.
    For each $s\in [1,r]$, the inner loop (Steps 4–9) of Algorithm \ref{alg:rll_enc} is applied to transform the run $\bm{x}_{[i_s,j_s-1]}$ into a new run of length at most $m$. 
    Since each execution shortens the length of the selected run by $m$, the loop for the $s$-th runs executes $\lfloor \frac{j_s-i_s}{m}\rfloor$ times. Since the runs are disjoint, this inner loop executes
     \[
        \sum_{s=1}^r \left\lfloor \frac{j_s-i_s}{m} \right\rfloor\leq \frac{1}{m} \sum_{s=1}^r (j_s-i_s)\leq \frac{n}{m}
     \] 
     times in total.
     Moreover, in each iteration we compute the functions $Rep_{q,m_2}\big(\varphi(\cdot)-2^{m} \big)$ and $Rep_{q,m_1}(\cdot)$, which can be performed in $O(m)$ time. Therefore, the total complexity of Algorithm \ref{alg:rll_enc}  is $O(n)+O(\frac{n}{m})\cdot O(m) =O(n)$. Since Algorithm \ref{alg:rll_dec} is simply the inverse of the encoder, its complexity is also $O(n)$.
     This completes the proof.
\end{proof}

We now present an example to illustrate our encoding and decoding algorithms.

\begin{example}
  Let $n=20$ and $q=4$ with $\overline{0}=1$ and $\overline{2}=3$. Then $m_1=\lceil \log_q n\rceil= 3$, $m_2=\big\lceil \lceil \log_q n+1\rceil\cdot \log_{q/2} 2 \big\rceil=4$, and $m=m_1+m_2+1=8$. Suppose $\bm{x}=0122222222233333333\in \Sigma_q^{n-1}$. We apply Algorithm \ref{alg:rll_enc} to encode it into an $m$-RLL string.
  \begin{itemize}
    \item Initially, we compute $h_1(x_{n-1})=2\cdot \lfloor \frac{x_{n-1}}{2} \rfloor +3 \pmod{q}=2\cdot \lfloor \frac{3}{2} \rfloor +3 \pmod{4}=1$. Then we let $\bm{y}=\bm{x}\circ h_1(x_{n-1})=01222222222333333331$, $i=1$, and $i_{end}=n-1-m=20-1-8=11$. Now, we enter the loop (Steps 2-10).
    \item Since $i<i_{end}$, we set $j=3$, which represents the smallest index such that $y_{j}\notin \{y_i,\overline{y_i}\}=\{0,1\}$. Now, $i+m+1=1+8+1=10>j$. In Step 10, we update $i\leftarrow j$, i.e., we reset $i=3$.
    \item Since $i<i_{end}$, we reset $j=20$. Now, $i+m+1=3+8+1=12\leq j$ and we enter the inner loop (Steps 4-9).
    \begin{itemize}
      \item Consider the substring $\bm{y}_{[i,i+m]}=222222222$, we compute $\mathbbm{1}(\bm{y}_{[i,i+m]})= 111111111$, $\varphi(\bm{y}_{[i,i+m]})-2^m=127-64=63$, $Rep_{q,m_2}\big(\varphi(\bm{y}_{[i,i+m]})-2^m \big)= Rep_{4,4}(63)=0333$, and $Rep_{q,m_1}(i)= Rep_{4,3}(3)=003$. In Step 5, we update $\bm{y}\leftarrow \bm{y}_{[1,i]} \circ \bm{y}_{[i+m+1,n]} \circ Rep_{q,m_2}\big(\varphi(\bm{y}_{[i,i+m]})-2^m \big) \circ Rep_{q,m_1}(i)$, i.e., we reset $\bm{y}=0123333333310333003$. We then compute $h_2(y_{n-1})=2\cdot \lfloor \frac{y_{n-1}}{2} \rfloor +2 \pmod{q}=2\cdot \lfloor \frac{3}{2} \rfloor +2 \pmod{4}=0$. In Step 6, we update $\bm{y}\leftarrow \bm{y}\circ h_2(y_{n-1})$, i.e., we reset $\bm{y}=01233333333103330030$. In Step 7, we update $j\leftarrow j-m$, i.e., we reset $j=20-8=12$. In Step 8, we update $i_{end}\leftarrow i_{end}-m$, i.e., we reset $i_{end}=11-8=3$. 
      \item Since $i+m+1=3+8+1=12\leq j$, we consider the substring $\bm{y}_{[i,i+m]}=233333333$. We compute $\mathbbm{1}(\bm{y}_{[i,i+m]})= 100000000$, $\varphi(\bm{y}_{[i,i+m]})-2^m=64-64=0$, $Rep_{q,m_2}\big(\varphi(\bm{y}_{[i,i+m]})-2^m \big)= Rep_{4,4}(0)=0000$, and $Rep_{q,m_1}(i)= Rep_{4,3}(3)=003$. In Step 5, we update $\bm{y}\leftarrow \bm{y}_{[1,i]} \circ \bm{y}_{[i+m+1,n]} \circ Rep_{q,m_2}\big(\varphi(\bm{y}_{[i,i+m]})-2^m \big) \circ Rep_{q,m_1}(i)$, i.e., we reset $\bm{y}=0121033300300000003$. We then compute $h_2(y_{n-1})=2\cdot \lfloor \frac{y_{n-1}}{2} \rfloor +2 \pmod{q}=2\cdot \lfloor \frac{3}{2} \rfloor +2 \pmod{4}=0$. In Step 6, we update $\bm{y}\leftarrow \bm{y}\circ h_2(y_{n-1})$, i.e., we reset $\bm{y}=01210333003000000030$. In Step 7, we update $j\leftarrow j-m$, i.e., we reset $j=12-8=4$. In Step 8, we update $i_{end}\leftarrow i_{end}-m$, i.e., we reset $i_{end}=3-8=-5$. 
      \item Since $i+m+1=3+8+1=12> j$, we exit the inner loop (Steps 4-9).
    \end{itemize}
    \item Since $i>i_{end}$, we exit the loop (Steps 2-10) and the algorithm outputs $\bm{y}=01210333003000000030$.
  \end{itemize} 
  
  Now, we use Algorithm \ref{alg:rll_dec} to decode $\bm{x}$ from $\bm{y}=01210333003000000030$.
  \begin{itemize}
    \item Since $y_n=0$ is even, we compute $i=Rep_{q,m_1}^{-1}(\bm{y}_{[n-m_1,n-1]})= Rep_{4,3}^{-1}(\bm{y}_{[17,19]})=Rep_{4,3}^{-1}(003)=3$, $a= Rep_{4,4}^{-1}(\bm{y}_{[13,16]})+2^{4} =Rep_{4,4}^{-1}(0000)+2^{8}=0+64=64$, and $\varphi^{-1}(a,y_{i})= \varphi^{-1}(64,2)=233333333$. In Step 4, we update $\bm{y}\leftarrow \bm{y}_{[1,i-1]} \circ \varphi^{-1}(a,y_{i}) \circ \bm{y}_{[i+1,n-m]}$, i.e., we reset $\bm{y}=01233333333103330030$.
    \item Since $y_n=0$ is even, we compute $i=Rep_{q,m_1}^{-1}(\bm{y}_{[n-m_1,n-1]})= Rep_{4,3}^{-1}(\bm{y}_{[17,19]})=Rep_{4,3}^{-1}(003)=3$, $a= Rep_{4,4}^{-1}(\bm{y}_{[13,16]})+2^{4} =Rep_{4,4}^{-1}(0333)+2^{8}=63+64=127$, and $\varphi^{-1}(a,y_{i})= \varphi^{-1}(127,2)=222222222$. In Step 4, we update $\bm{y}\leftarrow \bm{y}_{[1,i-1]} \circ \varphi^{-1}(a,y_{i}) \circ \bm{y}_{[i+1,n-m]}$, i.e., we reset $\bm{y}=01222222222333333331$.
    \item Since $y_n=1$ is odd, we exit the loop and the algorithm outputs $\bm{x}=\bm{y}_{[1,n-1]}=0122222222233333333$.
  \end{itemize}
\end{example}

\subsection{Code Constructions}

In this subsection, we present two constructions of codes capable of correcting $t$ length-one reverse-complement duplications. The first construction achieves a redundancy of $2t\log_q n + O(\log_q\log_q n)$ with encoding complexity $O(n)$ and decoding complexity $O\big(n(\log_2 n)^4\big)$. The second construction achieves an improved redundancy of $(2t-1)\log_q n + O(\log_q\log_q n)$, but with encoding and decoding complexities of $O\big(n \cdot \mathrm{poly}(\log_2 n)\big)$. Both constructions combine an indel-correcting code from \cite{Li-23-ISIT} with a substitution-correcting code from \cite{Liu-24-ISIT}.

\begin{lemma}\cite[Corollary 2]{Li-23-ISIT}\label{lem:insertion}
    Let $t$ be a fixed positive integer. There exists an explicit hash function $\bm{\eta}_{n,q,t}:\Sigma_q^n \rightarrow \Sigma_q^{4t\log_q n + o(\log_q n)}$, computable in polynomial time, i.e., $\mathrm{poly}(n)$ time, such that given $\bm{\eta}_{n,q,t}(\bm{c})$ and an arbitrary string $\bm{c}'$ obtained from $\bm{c}\in \Sigma_q^n$ via $t$ insertions and deletions, one can decode $\bm{c}$ in polynomial time.
    When the length $n$ and the alphabet size $q$ are clear from the context, we abbreviate $\boldsymbol{\eta}_{n,q,t}$ as $\boldsymbol{\eta}_{t}$ to emphasize only the error-correction capability.
\end{lemma}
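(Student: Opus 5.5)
Since this lemma is quoted verbatim from \cite[Corollary 2]{Li-23-ISIT}, the plan is to reconstruct the argument behind that result rather than derive it from earlier parts of this paper. I would build it in three stages: a synchronization step, a local hashing step, and a bootstrapping step that handles the remaining unknowns.

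\textbf{Synchronization.} First I would use a pattern-based segmentation of $\bm{c}$. A short pattern that occurs regularly in typical strings serves as a marker and cuts $\bm{c}$ into segments of length $\mathrm{poly}(\log n)$. The start of these segments can be enforced cheaply, for example by reserving a small prefix, or by working with strings whose gaps between markers are bounded. Each insertion or deletion in $\bm{c}'$ can destroy or create only $O(1)$ markers. Hence, after parsing $\bm{c}'$ with the same rule, all but $O(t)$ segments are aligned with the true segmentation of $\bm{c}$.

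\textbf{Local hashing.} Next I would protect the segments. I would compute a short deletion/insertion hash (a Varshamov--Tenengolts-type syndrome generalized to $t$ errors) on each segment, which costs $O(t\log\log n)$ symbols per segment. These per-segment hashes are then themselves protected by a systematic Reed--Solomon-type syndrome over the sequence of segment hashes. Correcting the $O(t)$ corrupted positions in that sequence costs roughly $2\cdot 2t\log_q n=4t\log_q n$ symbols, which is the leading term in the claimed length. The decoder would proceed in two steps. It first recovers the correct hash of every segment from the syndrome, treating misaligned segments as substitution errors. It then brute-forces each affected segment, which has polylogarithmic length and at most $t$ edits, over all edit patterns consistent with its hash. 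This takes $\mathrm{poly}(n)$ time.

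\textbf{Bootstrapping.} Finally, the redundancy must not depend on the segmentation being unique, and the $o(\log_q n)$ metadata must itself be protected. For this I would recursively apply a weaker code, e.g.\ a $t$-error code of redundancy $O(t^2\log\log n)$, to the hash string. The total length then becomes $4t\log_q n+o(\log_q n)$.

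The main obstacle I expect is showing that $O(t)$ edits perturb only $O(t)$ segments under worst-case inputs, and not just for typical strings. Since the lemma is stated for arbitrary $\bm{c}$, I would handle this by applying a pseudorandom or precoded transformation to $\bm{c}$ first, or by taking the segmentation as a function of $\bm{c}$ that is itself stored robustly. I would check the latter option first.
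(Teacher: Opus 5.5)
The paper gives no argument for this lemma. It is imported verbatim from \cite{Li-23-ISIT}, so there is no internal proof to match, and your reconstruction has to stand on its own. It does not. The first failure is the step ``treating misaligned segments as substitution errors.'' An insertion or deletion can create or destroy a marker occurrence, which splits or merges segments. Then $\bm{c}'$ has a different number of segments than $\bm{c}$, and every later segment index is shifted. The sequence of segment hashes of $\bm{c}'$ is therefore an insertion/deletion-corrupted copy of the original, possibly at Hamming distance $\Theta(n/\mathrm{polylog}\,n)$ from it, and a Reed--Solomon syndrome cannot repair that. Repairing it would need an indel-correcting hash for the segment-hash sequence, which is essentially the original problem again. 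Separately, the leading term is asserted rather than derived. You count ``$O(t)$ corrupted positions'' but charge exactly $2\cdot 2t\log_q n$, which presumes each edit disturbs at most two aligned entries. Nothing in your construction gives this, and any constant $c>2$ would give $2ct\log_q n$.

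The second gap is the worst-case issue you flag yourself, and it is the heart of the statement rather than a side issue. The lemma must hold for every $\bm{c}\in\Sigma_q^n$, and the paper applies it to very atypical inputs; in Lemma~\ref{lem:f}, for instance, $\bm{c}$ is an indicator vector preceded by a block of zeros. On strings that avoid the marker, your segments have length $\Theta(n)$, and an $O(t\log\log n)$-symbol local hash cannot correct even one edit there. None of your remedies works:
\begin{itemize}
\item A pseudorandom mask does not commute with insertions and deletions. After the first edit it is misaligned, so the masked received word is not a few edits away from the masked original.
\item A precoding $\bm{c}\mapsto T(\bm{c})$ changes what passes through the channel, so it no longer yields a hash of $\bm{c}$.
\item Storing the segmentation costs $\Omega(n/\mathrm{polylog}\,n)$ symbols.
\end{itemize}

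What actually gives this form for arbitrary inputs, with exactly the constant $4t$ and only polynomial-time guarantees, is a global compression step (syndrome compression). Take any explicit hash $g$ of $\mathrm{polylog}(n)$ bits that corrects $t$ indels on all of $\Sigma_q^n$, for example a deterministic document-exchange sketch. Then $g(\bm{c})\neq g(\tilde{\bm{c}})$ for each of the $N=O(n^{2t}q^{2t})$ strings $\tilde{\bm{c}}$ that share a $t$-indel descendant with $\bm{c}$. Each nonzero difference has only polylogarithmically many prime divisors, so some prime $p\le N\,\mathrm{polylog}(n)$ divides none of them. Storing $(g(\bm{c})\bmod p,\,p)$ then costs $2\log_q N+O(\log_q\log n)=4t\log_q n+o(\log_q n)$ symbols. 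Encoding and decoding work by enumerating sets of size $\mathrm{poly}(n)$.

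Your bootstrapping stage is unnecessary. The lemma assumes $\bm{\eta}_t(\bm{c})$ is received error-free, and the paper protects it separately with a repetition code.
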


\begin{lemma}\cite[Sections III and IV]{Liu-24-ISIT}\label{lem:substitution}
    Let $t$ be a fixed integer, and $\ell$ be the smallest prime satisfying $\ell\geq \max\{n,2t(q-1)+1\}$.\footnote{By the well-known Bertrand-Chebyshev theorem, we have $\ell\leq 2\cdot \max\{n,2t(q-1)+1\}$.} Then there exists an explicit hash function $\bm{\zeta}_{n,q,t}: \Sigma_{q}^{n}\rightarrow \Sigma_q^{\lceil \log_q (2t(q-1)+1)\ell^{2t-1} \rceil}$, computable in $O(n)$ time, such that given $\bm{\zeta}_{n,q,t}(\bm{c})$ and an arbitrary string $\bm{c}'$ obtained from $\bm{c}\in \Sigma_q^n$ via at most $t$ substitutions, one can decode $\bm{c}$ in $O\big(n(\log_2 n)^4\big)$ time.
    When the length $n$ and the alphabet size $q$ are clear from the context, we abbreviate $\boldsymbol{\zeta}_{n,q,t}$ as $\boldsymbol{\zeta}_{t}$ to emphasize only the error-correction capability.
\end{lemma}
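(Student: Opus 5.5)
Since this is a cited result, I would reconstruct the proof from a syndrome-type hash over the prime field $\mathbb{F}_\ell$. View $\bm{c}\in\Sigma_q^n$ as an integer vector. I would define $\bm{\zeta}_{n,q,t}(\bm{c})$ as the $q$-ary representation of the tuple $(S_0,S_1,\ldots,S_{2t-1})$ given by
\[
S_0=\sum_{i=1}^n c_i \bmod \big(2t(q-1)+1\big), \qquad S_j=\sum_{i=1}^n i^j c_i \bmod \ell \quad \text{for } j\in[1,2t-1].
\]
This tuple takes at most $(2t(q-1)+1)\ell^{2t-1}$ values, which gives the stated length. Each $S_j$ is a single linear pass, and $t$ is fixed, so encoding costs $O(n)$.

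For decoding, let $\bm{e}=\bm{c}'-\bm{c}\in\mathbb{Z}^n$. It has at most $t$ nonzero entries, each in $[-(q-1),q-1]$.
\begin{itemize}
\item \textbf{Recovering the error sum.} We have $\sum_i e_i\in[-t(q-1),t(q-1)]$, an interval of $2t(q-1)+1$ integers. So $\sum_i e_i$ is determined exactly by $\sum_i c'_i - S_0 \bmod (2t(q-1)+1)$. Reducing it mod $\ell$ gives the zeroth power sum $P_0$.
\item \textbf{Higher power sums.} Computing $\sum_i i^j c'_i - S_j \bmod \ell$ gives $P_j=\sum_i e_i i^j \bmod \ell$ for $j\in[1,2t-1]$.
\item \textbf{Distance argument.} Since $\ell\ge n$, the locators $i\in[1,n]$ are distinct elements of $\mathbb{F}_\ell$. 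Since $\ell\ge 2t(q-1)+1>2(q-1)$, each nonzero $e_i$ stays nonzero mod $\ell$. The matrix $(i^j)_{0\le j\le 2t-1,\,1\le i\le n}$ is the parity-check matrix of a generalized Reed–Solomon code of minimum distance $2t+1$. Hence $\bm{e}\bmod \ell$ is the unique vector of weight at most $t$ with syndrome $(P_0,\ldots,P_{2t-1})$.
\end{itemize}

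To find $\bm{e}$ I would run standard RS syndrome decoding:
\begin{itemize}
\item Berlekamp–Massey (or the extended Euclidean algorithm) on the $2t$ syndromes yields the error-locator polynomial of degree at most $t$.
\item Its roots in $[1,n]$ give the error positions, found by evaluating the polynomial at all $n$ points.
\item Forney's formula, or solving the resulting $t\times t$ Vandermonde system, gives the error values in $\mathbb{F}_\ell$.
\item Each value lifts uniquely to an integer in $[-(q-1),q-1]$ because $\ell>2(q-1)$. Then $\bm{c}=\bm{c}'-\bm{e}$.
\end{itemize}

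For the complexity, each $\mathbb{F}_\ell$ operation costs polylogarithmically many bit operations since $\ell\le 2\max\{n,2t(q-1)+1\}$. The syndrome computation and root search are $O(n)$ field operations, and the remaining steps are $\mathrm{poly}(t)$. Counting field multiplication, reductions and powers $i^j$ conservatively gives the $O\big(n(\log_2 n)^4\big)$ bound.

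The step I expect to need the most care is the interplay of the two moduli. The sum constraint has to be taken modulo $2t(q-1)+1$ rather than $\ell$, so that the redundancy is $\log_q\big((2t(q-1)+1)\ell^{2t-1}\big)$ and not $2t\log_q \ell$. I would then need to check carefully that the exact integer $\sum_i e_i$ is indeed recovered, and that together with $P_1,\ldots,P_{2t-1}$ it gives the full $2t$ syndromes the GRS decoder needs. Matching the precise $(\log_2 n)^4$ exponent is a bookkeeping issue rather than a conceptual one.
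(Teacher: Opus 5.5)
The paper gives no proof of this lemma; it is imported from \cite{Liu-24-ISIT}, so there is no internal argument to compare yours against. Your reconstruction is a correct, self-contained proof.

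\begin{itemize}
\item Taking the zeroth syndrome modulo $2t(q-1)+1$ still determines the exact integer $\sum_i e_i\in[-t(q-1),t(q-1)]$, and hence its residue modulo $\ell$. You therefore have the full syndrome $(P_0,\dots,P_{2t-1})$ over $\mathbb{F}_\ell$.
\item This costs only $\lceil\log_q((2t(q-1)+1)\ell^{2t-1})\rceil$ symbols, exactly the stated length. That strongly suggests, though the paper does not confirm it, that this is the cited construction.
\item You also use the right condition for lifting error values, namely $\ell>2(q-1)$ rather than $\ell\ge n$. This matters because the paper applies the lemma over alphabets of size $2^m\le 16n$, which can exceed $n$.
\end{itemize}

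There are two minor points.

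First, when $n$ is prime and $n\ge 2t(q-1)+1$, the lemma forces $\ell=n$. Then position $n$ has locator $0\in\mathbb{F}_\ell$ and contributes only to $P_0$.
\begin{itemize}
\item Your uniqueness argument is unaffected: any $2t$ columns still form a Vandermonde matrix with distinct nodes, one of which is $0$.
\item The decoding algorithm, however, must handle this case. If you search for roots of the Berlekamp--Massey connection polynomial $C(x)=\prod_k(1-X_kx)$ at the points $i^{-1}$, you miss this error. Forney's formula is also undefined at $X_k=0$.
\item The fix is to search for roots of the reciprocal $x^{L}C(1/x)$ over all of $[1,n]$; a degree deficit $\deg C<L$ signals the locator $0$. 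Then obtain the error values by solving the Vandermonde system, which you already list as an option.
\end{itemize}

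Second, for fixed $t$ your decoder uses $O(n)$ word operations, so the $O(n(\log_2 n)^4)$ bound holds a fortiori and needs no further bookkeeping. Keep a single cost model throughout, though. If you count bit operations for decoding, the same accounting means your encoder is no longer $O(n)$.
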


\subsubsection{The First Construction}
In the first construction, we simply use the function $\varphi(\cdot)$ (defined in Definition \ref{def:indicator}) to record each run and treat duplications occurring within a run as a single substitution. We begin by introducing the necessary definitions.

\begin{definition}
  Let  $\varphi(\cdot)$ be defined in Definition \ref{def:indicator}.
  For any string $\bm{x}\in \Sigma_q^n$, we define its \emph{associated vector} as 
  \[
    \bm{\phi}(\bm{x})= \big(\phi(\bm{x})_1, \phi(\bm{x})_2,\ldots, \phi(\bm{x})_{r(\bm{x})}\big),
  \]
  where $\phi(\bm{x})_i= \varphi\big(\bm{\sigma}(\bm{x},i)\big)$ and $\bm{\sigma}(\bm{x},i)$ denotes the $i$-th run of $\bm{x}$ for $i\in [1,r(\bm{x})]$.
\end{definition}

\begin{lemma}\label{lem:inverse}
  For any string $\bm{x}\in \Sigma_q^n$, given its signature $\bm{\sigma}(\bm{x})$, the inverse mapping of $\bm{\phi}$ defined by
  \[
    \bm{\phi}^{-1}\big(\bm{\phi}(\bm{x}),\bm{\sigma}(\bm{x})\big)= \big( \varphi^{-1}(\phi(\bm{x})_1, \sigma(\bm{x})_1), \varphi^{-1}(\phi(\bm{x})_2, \sigma(\bm{x})_2),\ldots, \varphi^{-1}(\phi(\bm{x})_{r(\bm{x})}, \sigma(\bm{x})_{r(\bm{x})}) \big)
  \]
  is well-defined, i.e., $\bm{\phi}^{-1}\big(\bm{\phi}(\bm{x}),\bm{\sigma}(\bm{x})\big)=\bm{x}$. Moreover, the function $\bm{\phi}^{-1}(\cdot,\cdot)$ can be computed in $O(n)$ time.
\end{lemma}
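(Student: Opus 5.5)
The plan is to reduce the statement to the single-run inverse property (\ref{eq:inverse}) and then concatenate. By the definition of runs, $\bm{x}$ is the concatenation of its runs:
\[
\bm{x}=\bm{\sigma}(\bm{x},1)\,\bm{\sigma}(\bm{x},2)\cdots\bm{\sigma}(\bm{x},r(\bm{x})).
\]
This holds because runs are maximal substrings of the form $a\{a,\overline{a}\}^{\ast}$. Consecutive runs start at positions where the symbol leaves the current pair $\{a,\overline{a}\}$. So the runs partition $[1,n]$ into consecutive intervals, and each position belongs to exactly one run.

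For each $i\in[1,r(\bm{x})]$, the run $\bm{\sigma}(\bm{x},i)$ lies in $a_1^{\oplus}$ with $a_1=\sigma(\bm{x})_i$. By Definition \ref{def:indicator}, $\phi(\bm{x})_i=\varphi(\bm{\sigma}(\bm{x},i))$ lies in $[2^{L-1},2^L-1]$, where $L=|\bm{\sigma}(\bm{x},i)|$. Hence $L$ is the bit length of $\phi(\bm{x})_i$, and the binary expansion of $\phi(\bm{x})_i$ gives the indicator vector.

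Knowing $a_1$, each entry of the run is $a_1$ when the indicator bit is $1$ and $\overline{a_1}$ when it is $0$. This is legitimate because every entry of a run lies in $\{a_1,\overline{a_1}\}$. By (\ref{eq:inverse}), this yields $\varphi^{-1}(\phi(\bm{x})_i,\sigma(\bm{x})_i)=\bm{\sigma}(\bm{x},i)$. Concatenating over $i$ recovers $\bm{x}$ by the displayed decomposition, so $\bm{\phi}^{-1}$ is well-defined.

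For the complexity, decoding run $i$ means extracting $L_i$ bits of $\phi(\bm{x})_i$ and writing $L_i$ symbols, which takes $O(L_i)$ time. Summing gives $\sum_i O(L_i)=O(n)$, since $\sum_i L_i=n$.

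The only delicate point is the computational model, because $\phi(\bm{x})_i$ may have up to $n$ bits. I would treat each $\phi(\bm{x})_i$ as stored by its binary representation, equivalently its indicator vector. Reading off its bits then costs linear time in its length, and the total stays $O(n)$. I expect this to be the only mild obstacle; no part of the argument is otherwise substantive.
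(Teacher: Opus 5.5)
Your proposal is correct and follows the same route as the paper: write $\bm{x}$ as the concatenation of its runs, invert each run via (\ref{eq:inverse}) in time linear in its length, and sum these costs to get $O(n)$. Your extra justifications fill in details the paper leaves implicit: that the runs partition $\bm{x}$, that the bit length of $\phi(\bm{x})_i$ equals the run length, and the remark on the computational model.
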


\begin{proof}
  Let $n_i= |\bm{\sigma}(\bm{x},i)|$ for $i\in [1,r(\bm{x})]$. Then $\sum_{i=1}^{r(\bm{x})}n_i=n$. By Equation (\ref{eq:inverse}), we can compute $\varphi^{-1}(\phi(\bm{x})_1, \sigma(\bm{x})_1)=\bm{\sigma}(\bm{x},i)$ in $O(n_i)$ time. Therefore, we have $\bm{\phi}^{-1}\big(\bm{\phi}(\bm{x}),\bm{\sigma}(\bm{x})\big)=\bm{x}$ and the function $\bm{\phi}^{-1}(\cdot,\cdot)$ can be computed in $\sum_{i=1}^{r(\bm{x})} O(n_i)=O(n)$ time. 
  This completes the proof.
\end{proof}

We now provide our first construction.
Let $m=\lceil \log_q n\rceil+\big\lceil \lceil \log_q n+1\rceil\cdot \log_{q/2} 2 \big\rceil+1$.
For any $\bm{x}\in \Sigma_q^{n-1}$, we first use Algorithm \ref{alg:rll_enc} to encode it into an $m$-RLL string $\bm{x}'\triangleq ENC_m^{RLL}(\bm{x})\in \Sigma_q^n$. 
Then its associated vector $\bm{\phi}(\bm{x}')= \big(\phi(\bm{x}')_1, \phi(\bm{x}')_2,\ldots, \phi(\bm{x}')_{r(\bm{x}')}\big)$, where $\phi(\bm{x}')_i= \varphi\big(\bm{\sigma}(\bm{x}',i)\big)\in [2^{m-1},2^m-1]$ for $i\in [1,r(\bm{x}')]$ can be viewed as a length-$r(\bm{x}')$ string over an alphabet of size $2^m$.
Since $r(\bm{x}')\leq n$ depends on the choice for specific $\bm{x}'$, we define 
\begin{equation*}
    \bm{\phi}'(\bm{x}') \triangleq \left( \underbrace{0, \ldots, 0}_{n-r(\bm{x}')}, \bm{\phi}(\bm{x}')\right)\in \Sigma_{2^m}^n
\end{equation*}
as the string obtained from $\bm{\phi}(\bm{x}')$ by prepending $n-r(\bm{x}')$ copies of $0$.
Since $q\geq 4$, we have $2^m\leq 2^{2\log_4 n+4}=16n$.

\begin{theorem}\label{thm:constr1}
    Let $t\geq 1$ be a fixed integer. Set $m=\lceil \log_q n\rceil+\big\lceil \lceil \log_q n+1\rceil\cdot \log_{q/2} 2 \big\rceil+1$.
    Let $\bm{R}_{t+1}(\cdot)$ be the encoding function of the $(t+1)$-fold repetition code. For any $\bm{x}\in \Sigma_q^{n-1}$, let $\bm{x}'=ENC_m^{RLL}(\bm{x})$, we define
    \[
    \mathcal{E}(\bm{x})= \left(\bm{x}', \underbrace{\overline{x_n'}, \ldots, \overline{x_n'}}_{t}, \bm{\zeta}_t\big(\bm{\phi}'(\bm{x}')\big), \bm{R}_{t+1}\big(\bm{\eta}_t(\bm{\zeta}_t(\bm{\phi}'(\bm{x}')))\big) \right).
    \]
    Then the code
    \[
    \mathcal{C}=\left\{\mathcal{E}(\bm{x}):\bm{x}\in \Sigma_q^{n-1}\right\}
    \]
    can correct $t$ length-one reverse-complement duplications with $2t\log_q n+O(\log_q\log_q n)$ redundant symbols.
    Moreover, the encoding complexity is $O(n)$ and the decoding complexity is $O\big(n(\log_2n)^4\big)$.
\end{theorem}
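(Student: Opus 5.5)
The plan is to exhibit a decoder that, given $\bm{z}\in RC_1^t(\mathcal{E}(\bm{x}))$, recovers $\bm{x}$. It has three stages: recover the hash $\bm{\zeta}_t(\bm{\phi}'(\bm{x}'))$ from the tail, then recover $\bm{x}'$ from the prefix by correcting substitutions in the associated vector, and finally invert $ENC_m^{RLL}$.

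Throughout I use the observation behind Lemma \ref{lem:distinct}: a length-one reverse-complement duplication is a special insertion. So $\bm{z}$ arises from $\mathcal{E}(\bm{x})$ by at most $t$ insertions, and any fixed window of $\mathcal{E}(\bm{x})$ appears in $\bm{z}$ shifted by at most $t$ positions and hit by at most $t$ insertions.

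Let $N_1=|\bm{\zeta}_t(\cdot)|$ and $N_2=|\bm{\eta}_t(\cdot)|$, so the codeword has length $n+t+N_1+(t+1)N_2$.

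\emph{Step 1 (tail).} I interpret $\bm{R}_{t+1}$ as repeating each symbol $t+1$ times. In the suffix of $\bm{z}$, the repetition block of the $j$-th symbol of $\bm{\eta}$ begins within $t$ positions of its nominal location. Since $t$ insertions cannot destroy all $t+1$ copies, the standard argument for repetition codes under insertions recovers $\bm{\eta}_t(\bm{\zeta}_t(\bm{\phi}'(\bm{x}')))$. This step is a routine check, but it is where I would be most careful about how boundaries shift.

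Next, take the window of $\bm{z}$ starting at position $n+t+1+s$, where $s$ is the number of duplications that landed before the $\bm{\zeta}$-block. I would pin down $s$ by bounds rather than exactly: a window of length $N_1+t$ differs from $\bm{\zeta}_t(\bm{\phi}'(\bm{x}'))$ by at most $O(t)$ indels. So I would apply the decoder of Lemma \ref{lem:insertion}, with the indel budget enlarged to a constant multiple of $t$ if needed (this only changes constants inside $O(\log_q\log_q n)$). This recovers $\bm{\zeta}_t(\bm{\phi}'(\bm{x}'))$.

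\emph{Step 2 (prefix; main obstacle).} Consider $\bm{p}=\bm{z}_{[1,n+t]}$. The $t$ buffer symbols $\overline{x_n'}$ lie in the last run of $\bm{x}'$, and duplications inside $\bm{x}'$ or the buffer only lengthen runs (Lemma \ref{lem:transform}). Hence $\bm{p}$ equals $\bm{x}'$ with each run lengthened by some amount, the total extension being exactly $t$ and the signature unchanged: $\bm{\sigma}(\bm{p})=\bm{\sigma}(\bm{x}')$ and $r(\bm{p})=r(\bm{x}')$.

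The key counting claim is that at most $t$ runs change. Let $s$ be the number of duplications falling in $\bm{x}'$ or the buffer. Runs hit by these duplications number at most $s$. The last run may additionally absorb the $t-s$ leftover buffer symbols, and this adds a new changed run only when $s<t$. In either case at most $t$ runs change.

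It follows that the vector $\bm{\phi}'(\bm{p})$, formed with the same zero-padding since $r$ is unchanged, differs from $\bm{\phi}'(\bm{x}')$ in at most $t$ coordinates. The only subtlety is that a lengthened run may exceed $m$, so its $\varphi$-value leaves $\Sigma_{2^m}$. I would replace any out-of-range entry by an arbitrary symbol of $\Sigma_{2^m}$; this is still a substitution at a position already counted.

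Applying Lemma \ref{lem:substitution} over the alphabet of size $2^m$ recovers $\bm{\phi}'(\bm{x}')$. Then Lemma \ref{lem:inverse}, using the known signature $\bm{\sigma}(\bm{p})=\bm{\sigma}(\bm{x}')$, gives $\bm{x}'$, and Algorithm \ref{alg:rll_dec} gives $\bm{x}$.

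\emph{Step 3 (redundancy and complexity).} With $Q=2^m\le 16n$ and $\ell\le 2\max\{n,2t(Q-1)+1\}=O(n)$, the hash $\bm{\zeta}_t$ has about $\log_Q\big((2tQ)\ell^{2t-1}\big)\approx 2t$ symbols of $\Sigma_Q$. Each $Q$-ary symbol costs $\log_q Q=\log_q n+O(1)$ symbols of $\Sigma_q$, so $\bm{\zeta}_t$ contributes $2t\log_q n+O(1)$ $q$-ary symbols.

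The hash $\bm{\eta}_t$ is applied to a string of length $O(\log n)$, so $(t+1)N_2=O(\log_q\log_q n)$. Adding the single RLL symbol and the $t$ buffer symbols gives $2t\log_q n+O(\log_q\log_q n)$ in total.

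For complexity, the RLL encoder runs in $O(n)$ and $\bm{\zeta}_t$ in $O(n)$. The $\bm{\eta}_t$ computation and decoding run in $\mathrm{poly}(\log n)$, which is negligible. The substitution decoding costs $O(n(\log_2 n)^4)$ and dominates the decoding time.
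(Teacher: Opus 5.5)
Your decoder is the paper's: undo the repetition code, use $\bm{\eta}_t$ to recover $\bm{\zeta}_t(\bm{\phi}'(\bm{x}'))$, correct at most $t$ substitutions in $\bm{\phi}'(\bm{z}_{[1,n+t]})$ (overwriting out-of-range entries, as the paper does with zeros), invert $\bm{\phi}$ using the preserved signature, and apply Algorithm~\ref{alg:rll_dec}. Your explicit count of changed runs (at most $s$ runs hit, plus the last run only when $s<t$) is more careful than the paper. The paper asserts that $\bm{y}_{[1,n+t]}$ (your $\bm{p}$) is obtained from $\bm{x}'$ by $t$ duplications and then invokes Lemma~\ref{lem:transform}. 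That needs an extra argument, because truncating at position $n+t$ is not itself a duplication. You only need the weaker run count, and you prove it.

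The step to fix is Step~1. You may not enlarge the indel budget of $\bm{\eta}_t$: the statement fixes the code, so doing so would prove a claim about a different code. You also do not need to.

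\begin{itemize}
\item Attribute each duplication to the block containing the symbol it copies. Write $\bm{z}=A'B'C'$, where $A'$, $B'$, $C'$ arise from $\bm{x}'\overline{x_n'}^{\,t}$, from $\bm{\zeta}_t(\bm{\phi}'(\bm{x}'))$, and from the repetition block by $s_1,s_2,s_3$ duplications respectively, with $s_1+s_2+s_3=t$.
\item The fixed window $\bm{z}_{[n+t+1,\,n+2t+N_1]}$ always contains $B'$. It consists of the last $s_1$ symbols of $A'$, then $B'$, then the first $s_3$ symbols of $C'$. So it arises from $\bm{\zeta}_t(\bm{\phi}'(\bm{x}'))$ by exactly $t$ insertions, and Lemma~\ref{lem:insertion} applies as stated without knowing $s_1$. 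This is the paper's window, reading its $n'$ as $n_1$.
\item Likewise, the suffix of length $(t+1)N_2+t$ arises from the repetition block by exactly $t$ insertions. The $(t+1)$-fold repetition code corrects these, since it corrects $t$ deletions.
\end{itemize}

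A smaller point concerns Step~3. Write the hash value of $\bm{\zeta}_t$, an integer below $(2t(Q-1)+1)\ell^{2t-1}$, directly in base $q$. This costs $\log_q Q+(2t-1)\log_q\ell+O(1)\le 2t\log_q n+O(1)$ symbols, because $Q\le 16n$ and $\ell=O(n)$. If you instead round first to whole $Q$-ary symbols, then for $q=4$ you need $2t+1$ of them, since $\ell>Q$ forces the base-$Q$ logarithm of the range above $2t$. That adds an extra $\log_q n$ to the redundancy. Also, for $q>4$ your relation $\log_q Q=\log_q n+O(1)$ holds only as an upper bound.
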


\begin{proof}
    For any sequence $\bm{x}\in \Sigma_2^{n-1}$, we have $n_1\triangleq \big|\bm{\zeta}_t\big(\bm{\phi}'(\bm{x}')\big) \big|= 2t\log_q n+O(1)$ and $n_2\triangleq \big|\bm{R}_{t+1}\big(\bm{\eta}_t(\bm{\zeta}_t(\bm{\phi}'(\bm{x}')))\big)\big|= O(\log_q\log_q n)$.
    Thus, the redundancy of $\mathcal{C}$ is $n_1+n_2+t+1=2t\log_q n+O(\log_q\log_q n)$.
    Note that the functions $ENC_m^{RLL}(\cdot)$, $\bm{\phi}'(\cdot)$, $\bm{\zeta}_t(\cdot)$, $\bm{\eta}_t(\cdot)$, and $\bm{R}_{t+1}(\cdot)$ can be computed in $O(n)$ time, so the overall encoding complexity is $O(n)$.
    
    Suppose $\mathcal{E}(\boldsymbol{x})$ suffers $t$ length-one reverse-complement duplications and yields in $\boldsymbol{y}$. Then 
    \begin{itemize}
        \item $\boldsymbol{y}_{[1,n+t]}$ is obtained from $\bm{x}'$ via $t$ reverse-complement duplications, as $\mathcal{E}(x)_i=\overline{x_n'}$ for $i\in [n+1,n+t]$. By Lemma \ref{lem:transform} and the definition of $\bm{\phi}'(\cdot)$, we have $r(\boldsymbol{y}_{[1,n+t]})=r(\bm{x}')$, $\bm{\sigma}(\boldsymbol{y}_{[1,n+t]})=\bm{\sigma}(\boldsymbol{x}')$, and  $\bm{\phi}'(\bm{y}_{[1,n+t]})$ is obtained from $\bm{\phi}'(\bm{x}')$ via at most $t$ substitutions. Here, if some run of $\boldsymbol{y}_{[1,n+t]}$ has length exceeding $m$, we can conclude that duplications occurred at this run and set the corresponding entry in $\boldsymbol{\phi}'(\boldsymbol{y}_{[1,n+t]})$ to zero.
        \item $\boldsymbol{y}_{[n+t+1,n+2t+n']}$ is obtained from $\bm{\zeta}_t\big(\bm{\phi}'(\bm{x}')\big)$ via $t$ insertions;
        \item $\boldsymbol{y}_{[n+t+n_1+1,n+2t+n_1+n_2]}$ is obtained from $ \bm{R}_{t+1}\big(\bm{\eta}_t(\bm{\zeta}_t(\bm{\phi}'(\bm{x}')))\big)$ via $t$ insertions.
    \end{itemize}
    Firstly, since $\bm{R}_{t+1}(\cdot)$ is the encoding function of the $(t+1)$-fold repetition code, we can recover $\bm{\eta}_t\big(\bm{\zeta}_t(\bm{\phi}'(\bm{x}'))\big)$ from $\boldsymbol{y}_{[n+t+n_1+1,n+2t+n_1+n_2]}$ in $O(n_2)=o(n)$ time.
    Next, by Lemma \ref{lem:insertion}, we can recover $\bm{\zeta}_t\big(\bm{\phi}'(\bm{x}')\big)$ from $\boldsymbol{y}_{[n+t+1,n+2t+n']}$ and $\bm{\eta}_t\big(\bm{\zeta}_t(\bm{\phi}'(\bm{x}'))\big)$ in $\mathrm{poly}(n_1)=o(n)$ time.
    Note that we can compute $r(\boldsymbol{y}_{[1,n+t]})$, $\bm{\sigma}(\boldsymbol{y}_{[1,n+t]})$, $\bm{\phi}'(\boldsymbol{y}_{[1,n+t]})$ in $O(n)$ time. 
    By Lemma \ref{lem:substitution}, we can recover $\bm{\phi}'(\boldsymbol{x}')$ from $\bm{\phi}'(\boldsymbol{y}_{[1,n+t]})$ and $\bm{\eta}_t\big(\bm{\zeta}_t(\bm{\phi}'(\bm{x}'))\big)$ in $O\big(n(\log_2 n)^4\big)$ time.
    Then we can recover $\bm{\phi}(\boldsymbol{x}')$ from $\bm{\phi}'(\boldsymbol{x}')$ and $r(\bm{y}_{[1,n+t]})$ in $O(n)$ time.
    As a result, by Lemma \ref{lem:inverse}, we can use the inverse of $\bm{\phi}(\cdot)$ to determine $\bm{x}'=\bm{\phi}^{-1}\big(\bm{\phi}(\boldsymbol{x}'),\bm{\sigma}(\boldsymbol{x}') \big)$ in $O(n)$ time. 
    Finally, we use Algorithm \ref{alg:rll_dec} to decode $\bm{x}$ from $\bm{x}'$ in $O(n)$ time.
    Thus, $\mathcal{C}$ can correct $t$ length-one reverse-complement duplications and the overall decoding complexity is $O\big(n(\log_2 n)^4\big)$.
    This completes the proof.
\end{proof}

\subsubsection{The Second Construction}

In the second construction, instead of recording the entire run, we store only a hash value of the run to aid in correcting duplications within it. This reduces the alphabet size from $O(n)$ to $O(\log_2 n)$. As a result, when using the substitution-correcting code from Lemma \ref{lem:substitution} to correct $t$ substitutions, the redundancy becomes $(2t-1)\log_q n+O(\log_q\log_q n)$.
In what follows, we detail this approach.

Let $m=\lceil \log_q n\rceil+\big\lceil \lceil \log_q n+1\rceil\cdot \log_{q/2} 2 \big\rceil+1$.
For any $\bm{x}\in \Sigma_q^{n-1}$, we first use Algorithm \ref{alg:rll_enc} to encode it into an $m$-RLL string $\bm{x}'\triangleq ENC_m^{RLL}(\bm{x})\in \Sigma_q^n$. 
Recall that $\bm{\sigma}(\bm{x}',i)$ denotes the $i$-th run of $\bm{x}'$ and $\mathbbm{1}\big(\bm{\sigma}(\bm{x}',i)\big)$ denotes the indicator vector of $\bm{\sigma}(\bm{x}',i)$, we first prepend sufficient zeros to $\mathbbm{1}\big(\bm{\sigma}(\bm{x}',i)\big)$ to increase its length to exactly $m+t$, then use the hash function $\bm{\eta}_{m+t,2,2t}(\cdot)$ (defined in Lemma \ref{lem:insertion}) to encode it, and finally record the integer representation of this hash value, denoted by $f\big(\bm{\sigma}(\bm{x}',i)\big)\in \Sigma_{2^{8t\log_2 m+o(\log_2 m)}}$. Note that $2^{8t\log_2 m+o(\log_2 m)}=m^{8t+o(1)}$.

\begin{lemma}\label{lem:f}
  Let $f(\cdot)$ be defined above. Given $f\big(\bm{\sigma}(\bm{x}',i)\big)$ and an arbitrary string $\bm{z}$ obtained from $\bm{\sigma}(\bm{x}',i)$ via at most $t$ length-one reverse-complement duplications, one can recover $\bm{\sigma}(\bm{x}',i)$ in $O\big(\mathrm{poly}(m)\big)$ time.
\end{lemma}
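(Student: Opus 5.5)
The plan is to reduce the problem to the binary indel-decoding guarantee of Lemma~\ref{lem:insertion}, applied at length $m+t$. Write $\bm{s}=\bm{\sigma}(\bm{x}',i)$, a run with first symbol $a=s_1$, and let $\bm{z}$ be obtained from $\bm{s}$ by at most $t$ length-one reverse-complement duplications.

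First, by Lemma~\ref{lem:distinct} (or directly from the proof of Lemma~\ref{lem:transform}), each such duplication inserts $\overline{z_j}$ immediately after some symbol $z_j$. The result is again a run in $a^{\oplus}$ with first symbol $a$, so $z_1=s_1=a$ and $|\bm{z}|=|\bm{s}|+t'$ for some known $t'\le t$. Passing to indicator vectors, the map $\mathbbm{1}(\cdot)$ sends this operation to the insertion of the bit $1-b$ right after a bit $b$. Hence $\mathbbm{1}(\bm{z})\in\Sigma_2^{|\bm{s}|+t'}$ is obtained from $\mathbbm{1}(\bm{s})$ by exactly $t'$ binary insertions, all strictly after the leading $1$.

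Next, I account for the zero padding. Let $\bm{u}=0^{m+t-|\bm{s}|}\mathbbm{1}(\bm{s})\in\Sigma_2^{m+t}$; this is well defined because $|\bm{s}|\le m$, since $\bm{x}'$ is an $m$-RLL string. The hash $\bm{\eta}_{m+t,2,2t}(\bm{u})$ is recoverable from $f(\bm{s})$ by inverting the integer representation. From $\bm{z}$ the decoder forms $\bm{u}'=0^{m+t-|\bm{z}|}\mathbbm{1}(\bm{z})$, which has length $m+t$ because $|\bm{z}|\le m+t$. The prefix of zeros in $\bm{u}'$ is $t'$ shorter than in $\bm{u}$, and the suffix $\mathbbm{1}(\bm{z})$ arises from $\mathbbm{1}(\bm{s})$ by $t'$ insertions. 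Therefore $\bm{u}'$ is obtained from $\bm{u}$ by $t'$ deletions (in the zero prefix) and $t'$ insertions, i.e.\ $2t'\le 2t$ indel operations. This is exactly why the hash is taken with parameter $2t$.

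Then Lemma~\ref{lem:insertion} recovers $\bm{u}$ from $\bm{u}'$ and the hash in $\mathrm{poly}(m+t)=\mathrm{poly}(m)$ time, since $t$ is fixed. Stripping the leading zeros up to the first $1$ yields $\mathbbm{1}(\bm{s})$, because the indicator starts with $1$. Together with the known first symbol $a=z_1$, equation (\ref{eq:inverse}) gives $\bm{s}=\varphi^{-1}(\varphi(\bm{s}),a)$ in $O(m)$ time.

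The main obstacle is the second step: checking that the padding mismatch counts correctly as deletions, so that the total number of indels stays within $2t$, and that the hash in Lemma~\ref{lem:insertion} tolerates any combination of at most $2t$ insertions and deletions rather than only insertions. Lemma~\ref{lem:insertion} is stated for ``$t$ insertions and deletions'', which covers mixed errors, so this should go through. The remaining steps are bookkeeping.
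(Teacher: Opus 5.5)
Your proposal is correct and matches the paper's proof essentially step for step. Both arguments pad the two indicator vectors with leading zeros to length $m+t$, note that the padded received vector arises from the padded original by $t'$ insertions and $t'$ deletions of leading zeros (so at most $2t$ indels), decode with $\bm{\eta}_{m+t,2,2t}$ via Lemma~\ref{lem:insertion}, strip the leading zeros, and rebuild the run from its indicator vector and $z_1$.
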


\begin{proof}
  Let $\bm{c}$ be obtained from $\mathbbm{1}\big(\bm{\sigma}(\bm{x}',i)\big)$ by prepending $m+t-\big|\mathbbm{1}\big(\bm{\sigma}(\bm{x}',i)\big)\big|$ zeros, and $\bm{c}'$ be obtained from $\mathbbm{1}(\bm{z})$ by prepending $m+t-|\mathbbm{1}(\bm{z})|$ zeros. Since $\bm{z}$ is obtained from $\bm{\sigma}(\bm{x}',i)$ via $t'\leq t$ length-one reverse-complement duplications, we can conclude that $\bm{c}'$ is obtained from $\bm{c}$ via $t'$ insertions and then $t'$ deletions of zeros on the leftmost side. Therefore, we can obtain $\bm{c}'$ from $\bm{c}$ via $2t$ insertions and deletions.
  Given $f\big(\bm{\sigma}(\bm{x}',i)\big)$, i.e., the integer representation of $\bm{\eta}_{m+t,2,2t}(\bm{c})$, we can compute $\bm{\eta}_{m+t,2,2t}(\bm{c})$ in $O(m)$ time. Then by Lemma \ref{lem:insertion}, we can decode $\bm{c}$ from $\bm{c}'$ and $\bm{\eta}_{m+t,2,2t}(\bm{c})$ in $O\big(\mathrm{poly}(m)\big)$ time. 
  By Definition \ref{def:indicator}, the first entry of $\mathbbm{1}\big(\bm{\sigma}(\bm{x}',i)\big)$ is $1$. Therefore, removing all the zeros on the leftmost side from $\bm{c}$ yields the correct $\mathbbm{1}\big(\bm{\sigma}(\bm{x}',i)\big)$.
  Since the duplications do not alter the first entry of $\bm{\sigma}(\bm{x}',i)$, we have $\sigma(\bm{x}',i)_1=z_1$.
  Finally, by Definition \ref{def:indicator}, we can recover $\bm{\sigma}(\bm{x}',i)$ from $\mathbbm{1}\big(\bm{\sigma}(\bm{x}',i)\big)$ and $z_1$ in $O(m)$ time.
   This completes the proof.
\end{proof}
Now, we define the vector
\begin{equation*}
    \bm{F}(\bm{x}') \triangleq \left( \underbrace{0, \ldots, 0}_{n-r(\bm{x}')}, f\big(\bm{\sigma}(\bm{x}',1))\big), \ldots, f\big(\bm{\sigma}(\bm{x}',r(\bm{x}'))\big)\right)\in \Sigma_{m^{8t+o(1)}}^n.
\end{equation*}

\begin{theorem}\label{thm:constr2}
    Let $t$ be a fixed positive integer. Set $m=\lceil \log_q n\rceil+\big\lceil \lceil \log_q n+1\rceil\cdot \log_{q/2} 2 \big\rceil+1$.
    Let $\bm{R}_{t+1}(\cdot)$ be the encoding function of the $(t+1)$-fold repetition code. For any $\bm{x}\in \Sigma_q^n$, let $\bm{x}'=ENC_m^{RLL}(\bm{x})$, we define
    \[
    \mathcal{E}(\bm{x})= \left(\bm{x}', \underbrace{\overline{x_n'}, \ldots, \overline{x_n'}}_{t}, \bm{\zeta}_t\big(\bm{F}(\bm{x}')\big), \bm{R}_{t+1}\big(\bm{\eta}_t(\bm{\zeta}_t(\bm{F}(\bm{x}')))\big) \right).
    \]
    Then the code
    \[
    \mathcal{C}=\left\{\mathcal{E}(\bm{x}):\bm{x}\in \Sigma_q^{n-1}\right\}
    \]
    can correct $t$ length-one reverse-complement duplications with $(2t-1)\log_q n+O(\log_q\log_q n)$ redundant symbols.
    Moreover, the encoding and decoding complexities are $O\big(n\cdot \mathrm{poly}(\log_2 n)\big)$.
\end{theorem}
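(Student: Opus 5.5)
The plan is to mirror the proof of Theorem \ref{thm:constr1}, replacing the associated vector $\bm{\phi}'(\bm{x}')$ by the hash vector $\bm{F}(\bm{x}')$ and using Lemma \ref{lem:f} to recover each damaged run.

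\textbf{Redundancy.} First I would count the redundancy. The alphabet of $\bm{F}(\bm{x}')$ has size $Q=m^{8t+o(1)}=(\log_2 n)^{O(1)}$. I would view each entry as a $q$-ary block of length $O(\log_q\log_q n)$, or apply Lemma \ref{lem:substitution} directly over alphabet $Q$ and convert. In either case $\bm{\zeta}_t$ has length about $\log_q\big(2t(Q-1)+1\big)+(2t-1)\log_q \ell$. With $\ell\le 2n$ for $n$ large, this is $(2t-1)\log_q n+O(\log_q\log_q n)$; this is where the improvement over $2t\log_q n$ comes from, since the $\log_q Q$ term is now only $O(\log_q\log_q n)$.

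If Lemma \ref{lem:substitution} is applied over the $q$-ary alphabet, each $Q$-ary entry spans $O(\log_q\log_q n)$ coordinates. A substitution then becomes a burst, and the $t$ substitutions in $\bm{F}$ would cost roughly $t\cdot O(\log\log n)$ $q$-ary substitutions. That would inflate the $\ell^{2t-1}$ factor, so I would instead use the $Q$-ary version of the hash.

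The remaining pieces are cheap. The $\bm{\eta}_t$ hash of a string of length $n_1=O(\log n)$ has length $O(\log_q\log_q n)$, and its $(t+1)$-fold repetition is still $O(\log_q\log_q n)$. Adding the $t+1$ symbols from the padding and the RLL encoder gives the claimed total.

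\textbf{Decoding.} Next I would follow the decoding steps of Theorem \ref{thm:constr1}.
\begin{enumerate}
\item The $t$ copies of $\overline{x_n'}$ confine all duplications acting on $\bm{x}'$ to $\bm{y}_{[1,n+t]}$, so that prefix is in $RC_1^{t'}(\bm{x}')$. The tail experiences at most $t$ insertions.
\item Majority decoding of the repetition block recovers $\bm{\eta}_t(\cdot)$. Lemma \ref{lem:insertion} then recovers $\bm{\zeta}_t(\bm{F}(\bm{x}'))$.
\item By Lemma \ref{lem:transform}, $\bm{\sigma}(\bm{y}_{[1,n+t]})=\bm{\sigma}(\bm{x}')$, and at most $t$ runs are altered, each run $i$ receiving $t_i\le t$ duplications.
\end{enumerate}

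\textbf{Forming the received hash vector.} This is the main obstacle. The decoder needs a received version of $\bm{F}$ that differs from $\bm{F}(\bm{x}')$ in at most $t$ positions, but the decoder does not know which runs were corrupted. My approach: for each run $\bm{z}$ of $\bm{y}_{[1,n+t]}$, compute $f(\bm{z})$ whenever $|\bm{z}|\le m+t$, so the padding is defined; otherwise set the entry to $0$. Pad with $n-r$ zeros to get $\bm{F}'$. An unaltered run gives the same hash value. Only runs containing duplications can differ, so $\bm{F}'$ is within $t$ substitutions of $\bm{F}(\bm{x}')$. The lengths are also consistent, since $|\bm{z}|\le m+t_i\le m+t$ and $r$ is preserved.

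\textbf{Finishing.} Lemma \ref{lem:substitution} then yields $\bm{F}(\bm{x}')$. The decoder compares it entrywise with $\bm{F}'$. For every run, including those whose hash happened to coincide, it applies Lemma \ref{lem:f}: the known $f(\bm{\sigma}(\bm{x}',i))$ together with the received run $\bm{z}$ recovers $\bm{\sigma}(\bm{x}',i)$. Running Lemma \ref{lem:f} on every run is harmless, because it is correct for any $\le t$ duplications, including zero. This reconstructs $\bm{x}'$, and Algorithm \ref{alg:rll_dec} returns $\bm{x}$.

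\textbf{Complexity.} The $O(n\,\mathrm{poly}(\log_2 n))$ bounds come from three costs: up to $n$ runs each costing $\mathrm{poly}(m)$ for hashing and for decoding via Lemma \ref{lem:f}; the $O(n(\log n)^4)$ substitution decoder; and $\mathrm{poly}(\log n)$ for the short $\bm{\eta}_t$ step.
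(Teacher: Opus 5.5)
Your proposal is correct and follows the paper's proof essentially step for step. You recover $\bm{\eta}_t$ from the repetition block, then $\bm{\zeta}_t(\bm{F}(\bm{x}'))$ via Lemma \ref{lem:insertion}, then $\bm{F}(\bm{x}')$ by substitution-decoding the hash vector of the received runs (justified by Lemma \ref{lem:transform}), then each run via Lemma \ref{lem:f}, and finally $\bm{x}$ via Algorithm \ref{alg:rll_dec}. Your use of the $Q$-ary version of $\bm{\zeta}_t$ (converted to $q$-ary as a single integer) and your check that received runs have length at most $m+t$ are details the paper leaves implicit.

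One wording fix: blockwise majority voting does not decode a repetition code under insertions, because the insertions shift the block boundaries. The correct justification is that the $(t+1)$-fold repetition code corrects $t$ deletions, and hence, by Levenshtein's equivalence, $t$ insertions.
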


\begin{proof}
    For any sequence $\bm{x}\in \Sigma_2^{n-1}$, we have $n_1\triangleq \big|\bm{\zeta}_t\big(\bm{F}(\bm{x}')\big) \big|= (2t-1)\log_q n+O(\log_q\log_q n)$ and $n_2\triangleq \big|\bm{R}_{t+1}\big(\bm{\eta}_t(\bm{\zeta}_t(\bm{F}(\bm{x}')))\big)\big|= O(\log_q\log_q n)$.
    Thus, the redundancy of $\mathcal{C}$ is $n_1+n_2+t+1=(2t-1)\log_q n+O(\log_q\log_q n)$.
    Note that the function $f(\cdot)$ can be computed in $O\big(\mathrm{poly}(m)\big)=O\big(\mathrm{poly}(\log_2 n)\big)$ time, we can compute the function $\bm{F}(\cdot)$ in $O\big(n\cdot \mathrm{poly}(\log_2 n)\big)$ time.
    Moreover, since the functions $ENC_m^{RLL}(\cdot)$, $\bm{\zeta}_t(\cdot)$, $\bm{\eta}_t(\cdot)$, and $\bm{R}_{t+1}(\cdot)$ can be computed in $O(n)$ time, the overall encoding complexity is $O\big(n\cdot \mathrm{poly}(\log_2 n)\big)$.
    
    Suppose $\mathcal{E}(\boldsymbol{x})$ suffers $t$ length-one reverse-complement duplications and yields in $\boldsymbol{y}$. Then 
    \begin{itemize}
        \item $\boldsymbol{y}_{[1,n+t]}$ is obtained from $\bm{x}'$ via $t$ reverse-complement duplications, as $\mathcal{E}(x)_i=\overline{x_n'}$ for $i\in [n+1,n+t]$. By Lemma \ref{lem:transform} and the definition of $\bm{F}(\cdot)$, we have $r(\boldsymbol{y}_{[1,n+t]})=r(\bm{x}')$, $\bm{\sigma}(\boldsymbol{y}_{[1,n+t]})=\bm{\sigma}(\boldsymbol{x}')$, and  $\bm{F}(\bm{y}_{[1,n+t]})$ is obtained from $\bm{F}(\bm{x}')$ via at most $t$ substitutions. 
        \item $\boldsymbol{y}_{[n+t+1,n+2t+n']}$ is obtained from $\bm{\zeta}_t\big(\bm{F}(\bm{x}')\big)$ via $t$ insertions;
        \item $\boldsymbol{y}_{[n+t+n_1+1,n+2t+n_1+n_2]}$ is obtained from $ \bm{R}_{t+1}\big(\bm{\eta}_t(\bm{\zeta}_t(\bm{F}(\bm{x}')))\big)$ via $t$ insertions.
    \end{itemize}
    Firstly, since $\bm{R}_{t+1}(\cdot)$ is the encoding function of the $(t+1)$-fold repetition code, we can recover $\bm{\eta}_t\big(\bm{\zeta}_t(\bm{F}(\bm{x}'))\big)$ from $\boldsymbol{y}_{[n+t+n_1+1,n+2t+n_1+n_2]}$ in $O(n_2)=o(n)$ time.
    Next, by Lemma \ref{lem:insertion}, we can recover $\bm{\zeta}_t\big(\bm{F}(\bm{x}')\big)$ from $\boldsymbol{y}_{[n+t+1,n+2t+n']}$ and $\bm{\eta}_t\big(\bm{\zeta}_t(\bm{F}(\bm{x}'))\big)$ in $\mathrm{poly}(n_1)=o(n)$ time.
    Note that we can compute $r(\boldsymbol{y}_{[1,n+t]})$, $\bm{\sigma}(\boldsymbol{y}_{[1,n+t]})$, $\bm{F}(\boldsymbol{y}_{[1,n+t]})$ in $O\big(n\cdot \mathrm{poly}(\log_2 n)\big)$ time.
    By Lemma \ref{lem:substitution}, we can recover $\bm{F}(\boldsymbol{x}')$ from $\bm{F}(\boldsymbol{y}_{[1,n+t]})$ and $\bm{\eta}_t\big(\bm{\zeta}_t(\bm{F}(\bm{x}'))\big)$ in $O\big(n(\log_2 n)^4\big)$ time.
    Then we can recover $f\big(\bm{\sigma}(\bm{x}',i))\big),\ldots,f\big(\bm{\sigma}(\bm{x}',r(\bm{x}')))\big)$ from $\bm{F}(\boldsymbol{x}')$ and $r(\bm{y}_{[1,n+t]})$ in $O(n)$ time.
    Moreover, for $i\in [1,r(\bm{x}')]$,  by Lemma \ref{lem:f}, we can decode $\bm{\sigma}(\bm{x}',i)$ from $\bm{\sigma}(\bm{y}_{[1,n+t]},i)$ and $f\big(\bm{\sigma}(\bm{x}',i))\big)$ in $O\big( \mathrm{poly}(\log_2 n)\big)$ time. This yields the correct $\bm{x}'$ as $\bm{x}'= \big(\bm{\sigma}(\bm{x}',1),\ldots,\bm{\sigma}(\bm{x}',r(\bm{x}')) \big)$.
    Finally, we use Algorithm \ref{alg:rll_dec} to decode $\bm{x}$ from $\bm{x}'$ in $O(n)$ time.
    Thus, $\mathcal{C}$ can correct $t$ length-one reverse-complement duplications and the overall decoding complexity is $O\big(n\cdot \mathrm{poly}(\log_2 n)\big)$.
    This completes the proof.
\end{proof}

\begin{remark}
  The constructions presented in Theorems \ref{thm:constr1} and \ref{thm:constr2} outperform the approach of directly using indel-correcting codes \cite{Li-23-ISIT, Sima-20-ISIT', Song-23-IT, Song-22-IT, Ye-25-IT}, which require $5\log_q n + O(\log_q\log_q n)$ and $(4t-1)\log_q n + O(\log_q\log_q n)$ redundant symbols when $t=2$ and $t \ge 3$, respectively.
\end{remark}

\section{Conclusion}\label{sec:concl}

In this paper, we study codes for correcting reverse-complement duplications or palindromic duplications. We construct an explicit code with a single redundant symbol capable of correcting an arbitrary number of reverse-complement duplications (respectively, palindromic duplications), provided that all duplications have length $k \ge 3\lceil \log_q n \rceil$ and are disjoint. Moreover, for $q \ge 4$, we present two explicit constructions of codes that can correct $t$ length-one reverse-complement duplications, with $2t\log_q n + O(\log_q\log_q n)$ and $(2t-1)\log_q n + O(\log_q\log_q n)$ redundant symbols, respectively.
These codes outperform general burst-insertions correcting codes in correcting reverse-complement duplications or palindromic duplications. It remains open in other parameter regimes whether codes capable of correcting palindromic or reverse-complement duplications can achieve substantially lower redundancy than directly using general burst-insertions correcting codes.

\bibliographystyle{IEEEtranS}
\bibliography{ref}

\end{document}